\documentclass[11pt,a4paper]{article}

\usepackage[T1]{fontenc}
\usepackage[utf8]{inputenc}
\usepackage[english]{babel}
\usepackage[a4paper,margin=2.5cm]{geometry}
\usepackage{lmodern}
\usepackage{microtype}
\usepackage{setspace}
\usepackage{mathtools,amssymb,amsthm}
\usepackage{booktabs,array,tabularx,siunitx}
\usepackage{graphicx}
\graphicspath{{}}%figures/
\usepackage{caption}
\usepackage{enumitem}
\usepackage{placeins}
\usepackage{hyperref}
\usepackage[nameinlink,noabbrev]{cleveref}

\allowdisplaybreaks[2]
\setlist{leftmargin=2em,itemsep=0.2em,topsep=0.35em}
\hypersetup{colorlinks=true,linkcolor=black,citecolor=black,urlcolor=black,
	pdftitle={Adaptive singular-point methods for pricing and hedging surrenderable equity-linked contracts},
	pdfauthor={Andrea Molent; Marcellino Gaudenzi},
	pdfkeywords={dynamic programming, stochastic models, life insurance, early exercise, adaptive approximation}}

\newtheorem{theorem}{Theorem}[section]
\newtheorem{lemma}[theorem]{Lemma}
\newtheorem{proposition}[theorem]{Proposition}
\newtheorem{corollary}[theorem]{Corollary}
\newtheorem{remark}[theorem]{Remark}
\newcommand{\supp}{Supplementary Material}

\title{Adaptive singular-point method for pricing and hedging\\
	surrenderable equity-linked contracts}
\author{Andrea Molent\thanks{Dipartimento di Scienze Economiche e Statistiche,
		Universit\`a degli Studi di Udine, Udine, Italy. Email:
		\href{mailto:andrea.molent@uniud.it}{andrea.molent@uniud.it}. ORCID:
		\href{https://orcid.org/0000-0002-0887-826X}{0000-0002-0887-826X}.}
	\and Marcellino Gaudenzi\thanks{Dipartimento di Scienze Economiche e Statistiche,
		Universit\`a degli Studi di Udine, Udine, Italy. Email:
		\href{mailto:marcellino.gaudenzi@uniud.it}{marcellino.gaudenzi@uniud.it}. ORCID:
		\href{https://orcid.org/0000-0003-0366-3435}{0000-0003-0366-3435}.}}
\date{}

\begin{document}
	\onehalfspacing
	\maketitle
	
	\begin{abstract}
		We propose a deterministic numerical method for pricing and hedging
		surrenderable equity-linked life-insurance contracts with periodic premiums
		and fund contributions, maturity and death guarantees, and Bermudan surrender
		under correlated stochastic volatility and stochastic interest rates. The main
		computational challenge is the non-recombining accumulated fund, which couples
		with multiple stochastic factors and early exercise. Our key idea is to avoid
		a full multidimensional fund lattice: variance and interest-rate factors are
		discretized on recombining lattices, while at each factor node the contract
		value is represented as an adaptive one-dimensional function of the fund.
		Periodic contributions then act as translations of the fund argument, whereas
		surrender is handled directly through a backward obstacle condition.
		Piecewise-cubic representations propagate payoff and exercise singularities
		and are compressed by continuous pruning criteria that control the
		representation error. We establish weak convergence of the financial chains,
		convergence of the adaptive valuation under vanishing representation error,
		and Delta consistency on regular fund regions. For the strict binomial scheme,
		additional regularity yields first-order weak accuracy and a Talay--Tubaro
		expansion supporting Richardson extrapolation. Numerical experiments show
		compact representations, favorable cost--accuracy, and close agreement with
		independent Monte Carlo and cross-fitted least-squares Monte Carlo benchmarks.
		Hedging results further show that contracts with similar values can generate
		materially different exposures to equity, volatility, and interest-rate risk.
	\end{abstract}
	
	\noindent\textbf{Keywords:} Adaptive approximation; Dynamic programming; Optimal stopping;
	Equity-linked insurance; Stochastic volatility.
	
	\section{Introduction}\label{sec:introduction}
	
	Equity-linked life-insurance contracts combine market participation with
	long-dated guarantees and, frequently, a right to surrender before maturity.
	Related unit-linked products remain economically significant in Europe:
	EIOPA reports about EUR~42 billion of gross written premium for these products
	in its 2024 sample \cite{EIOPA2026CPP}.
	These figures highlight the continuing practical
	relevance of reliable methods for the valuation, surrender analysis, and
	hedging of market-linked insurance contracts.
	
	The early literature established the valuation of asset-linked guarantees
	\cite{BrennanSchwartz1976} and later incorporated stochastic interest rates
	\cite{NielsenSandmann1995}, surrender and bonus mechanisms
	\cite{GrosenJorgensen2000,ShenXu2005}, and periodic-premium contracts
	\cite{Bacinello2005}. These features are economically important but create a
	challenging sequential decision problem: discounting, guarantee value, market
	volatility, mortality, future contributions, and the exercise decision evolve
	jointly over horizons of one or two decades.
	
	From an operational-research perspective, the valuation is a finite-horizon
	stochastic dynamic program with a continuous, non-recombining resource state.
	Recombining trees can represent the financial factors, but annual additions to
	the accumulated fund prevent the fund itself from recombining. A direct grid in
	all state variables is therefore expensive. Approximate dynamic programming
	usually addresses such state-space growth by replacing the exact value function
	with a structured approximation \cite{Powell2011}. Regression-based simulation
	is one important implementation, but in the present stopping problem it combines
	sampling, approximation, and stopping-rule errors
	\cite{BacinelloBiffisMillossovich2009,
		BacinelloBiffisMillossovich2010,LongstaffSchwartz2001}. Related work develops exact or error-controlled
	simulation schemes for stochastic-volatility models and path-dependent claims
	\cite{BrignoneJunike2026,LiWu2019}. Singular-point methods offer a deterministic,
	structure-exploiting alternative: they retain the value function only where
	contractual kinks or the backward recursion generate relevant nonlinear
	structure \cite{CostabileGaudenziMassaboZanette2009,CostabileMassaboRusso2008,
		GaudenziZanetteLepellere2010}.
	
	We develop this idea for a fully correlated Heston--CIR++ model. Heston
	stochastic variance captures leverage and time-varying equity risk
	\cite{Heston1993}; the CIR++ rate specification preserves a non-negative
	square-root factor while fitting the observed initial term structure
	\cite{BrigoMercurio2001,CoxIngersollRoss1985}. Hybrid equity--rate models show that volatility, rates, and dependence can
	materially affect long-dated prices and risk exposures, including in settings
	with optimal surrender \cite{GoudenegeMolentWeiZanette2025,
		GoudenegeMolentZanette2016,GoudenegeMolentZanette2026,
		GrzelakOosterlee2011,KangZiveyi2018}.
	
	Within the proposed scheme, the variance and CIR rate factors are discretized
	jointly on recombining lattices so as to preserve their dependence structure,
	while at each joint factor node the contract value is represented as an
	adaptive one-dimensional function of the accumulated fund. Consequently, the
	contract value depends on the past only through the current fund, variance,
	and interest-rate states; no additional variable is needed to record earlier
	contributions. This representation naturally accommodates periodic
	contributions and surrender decisions at prescribed dates while concentrating
	computational effort where the value function exhibits relevant nonlinear
	features.
	
	The contributions of the paper are threefold. First, we represent the non-recombining fund dimension through an adaptive value function rather than a fixed global grid. Periodic contributions then act simply by shifting the fund argument, while annual surrender decisions can be incorporated directly within the same deterministic backward recursion. This preserves the relevant nonlinear structure of the contract value while avoiding the state-space growth associated with a full discretization of the fund variable.
	
	Second, we extend singular-point dynamic programming to a fully correlated Heston--CIR++ setting by combining recombining discretizations of stochastic variance and interest rates with the adaptive representation of the fund dimension. The resulting construction retains a deterministic backward recursion while accommodating stochastic volatility, stochastic rates, their dependence, mortality, and Bermudan surrender.
	
	Third, we develop a convergence and error-control framework that distinguishes the financial time-discretization error from the error introduced by the adaptive representation and the numerical boundaries. We establish convergence of the contract values and, under explicit regularity and representation conditions, consistency of the fund Delta. For the binomial discretization, stronger regularity assumptions also yield a first-order weak-error expansion and a theoretically supported Richardson benchmark.
	
	The numerical study assesses accuracy and computational efficiency against independent Monte Carlo and least-squares Monte Carlo benchmarks, and uses the resulting framework to investigate surrender behavior, model effects, and multifactor hedging.
	
	The remainder of the paper is organized as follows. Section~\ref{sec:model}
	defines the model and contract. Section~\ref{sec:method} presents the numerical
	method, Section~\ref{sec:convergence} states the convergence and error-control
	results, and Section~\ref{sec:numerics} reports the numerical study.
	
	\section{Model and contract}\label{sec:model}
	
	We consider an equity-linked life-insurance contract in which the policyholder
	pays premiums over time and deterministic amounts are credited to an investment
	fund linked to an equity index. The contract provides guarantees at death and
	maturity and may also allow the policyholder to surrender at predetermined
	dates. If the contract is continued at such a date, the scheduled premium and
	fund contribution are paid and the fund subsequently evolves with the financial
	market until the next contractual event.
	
	For the specification considered below, the contract has maturity
	$T\in\mathbb N$ years, with policy anniversaries indexed by
	$i=0,\ldots,T$, and $a_0$ denotes the issue age. We assume, for simplicity,
	that premiums and fund contributions are paid annually at policy anniversaries
	and remain constant over time: $P$ denotes the annual premium and $D$ the
	corresponding amount credited to the fund. Other deterministic payment
	frequencies can be accommodated by including the corresponding dates in the
	time grid.
	
	\subsection{Financial dynamics and curve fit}
	Under the risk-neutral measure $\mathbb Q$, the equity index $S$, instantaneous variance factor $V$,
	and non-negative rate factor $X$ satisfy
	\begin{align}
		{\mathrm dS_t}&=(r_t-q)\,S_t\,\mathrm dt+\sqrt{V_t}\,S_t\,\mathrm dW_t^S,\\
		\mathrm dV_t&=\kappa_V(\theta_V-V_t)\,\mathrm dt+\sigma_V\sqrt{V_t}\,\mathrm dW_t^V,\\
		\mathrm dX_t&=\kappa_r(\theta_r-X_t)\,\mathrm dt+\sigma_r\sqrt{X_t}\,\mathrm dW_t^r,\\
		r_t&=X_t+\varphi(t).
		\label{eq:model-dynamics}
	\end{align}
	The Brownian correlation matrix is
	\begin{equation}
		\mathbf R=\begin{pmatrix}
			1&\rho_{SV}&\rho_{Sr}\\
			\rho_{SV}&1&\rho_{Vr}\\
			\rho_{Sr}&\rho_{Vr}&1
		\end{pmatrix},
		\qquad
		1+2\rho_{SV}\rho_{Sr}\rho_{Vr}-\rho_{SV}^2-\rho_{Sr}^2-\rho_{Vr}^2\ge0.
		\label{eq:correlation-matrix}
	\end{equation}
	Between contract dates, the accumulated fund $F$ has the same proportional return as
	$S$,
	\begin{equation}
		{\mathrm dF_t}=(r_t-q)\,{F_t}\,\mathrm dt+\sqrt{V_t}\,{F_t}\,\mathrm dW_t^S,
		\qquad F_{i}^{+}=F_{i}^{-}+D,\quad i=0,\ldots,T-1.
		\label{eq:fund-dynamics}
	\end{equation}
	Here $F_{i}^{-}$ and $F_{i}^{+}$ denote the fund values immediately before and
	after the contribution at anniversary $i$, respectively.
	The jump relation applies whenever the contract is in force and, at a
	surrender anniversary, is continued.
	The deterministic shift $\varphi$ in \eqref{eq:model-dynamics} is chosen so that the model reproduces the
	initial market discount curve exactly. Let $P^{\mathrm{CIR}}(0,u)$ denote the
	zero-coupon bond price generated by the unshifted CIR factor $X$. Since
	$r_t=X_t+\varphi(t)$, the model zero-coupon price with maturity $u$ is
	\[
	P(0,u)
	=
	P^{\mathrm{CIR}}(0,u)
	\exp\!\left(-\int_0^u \varphi(s)\,\mathrm ds\right).
	\]
	Therefore, fitting the initial market discount curve exactly requires
	\begin{equation}
		P^M(0,u)
		=
		P^{\mathrm{CIR}}(0,u)
		\exp\!\left(-\int_0^u\varphi(s)\,\mathrm ds\right),
		\qquad 0\le u\le T.
		\label{eq:cirpp-shift}
	\end{equation}
	The empirical curve is represented by a Svensson specification; its formula and
	parameters are reported in the \supp.
	
	\subsection{Annual premiums, guarantees, mortality, and surrender}
	
	At inception and, subsequently, at each anniversary
	$i=1,\ldots,T-1$ at which the contract is continued, the policyholder pays
	the annual premium $P$ and the corresponding amount $D$ is credited to the
	investment fund; no premium or contribution is due at maturity. The contract
	may terminate upon death, through surrender at a prescribed anniversary, or
	at maturity. No surrender decision is available at inception.
	
	Let $g_m$ denote the accumulation rate used for the maturity and death
	guarantees, and let $g_s$ denote the rate used for the guarantee component of
	the surrender payoff. For $\gamma\in\{g_m,g_s\}$ and $i=1,\ldots,T$, define
	\[
	G_i^{(\gamma)}
	=
	\sum_{j=0}^{i-1}D\exp\{\gamma(i-j)\}.
	\]
	For a contract still in force at anniversary $i$,
	$G_i^{(\gamma)}$ is the accumulated value of the contributions credited at
	anniversaries $0,\ldots,i-1$. If the policyholder is alive and the contract
	remains in force at maturity, the maturity benefit, as a function of the fund
	value immediately before settlement, is
	$\max\{F,G_T^{(g_m)}\}$.
	
	Mortality is assumed to be independent of the financial factors. Let $q_x$
	denote the one-year probability of death between ages $x$ and $x+1$,
	conditional on survival to age $x$. Hence, for $i=0,\ldots,T-1$,
	$q_{a_0+i}$ is the probability of death during policy year $(i,i+1]$,
	conditional on being alive at anniversary $i$. If the contract is in force
	immediately after anniversary $i$ and death occurs during that policy year,
	the death benefit paid at anniversary $i+1$ is
	$B_{i+1}(F)=\max\{F,G_{i+1}^{(g_m)}\}$, where $F$ denotes the fund value
	immediately before settlement. Since mortality is independent of the financial
	factors, the mortality probabilities enter the backward recursion as
	deterministic actuarial weights.
	
	At each surrender anniversary $i=1,\ldots,T-1$, conditional on the
	policyholder being alive and the contract being in force, the surrender
	decision is made before the new premium and fund contribution. Let $F_i^-$
	denote the fund value immediately before this decision and let
	$\alpha_i^s\in(0,1]$ be the fraction of the fund returned under the
	fund-based component. We consider the fund-based, guarantee-based, and mixed
	surrender payoffs
	\begin{equation} 
			R_{\mathrm{FB}}(i,F)  = \alpha_i^sF,\quad
			R_{\mathrm{GB}}(i,F)  = G_i^{(g_s)},\quad
			R_{\mathrm{MX}}(i,F)  = \max\{\alpha_i^sF,G_i^{(g_s)}\}. 
		\label{eq:surrender-payoffs}
	\end{equation}
	
	If surrender is exercised under contract $J$, the amount $R_J(i,F_i^-)$ is
	paid at anniversary $i$. No premium is paid and no fund contribution is
	credited at that anniversary, and the contract terminates. Because the
	decision precedes the contribution due at anniversary $i$,
	$G_i^{(g_s)}$ contains only the contributions previously credited at
	anniversaries $0,\ldots,i-1$.
	
	If the contract is continued, the policyholder pays the premium $P$ and the
	amount $D$ is credited to the fund, so that $F_i^+=F_i^-+D$. The financial
	state then evolves from $(F_i^+,V_i,X_i)$ over the following policy year
	$(i,i+1]$. If death occurs during that year,
	$B_{i+1}(F_{i+1}^-)$ is paid at anniversary $i+1$. If the policyholder
	survives, the contract proceeds to the next surrender anniversary, or to the
	maturity benefit when $i+1=T$.
	
	For a fixed annual premium $P$, let $U_i(F,v,x)$ denote the contract value
	immediately before the surrender decision at anniversary $i$, and let
	$C_i(F,v,x)$ denote the value obtained by choosing continuation from the same
	pre-decision state. Thus, for $i=1,\ldots,T-1$,
	\begin{equation}
		U_i(F,v,x)
		=
		\max\{C_i(F,v,x),R_J(i,F)\},
		\qquad
		J\in\{\mathrm{FB},\mathrm{GB},\mathrm{MX}\}.
		\label{eq:surrender-obstacle}
	\end{equation}
	
	Let $P_{\mathrm{NS}}$ denote the fair annual premium of the no-surrender
	contract, and let $P_J$, $J\in\{\mathrm{FB},\mathrm{GB},\mathrm{MX}\}$,
	denote the corresponding fair annual premiums of the surrenderable contracts.
	
	Conditional on the contract being in force, the valuation state, apart from
	time, is $(F,V,X)$. The current fund value incorporates all previously
	credited contributions and their subsequent investment performance, while the
	guarantee levels are deterministic functions of the anniversary index.
	Therefore, no additional state variable is required to record the contribution
	history.
	
	\section{Adaptive singular-point dynamic programming}\label{sec:method}
	Let $N_{\rm yr}$ be the number of numerical time steps per year, set
	$N=T N_{\rm yr}$ and $h=1/N_{\rm yr}=T/N$, and define
	$t_n=nh$, $n=0,\ldots,N$. Thus anniversary $i$ corresponds to the grid time
	$t_{iN_{\rm yr}}=i$. At time $t_n$, $(j,k)$ indexes a joint variance--rate
	node. For a generic square-root factor $Z\in\{V,X\}$,
	$(\kappa_Z,\theta_Z,\sigma_Z)$ denotes its mean-reversion speed, long-run mean,
	and volatility coefficient. For convenience, $D_n=D$ when $t_n$ is a payment
	anniversary $i=0,\ldots,T-1$, and $D_n=0$ otherwise; the premium outflow $P$
	likewise enters the continuation value only at payment anniversaries.
	
	\subsection{Recombining square-root lattices}
	
	Both CIR-type factors $Z\in\{V,X\}$ are represented on non-negative
	recombining square-root lattices. For parameters
	$(\kappa_Z,\theta_Z,\sigma_Z)$, the exact one-step conditional mean and
	variance are
	\begin{equation}
		m_Z(z)
		=
		\theta_Z+(z-\theta_Z)e^{-\kappa_Zh},
		\qquad
		v_Z(z)
		=
		\frac{\sigma_Z^2z}{\kappa_Z}e^{-\kappa_Zh}
		(1-e^{-\kappa_Zh})
		+
		\frac{\theta_Z\sigma_Z^2}{2\kappa_Z}
		(1-e^{-\kappa_Zh})^2.
		\label{eq:cir-exact-moments}
	\end{equation}
	The two lattice constructions use these moments differently. The
	adaptive-trinomial marginals match both $m_Z(z)$ and $v_Z(z)$ exactly.
	The binomial comparison matches $m_Z(z)$ exactly, while its conditional
	variance agrees with $v_Z(z)$ up to an $O(h^2)$ error on interior compact
	sets.
	
	\subsubsection{Binomial factor lattices}
	\label{subsubsec:binomial-factor-lattices}
	
	The fully binomial construction is retained as an independent deterministic
	comparison. For each factor, the grid spacing in the square-root coordinate
	is $	\Delta_Z=\frac{\sigma_Z}{2}\sqrt h,$
	and consecutive time layers use opposite grid parities. From a parent state
	$z$, let $z_d<z_u$ be the two neighboring nodes on the next time layer that
	bracket the exact conditional mean, $
	z_d\le m_Z(z)\le z_u$.
	The upper transition probability is determined by
	\begin{equation}
		p_{Z,h}(z)
		=
		\frac{m_Z(z)-z_d}{z_u-z_d},
		\qquad
		\mathbb{P}_h(Z_{n+1}=z_u\mid Z_n=z)=p_{Z,h}(z),
		\label{eq:binomial-mean-probability}
	\end{equation}
	with lower probability $1-p_{Z,h}(z)$. Consequently,
	$\mathbb{E}_h[Z_{n+1}\mid Z_n=z]=m_Z(z)$ exactly, while the bracketing
	condition guarantees non-negative marginal probabilities.
	
	At an interior state and for sufficiently small $h$, the selected successors
	are the adjacent opposite-parity nodes
	\begin{equation}
		z_\pm
		=
		\left(
		\sqrt z\pm\frac{\sigma_Z}{2}\sqrt h
		\right)^2.
		\label{eq:binomial-interior-successors}
	\end{equation}
	The corresponding discrete conditional variance is
	\begin{equation}
		\widehat v_{Z,h}(z)
		=
		p_{Z,h}(z)\bigl(1-p_{Z,h}(z)\bigr)(z_u-z_d)^2
		=
		v_Z(z)+O(h^2)
		\label{eq:binomial-marginal-variance}
	\end{equation}
	uniformly on compact subsets of $(0,\infty)$. Thus the binomial marginal
	matches the exact CIR mean but only the leading-order CIR variance. Near the
	lower boundary or on coarse grids, the bracketing rule, rather than the fixed
	adjacent pair in \eqref{eq:binomial-interior-successors}, determines the two
	successors.
	
	Let $
	p_V=p_{V,h}(v),
	\qquad
	p_X=p_{X,h}(x)$
	be the marginal upper probabilities at the current variance and rate-factor
	nodes. The joint $2\times2$ table is
	\begin{equation}
		\begin{aligned}
			p_{uu}&=p_Vp_X+\delta,
			&
			p_{ud}&=p_V(1-p_X)-\delta,
			\\
			p_{du}&=(1-p_V)p_X-\delta,
			&
			p_{dd}&=(1-p_V)(1-p_X)+\delta,
		\end{aligned}
		\label{eq:binomial-joint-table}
	\end{equation}
	where
	\begin{equation}
		\delta
		=
		\rho_{Vr}
		\sqrt{p_V(1-p_V)p_X(1-p_X)}.
		\label{eq:binomial-coupling}
	\end{equation}
	This construction preserves both marginal transition laws and imposes the
	correlation $\rho_{Vr}$ between their standardized binomial innovations.
	Equivalently, the discrete factor covariance is
	$\rho_{Vr}\sqrt{\widehat v_{V,h}(v)\widehat v_{X,h}(x)}$, and is therefore
	locally consistent with the target CIR covariance. 
	Non-negativity of all four joint probabilities is equivalent to the local
	Fr\'echet bounds
	\begin{equation}
		\max\{-p_Vp_X,-(1-p_V)(1-p_X)\}
		\le\delta\le
		\min\{p_V(1-p_X),(1-p_V)p_X\}.
		\label{eq:frechet-bounds-main}
	\end{equation}
	At a given time and parent state, let
	\[
	\mathcal{I}_h
	=
	\left\{
	\rho\in[-1,1]:
	p_{ab}(\rho)\ge0
	\text{ for all joint binomial branches }(a,b)
	\right\}
	\]
	denote the local admissible correlation interval. Whenever the requested
	$\rho_{Vr}$ lies outside $\mathcal{I}_h$, the binomial comparison replaces it
	by its projection onto this interval:
	\begin{equation}
		\rho_{Vr,h}^{\rm eff}
		=
		\operatorname{proj}_{\mathcal{I}_h}(\rho_{Vr})
		=
		\arg\min_{\rho\in\mathcal{I}_h}
		|\rho-\rho_{Vr}|.
		\label{eq:effective-rho}
	\end{equation}
	The dependence of $\mathcal{I}_h$ and $\rho_{Vr,h}^{\rm eff}$ on time and the
	parent state is suppressed in the notation. The projection changes only the
	dependence between the two factors and leaves both marginal transition laws
	unchanged. Its asymptotic irrelevance is analyzed in
	Section~\ref{subsec:projection-stability}.
	
	\subsubsection{Adaptive-trinomial factor lattices}
	\label{subsubsec:trinomial-factor-lattices}
	
	For the trinomial construction, the grid is uniform in the Lamperti
	coordinate $\chi_Z=2\sqrt Z/\sigma_Z$:
	\begin{equation}
		\chi_{Z,j}
		=
		\chi_{Z,0}+j\lambda_Z\sqrt h,
		\qquad
		Z_j
		=
		\left(
		\frac{\sigma_Z}{2}
		\max\{\chi_{Z,j},0\}
		\right)^2.
		\label{eq:trinomial-grid}
	\end{equation}
	The grid itself is fixed; adaptivity enters through the nodewise choice of
	the transition stencil. At each parent state $z$, the algorithm selects the
	narrowest ordered triplet of grid nodes, not necessarily consecutive, for
	which non-negative probabilities $p_i^Z$, $i=1,2,3$, satisfy
	\begin{equation}
		\sum_{i=1}^3p_i^Z=1,
		\qquad
		\sum_{i=1}^3p_i^ZZ_i'=m_Z(z),
		\qquad
		\sum_{i=1}^3p_i^Z
		\bigl(Z_i'-m_Z(z)\bigr)^2=v_Z(z),
		\label{eq:trinomial-moment-matching}
	\end{equation}
	where $Z_i'$ are the selected successor values. Thus both exact marginal CIR
	moments are matched at every active node.
	
	The scale restriction $0<\lambda_Z<2$ has a direct positivity
	interpretation. If the conditional mean $m$ lies between adjacent grid values
	$z_\ell<m<z_u$, the smallest variance attainable by a grid-supported
	distribution with mean $m$ is $(m-z_\ell)(z_u-m)$. Away from the lower
	boundary, the worst-case ratio of this minimum to the leading CIR conditional
	variance is asymptotically $\lambda_Z^2/4$. Values of $\lambda_Z$ close to
	two therefore leave little probability-positivity margin, whereas smaller
	values increase the number of factor nodes.
	
	Because the Heston variance factor $V$ and the CIR rate factor $X$ underlying
	the CIR++ specification have markedly different mean-reversion dynamics, their
	grid scales are allowed to differ. A numerical screening of candidate pairs
	selected $(\lambda_V,\lambda_X)=(1,1.4)$ for the central parameterization.
	Among the screened configurations, this pair provided the best end-to-end
	pricing accuracy while maintaining exact non-negative moment matching, strict
	local coupling, and a moderate joint state-space size.
	
	For a parent state $z$ and a selected successor $Z_i'$, define the centered
	variance-one marginal innovation by
	\[
	z_i^Z
	=
	\frac{Z_i'-m_Z(z)}{\sqrt{v_Z(z)}}.
	\]
For trinomial marginal probabilities $p_i^V,p_j^X$, the strict
implementation uses the one-parameter covariance-preserving joint table
\begin{equation}
	\pi_{ij}
	=
	p_i^Vp_j^X
	\bigl(1+\rho_{Vr}z_i^Vz_j^X\bigr).
	\label{eq:trinomial-coupling}
\end{equation}
Because the marginal innovations are centered and have unit variance, this
construction preserves both marginal transition laws and reproduces the target
innovation correlation $\rho_{Vr}$. Since $p_i^V,p_j^X\ge0$, the coupling is
strictly admissible whenever
$1+\rho_{Vr}z_i^Vz_j^X>0$ for every active pair $(i,j)$. 
For $(\lambda_V,\lambda_X)=(1,1.4)$, a sufficient uniform interior range is
$|\rho_{Vr}|<5/14$. This range is broad relative to commonly used
variance--rate correlations, which are often set to zero in hybrid
stochastic-volatility/stochastic-rate models; see, for example,
\cite{KangZiveyi2018}. This is a sufficient range for the one-parameter covariance-preserving
construction, not necessarily the widest covariance interval attainable
among all $3\times3$ tables.
	
	The stated interior range follows from the limiting marginal stencils. The
	next lemma records the calculation because strict local admissibility is part
	of the convergence argument, not merely an implementation diagnostic.
	
\begin{lemma}
	\label{lem:trinomial-interior-probabilities-main}
	Consider a CIR factor $Z$ on the square-root grid
	$Z_q=(\sqrt{Z_0}+q\lambda\sigma\sqrt h/2)^2$ and let
	$K_Z$ be a compact subset of $(0,\infty)$. For the symmetric stencil
	$(q-1,q,q+1)$, denote by $p_-(z,h)$, $p_0(z,h)$, and $p_+(z,h)$
	the transition probabilities to $Z_{q-1}$, $Z_q$, and $Z_{q+1}$,
	respectively. Uniformly for $z\in K_Z$:
		\begin{enumerate}[label=(\roman*)]
			\item if $\lambda>1$, the symmetric stencil $(q-1,q,q+1)$ is
			admissible for all sufficiently small $h$, with
			\begin{align} 
				p_0(z,h)
				&=
				1-\frac1{\lambda^2}+O(h);
				\label{eq:tri-central-lambda-gt-one}\\
				p_\pm(z,h)
				&=
				\frac{1}{2\lambda^2}
				\pm
				\frac{\kappa(\theta-z)-\sigma^2/4}
				{2\lambda\sigma\sqrt z}\sqrt h
				+O(h),
				\label{eq:tri-prob-lambda-gt-one} 
			\end{align}
			
			\item if $\lambda=1$, define
			\begin{equation}
				A(z)
				=
				\kappa
				-
				\frac{\kappa^2(\theta-z)^2}{\sigma^2z}
				-
				\frac{\sigma^2}{16z}.
				\label{eq:tri-A-function}
			\end{equation}
			When $A(z)>0$, the symmetric stencil is admissible and
			\begin{align}
				p_0(z,h)
				&=
				A(z)h+O(h^2),
				\label{eq:tri-lambda-one-central}
				\\
				p_\pm(z,h)
				&=
				\frac12
				\pm
				\frac{\kappa(\theta-z)-\sigma^2/4}
				{2\sigma\sqrt z}\sqrt h
				-\frac12A(z)h
				+O(h^{3/2}).
				\label{eq:tri-lambda-one-side}
			\end{align}
			When $A(z)<0$, the minimal-span search selects a one-sided
			width-three stencil whose limiting probabilities are, up to reversal,
			$(1/6,1/2,1/3)$.
		\end{enumerate}
	\end{lemma}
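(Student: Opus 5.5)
The plan is to reduce the moment-matching conditions \eqref{eq:trinomial-moment-matching} on a fixed stencil to a $2\times2$ linear system and then expand it in $\sqrt h$, uniformly on $K_Z$. Write $s=\sqrt z$ and $\delta=\lambda\sigma\sqrt h/2$. I take the parent $z$ to be the grid node $Z_q$, so the symmetric successors are $z_\pm=(s\pm\delta)^2$, with increments $z_\pm-z=\pm2s\delta+\delta^2$.

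Set $\mu=m_Z(z)-z$ and $w=v_Z(z)+\mu^2$. Then $p_\pm$ are the unique solution of
\[
p_+(z_+-z)+p_-(z_--z)=\mu,\qquad p_+(z_+-z)^2+p_-(z_--z)^2=w,
\]
and $p_0=1-p_+-p_-$. Cramer's rule gives $p_\pm$ as explicit rational expressions in $s,\delta,\mu,w$. From \eqref{eq:cir-exact-moments} I would expand
\[
\mu=\kappa(\theta-z)h-\tfrac12\kappa^2(\theta-z)h^2+O(h^3),\qquad
v_Z(z)=\sigma^2zh+c_2(z)h^2+O(h^3),
\]
with $c_2(z)$ obtained explicitly. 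All constants are uniform on $K_Z$ because $s$ is bounded away from $0$ and $\infty$ there.

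For case (i), the leading terms come from the denominator $\approx 8s^3\delta^3$. They give
\[
p_++p_-=\frac{\sigma^2zh}{4z\delta^2}+O(h)=\frac{1}{\lambda^2}+O(h),\qquad
p_+-p_-=\frac{\mu-\delta^2(p_++p_-)}{2s\delta}+\dots
\]
Inserting $\delta^2/\lambda^2=\sigma^2h/4$ in the second expression produces the drift $\kappa(\theta-z)-\sigma^2/4$ in \eqref{eq:tri-prob-lambda-gt-one}. Admissibility for small $h$ then follows because all three probabilities converge to the strictly positive limits $1-1/\lambda^2$ and $1/(2\lambda^2)$, uniformly on $K_Z$. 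Since the grid is fixed and the minimal-span search cannot do better than width two, the symmetric stencil is the one selected.

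For $\lambda=1$ the leading term of $p_0$ cancels, and this is the main obstacle: the full $O(h)$ remainder of $p_++p_-$ must be tracked. That requires the $h^2$ terms of $\mu$ and $v_Z$, the $\mu^2$ contribution to $w$, and the $\delta^4$ cross terms from squaring $z_\pm-z$. I would organize this by writing $1-p_+-p_-=\bigl[(\text{denominator})-(\text{numerator sum})\bigr]/(\text{denominator})$ and expanding the bracket to order $h^{5/2}$ in $\delta$. The target is to show that it equals $8s^3\delta^3A(z)h$, which yields \eqref{eq:tri-lambda-one-central}. I expect the $-\kappa^2(\theta-z)^2/(\sigma^2 z)$ term to come from $\mu^2$ and the $-\sigma^2/(16z)$ term from the $\delta^4$ curvature of the square-root map. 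The side probabilities \eqref{eq:tri-lambda-one-side} then follow from $p_\pm=\tfrac12(1-p_0)\pm\tfrac12(p_+-p_-)$. When $A(z)>0$ all three probabilities are positive for small $h$, so the stencil is admissible.

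When $A(z)<0$, $p_0<0$, so no width-two stencil works: the only width-two ordered triplet is the symmetric one. The search therefore moves to width three, i.e.\ triplets such as $(q-2,q,q+1)$ or $(q-2,q-1,q+1)$ and their mirror images. In units of $2s\delta=\sigma s\sqrt h$ the increments are asymptotically the integer offsets. At leading order the probabilities must solve mean $0$ and variance $1$ on these offsets. For offsets $(-2,0,1)$ this gives exactly $(1/6,1/2,1/3)$, all strictly positive, and the mirror triplet gives the reversal. For $(-2,-1,1)$ the leading-order solution is $(0,1/2,1/2)$, which sits on the boundary. I would then argue, using the next-order correction together with the fact that the selection rule picks an admissible triplet with strictly positive probabilities, that the selected stencil is the one with the $(1/6,1/2,1/3)$ pattern.
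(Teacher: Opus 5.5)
Your treatment of (i), and of $\lambda=1$ with $A(z)>0$, follows the paper's route: solve the moment system on the fixed stencil and expand in $\sqrt h$. It goes through. The exact identities
$p_++p_-=(w-2\delta^2\mu)/(4z\delta^2-\delta^4)$ and $p_+-p_-=(\mu-\delta^2(p_++p_-))/(2s\delta)$, combined with $c_2(z)=\tfrac12\sigma^2\kappa(\theta-3z)$, make the $h^2$-coefficient of $4z\delta^2-\delta^4-w+2\delta^2\mu$ at $\lambda=1$ equal to $\sigma^2\kappa z-\kappa^2(\theta-z)^2-\sigma^4/16=\sigma^2zA(z)$, as you anticipated. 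One small correction: the Cramer determinant is $\approx16s^3\delta^3$, not $8s^3\delta^3$, which is harmless if you divide consistently.

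The gap is in the case $A(z)<0$.

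\textbf{Width-two stencils.} It is not true that the symmetric triplet is the only width-two one. The triplets $(q-2,q-1,q)$ and $(q,q+1,q+2)$ are also consecutive, and the former even has a smaller span in $Z$. They are excluded for a different reason: $m_Z(z)-z=O(h)$ puts the mean either outside their hull or within $O(h)$ of an endpoint. The largest attainable variance $(m-x_{\min})(x_{\max}-m)$ is then $O(h^{3/2})$, which falls below $v_Z(z)\sim\sigma^2zh$. You need the same argument if you want to assert selection, and not just admissibility, in case (i).

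\textbf{The selection step.} This is the substantive problem: your proposed last step would fail. The next-order correction does not eliminate $(q-2,q-1,q+1)$; it makes it strictly admissible. By the Lagrange formula, the mass at $Z_{q-2}$ on that stencil and the central mass $p_0^{\mathrm{sym}}$ of the symmetric stencil share the numerator $v_Z(z)+(m-Z_{q-1})(m-Z_{q+1})$. Hence
\[
p_{q-2}=-p_0^{\mathrm{sym}}\,\frac{(Z_q-Z_{q-1})(Z_{q+1}-Z_q)}{(Z_{q-1}-Z_{q-2})(Z_{q+1}-Z_{q-2})}=-\tfrac13A(z)h+O(h^{3/2}),
\]
which is positive precisely when $A(z)<0$, while the other two masses tend to $1/2$.

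So whenever $A(z)<0$, both $(q-2,q-1,q+1)$ and $(q-2,q,q+1)$ are strictly admissible. They have the same endpoints, hence the same index width and the same span $Z_{q+1}-Z_{q-2}$. The mirror pair $(q-1,q+1,q+2)$ and $(q-1,q,q+2)$ behaves identically. The minimal-span rule as stated therefore cannot decide between the limiting patterns $(0,1/2,1/2)$ and $(1/6,1/2,1/3)$, and no expansion argument will.

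You need an explicit tie-breaking convention, stated as an assumption. Examples are requiring the stencil to contain the parent node $Z_q$, or requiring it to maximize the smallest probability. With such a convention, the $(1/6,1/2,1/3)$ conclusion follows from the strict positivity you already have.

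The paper's own proof does not address this either. It simply takes the limiting supports to be $(-2,0,1)$ and $(-1,0,2)$ without comparing them with $(-2,-1,1)$ and $(-1,1,2)$.
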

	
	\noindent\emph{Proof.} See Appendix~\ref{app:proofs}.
	
	\begin{proposition}
		\label{prop:trinomial-rho-range-main}
		Let $K$ be a compact subset of $(0,\infty)\times(0,\infty)$. For the
		adaptive-trinomial chain with
		$(\lambda_V,\lambda_X)=(1,1.4)$, every correlation satisfying $
			|\rho_{Vr}|<\frac5{14}$
		is strictly admissible on $K$ for all sufficiently small $h$.
	\end{proposition}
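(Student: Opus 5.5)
The plan is to reduce strict admissibility to a uniform bound on the limiting marginal innovations. By \eqref{eq:trinomial-coupling}, the joint table is strictly admissible at a parent node $(v,x)$ as soon as
\[
|\rho_{Vr}|\,\max_i|z_i^V|\,\max_j|z_j^X|<1 .
\]
It therefore suffices to show, uniformly on $K$, that $\limsup_{h\to0}\max_j|z_j^X|\le 7/5$ and $\limsup_{h\to0}\max_i|z_i^V|\le 2$. Together these give
\[
|\rho_{Vr}|\cdot\bigl(\tfrac{14}{5}+o(1)\bigr)<1
\]
for all small $h$ whenever $|\rho_{Vr}|<5/14$. The $o(1)$ is uniform because all expansions in Lemma~\ref{lem:trinomial-interior-probabilities-main} are uniform on compacts of $(0,\infty)$. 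The projections of $K$ onto each coordinate are such compacts.

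\emph{Rate factor, $\lambda_X=1.4$.} By part (i) of Lemma~\ref{lem:trinomial-interior-probabilities-main}, the symmetric stencil is selected for small $h$. Expanding the grid gives
\[
Z_{q\pm1}-Z_q=\pm\lambda\sigma\sqrt{Z_q}\sqrt h+O(h).
\]
Combined with $m_Z(z)-z=O(h)$ and $|Z_q-z|=O(\sqrt h)$, the normalized distances are controlled after dividing by $\sqrt{v_Z(z)}=\sigma\sqrt{zh}\,(1+O(h))$ from \eqref{eq:cir-exact-moments}. I would rather not track the exact offset $Z_q-m$. Instead I use the moment identities \eqref{eq:trinomial-moment-matching}: the innovations $(z_-,z_0,z_+)$ have mean $0$ and variance $1$, lie on a lattice of spacing $\lambda+O(\sqrt h)$, and carry probabilities converging to $(1/(2\lambda^2),\,1-1/\lambda^2,\,1/(2\lambda^2))$. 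Solving these relations in the limit forces $z_0\to0$ and $z_\pm\to\pm\lambda$. Hence $\max_j|z_j^X|\to 7/5$.

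\emph{Variance factor, $\lambda_V=1$.} Here I split by the sign of $A(z)$ from \eqref{eq:tri-A-function}.

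When $A(z)>0$, part (ii) gives probabilities $\approx(1/2,\,Ah,\,1/2)$ on a spacing-one lattice, and the moment identities give innovations $\approx(-1,0,1)$.

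When $A(z)<0$, the one-sided stencil has limiting probabilities $(1/6,1/2,1/3)$, up to reversal. The successors have relative positions $0,a,a+b$ with integer gaps spanning width three. Imposing mean zero and variance one identifies the configuration $(0,2,3)$. The innovations are then $(-2,0,1)$, up to reversal, so $\max_i|z_i^V|\to 2$. I will confirm this by direct computation: the mean of $(0,2,3)$ is $2$ and its variance is $5-4=1$. This branch is exactly where the constant $2\cdot\tfrac75=\tfrac{14}{5}$ comes from.

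\emph{Main obstacle.} The delicate step is uniformity near points of $K$ where $A(z)=0$. There the selected stencil can switch within a shrinking neighbourhood, so neither limiting description applies uniformly. My first attempt will be a direct a priori bound that avoids identifying the stencil. Any minimal-span admissible stencil has span at most three grid units, since the $(0,2,3)$ stencil is always available for small $h$. All three innovations have mean zero and variance one on a spacing-$(1+O(\sqrt h))$ lattice. Consider a triplet of span at most three containing the mean, and suppose its largest innovation exceeds $2+\varepsilon$. Then the remaining mass sits within distance $1$ of the opposite end. Checking the finitely many integer configurations should show that such a triplet either cannot have unit variance or is not of minimal span. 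This yields $\max_i|z_i^V|\le2+o(1)$ uniformly on compacts.

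If this combinatorial check proves awkward, the fallback is to cover the zero set of $A$ by small intervals and treat the transitional stencils explicitly via Lemma~\ref{lem:trinomial-interior-probabilities-main}.
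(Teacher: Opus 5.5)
Your proposal is correct and follows the paper's own argument: the rate innovations tend to $(-\tfrac75,0,\tfrac75)$ and the variance innovations to $(-1,0,1)$ or, up to reflection, $(-2,0,1)$. Hence $\max_{i,j}|z_i^Vz_j^X|\le\tfrac{14}{5}+o(1)$ uniformly on $K$, and $|\rho_{Vr}|<5/14$ gives $1+\rho_{Vr}z_i^Vz_j^X>0$.

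Your span-$\le3$ a priori bound near the zero set of $A$ goes beyond the paper, which simply asserts uniform convergence to one of the two limiting supports. It works once you state explicitly that the parent is itself a grid node, so the mean lies within $O(\sqrt h)$ innovation units of a lattice point. Without this anchoring the finite check fails: $\{-\tfrac12,\tfrac12,\tfrac52\}$ with weights $(\tfrac34,\tfrac18,\tfrac18)$ is an admissible span-three stencil containing the mean, with an innovation above $2$.
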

	
	\noindent\emph{Proof.} See Appendix~\ref{app:proofs}.
	
	The bound $|\rho_{Vr}|<5/14$ corresponds to the worst-case enlarged
	variance stencil, where $\max_{i,j}|z_i^Vz_j^X|$ tends to
	$2\times1.4=2.8$. If the variance stencil remains symmetric, the same
	quantity tends to $1\times1.4=1.4$, and the local admissible range widens
	to $|\rho_{Vr}|\le5/7$.
	
	Both joint factor constructions are combined with the conditional fund
	transition described in the next subsection.
	
\subsection{Correlated fund transition}

The following construction is common to the binomial and adaptive-trinomial
factor lattices. At each time discretization, the deterministic CIR++ shift
is calibrated on the discrete rate lattice so that the initial market
discount curve is reproduced exactly at all grid maturities; details are
given in the Supplementary Material.

At time $t_n$ and parent node $(j,k)$, let $b$ index a
positive-probability joint factor branch. Its probability is $p_{n,b}$ and
its child indices are $(j_b,k_b)$. The branchwise integrated short rate is
approximated by
\[
R_{n,b}
=
\frac{h}{2}\bigl(x_k+x_{k_b}\bigr)
+
\int_{t_n}^{t_{n+1}}\varphi(u)\,du .
\]
The one-step fund transition uses the parent variance $v_j$.

In what follows, $\rho_{Vr,h}^{\rm eff}$ denotes the variance--rate
correlation actually used by the local joint coupling. For a strictly
admissible coupling, $\rho_{Vr,h}^{\rm eff}=\rho_{Vr}$; for the projected
binomial fallback, it is the projected value defined in
\eqref{eq:effective-rho}. We then set
\begin{equation}
	\alpha_V
	=
	\frac{\rho_{SV}-\rho_{Sr}\rho_{Vr,h}^{\rm eff}}
	{1-(\rho_{Vr,h}^{\rm eff})^2},
	\quad
	\alpha_r
	=
	\frac{\rho_{Sr}-\rho_{SV}\rho_{Vr,h}^{\rm eff}}
	{1-(\rho_{Vr,h}^{\rm eff})^2},
	\quad
	\alpha_\perp
	=
	\left[
	1-
	\frac{
		\rho_{SV}^2
		-2\rho_{SV}\rho_{Sr}\rho_{Vr,h}^{\rm eff}
		+\rho_{Sr}^2
	}{
		1-(\rho_{Vr,h}^{\rm eff})^2
	}
	\right]^{1/2}
	\ge0.
	\label{eq:equity-decomposition}
\end{equation}
	If $\eta$ is independent with mean zero and variance one, then
	$\xi_S=\alpha_V\xi_V+\alpha_r\xi_r+\alpha_\perp\eta$ has unit variance and the
	required equity correlations. For branch $b$ and quadrature point $\ell$,
		\begin{align}
		F_{n+1}&=(F_n+D_n)M_{n,b,\ell},\notag\\
		M_{n,b,\ell}
		&=
		\exp\!\left\{
		R_{n,b}-qh-\tfrac12 v_jh
		+\sqrt{v_jh}
		\bigl(
		\alpha_V\xi_{V,b}
		+\alpha_r\xi_{r,b}
		+\alpha_\perp\eta_\ell
		\bigr)
		\right\}.
		\label{eq:fund-transition}
	\end{align} 
	The binomial comparison uses the two-point residual rule
	$\eta\in\{-1,1\}$ with weights $(1/2,1/2)$, yielding at most
	$8$ hybrid branches per factor node. The production adaptive-trinomial
	scheme uses
	$\eta\in\{-\sqrt3,0,\sqrt3\}$ with weights $(1/6,2/3,1/6)$,
	yielding at most $27$ hybrid branches.
	
	A local multiplicative martingale correction is applied to the fund
	multipliers. If $M_{n,b,\ell}^{(0)}$ denotes the uncorrected multiplier, set
	\[
	\Lambda_{n,j,k}
	=
	\frac{e^{-qh}}
	{\displaystyle
		\sum_{b,\ell}
		p_{n,b,\ell}e^{-R_{n,b}}M_{n,b,\ell}^{(0)}},
	\qquad
	M_{n,b,\ell}
	=
	\Lambda_{n,j,k}M_{n,b,\ell}^{(0)}.
	\]
	Then
	\begin{equation}
		\sum_{b,\ell}
		p_{n,b,\ell}e^{-R_{n,b}}M_{n,b,\ell}
		=
		e^{-qh},
		\label{eq:martingale-correction}
	\end{equation}
	so that
	$\mathbb E_n^h[e^{-R_{n,n+1}^h}F_{n+1}]
	=(F_n+D_n)e^{-qh}$ exactly.

\subsection{Backward recursion and adaptive fund representation}

Let $\omega=(b,\ell)$ index a positive-probability hybrid branch, and set
$p_{n,\omega}=p_{n,b,\ell}$, $R_{n,\omega}=R_{n,b}$,
$M_{n,\omega}=M_{n,b,\ell}$, and
$(j_\omega,k_\omega)=(j_b,k_b)$. The set of such branches from node
$(j,k)$ is denoted by $\Omega_n(j,k)$.

At each joint factor node $(v_j,x_k)$, the value function is represented by
a sorted set of fund knots and a piecewise polynomial. Candidate knots are
generated by inverse images of child knots under the affine branch maps
$F\mapsto(F+D_n)M_{n,\omega}$, together with payoff kinks,
continuation--surrender intersections, and domain endpoints.

On the last numerical step of policy year $i$, characterized by
$t_{n+1}=i+1$, define the mortality-adjusted child value by
\begin{equation}
	U_{n+1}^{\mathrm{mort}}(F,v,x;P)
	=
	(1-q_{a_0+i})U_{n+1}(F,v,x;P)
	+
	q_{a_0+i}B_{i+1}(F).
	\label{eq:mortality-map}
\end{equation}
On all other numerical steps, set
$U_{n+1}^{\mathrm{mort}}=U_{n+1}$.

The continuation operator is
\begin{equation}
	\begin{aligned}
		C_n(F,v_j,x_k;P)
		={}&
		-P\,\mathbf 1_{\{t_n\in\{0,\ldots,T-1\}\}}
		\\
		&+
		\sum_{\omega\in\Omega_n(j,k)}
		p_{n,\omega}e^{-R_{n,\omega}}
		U_{n+1}^{\mathrm{mort}}
		\bigl((F+D_n)M_{n,\omega},
		v_{j_\omega},x_{k_\omega};P\bigr).
	\end{aligned}
	\label{eq:continuation-operator}
\end{equation}

On each retained fund interval $[a,b]$, define the four local interpolation
points
\begin{equation}
	f_0=a,\qquad
	f_1=a+\left(1-\frac1{\sqrt2}\right)(b-a),\qquad
	f_2=a+\frac1{\sqrt2}(b-a),\qquad
	f_3=b.
	\label{eq:cubic-interpolation-points}
\end{equation}
The production scheme represents the value function on $[a,b]$ by the unique
cubic interpolant through $f_0,f_1,f_2,f_3$.
These symmetric locations minimize
$\max_{0\le t\le1}|t(t-\alpha)(t-\beta)(t-1)|$ within the symmetric
two-point family, and hence minimize the standard fourth-order interpolation
error bound on a smooth interval of given width.

At surrender dates, continuation--payoff intersections are inserted as knots
before applying \eqref{eq:surrender-obstacle}. Exercise intervals are then
compressed exactly to their endpoints. The terminal payoff kink and affine
right tail are retained, so the fund cap limits only the representation domain
and does not clip values.

\begin{remark}
	The resulting interpolating cubic is not, in general, the best cubic
	approximation to the value function under the local error criterion used for
	pruning. Its coefficients are, however, obtained directly from four function
	values, without solving a local optimization problem. Since this construction
	is repeated across fund intervals, factor states, and time steps, its simplicity
	provides a favorable accuracy--cost trade-off for the overall algorithm.
\end{remark}
	
\subsection{Certified pruning, Delta control, and complexity}
\label{subsec:pruning-complexity}

Backward propagation maps each retained child knot through every hybrid
branch, so the number of candidate singular points may grow exponentially
with the number of time steps in the worst case. The production scheme
controls this growth by applying a certified Douglas--Peucker-type pruning
\cite{DouglasPeucker1973} to the piecewise-cubic value functions.

On each candidate fund interval $[a,b]$, let $U(F)$ denote the current
piecewise-cubic value function and let $\widetilde U(F)$ denote its pruned
cubic approximation. Since $U-\widetilde U$ is cubic on each common
subinterval, its maximum absolute deviation is obtained by checking the
endpoints and all interior stationary points, which are found by solving a
quadratic equation. If the prescribed tolerance is violated, the interval is
split at a point where this maximum is attained and the test is repeated
recursively. The retained approximation therefore satisfies, for every
$F\in[a,b]$,
\begin{equation}
	|U(F)-\widetilde U(F)|
	\le
	\varepsilon_{\rm abs}+\varepsilon_{\rm rel}F,
	\label{eq:pruning-certificate}
\end{equation}
where $\varepsilon_{\rm abs}$ and $\varepsilon_{\rm rel}$ are the prescribed
absolute and relative pruning tolerances, respectively. A conservative
interval envelope based on the minimum fund level over each common
subinterval guarantees this mixed tolerance continuously. Optional
probability-weighted, preimage, and emergency safeguards are not used in the
production results.

\begin{proposition}\label{prop:delta-bound}
	On a common-refinement cell $I=[a,b]$ containing no preserved
	singularity,
	\begin{equation}
		\|\partial_FU-\partial_F\widetilde U\|_{\infty,I}
		\le
		\frac{18}{b-a}
		\|U-\widetilde U\|_{\infty,I}.
		\label{eq:delta-bound}
	\end{equation}
\end{proposition}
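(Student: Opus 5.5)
On a common-refinement cell $I=[a,b]$ containing no preserved singularity, both $U$ and $\widetilde U$ are cubic polynomials on $I$. This is the meaning of ``common subinterval'' in the pruning construction: $U$ is piecewise cubic with knots among the retained ones, $\widetilde U$ is a single cubic on each pruned interval, and $I$ is a cell of the common refinement. Hence $e:=U-\widetilde U$ restricted to $I$ lies in $\mathcal P_3$. The claim then reduces to a Markov-type inequality for cubics on an interval:
\[
\|e'\|_{\infty,[a,b]}\le \frac{2\cdot 3^2}{b-a}\,\|e\|_{\infty,[a,b]}=\frac{18}{b-a}\|e\|_{\infty,[a,b]} .
\]
The absence of a preserved singularity is used only to guarantee that neither function has a kink inside $I$. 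Then $\partial_F U$ and $\partial_F\widetilde U$ are classical derivatives of polynomials on all of $I$, and no one-sided derivative issue arises.

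The plan has three steps. First, state precisely that $e|_I$ is a polynomial of degree at most three; this follows from the structure of the representation recalled above. Second, rescale to $[-1,1]$ via $F=\tfrac{a+b}{2}+\tfrac{b-a}{2}s$ and set $p(s)=e(F(s))$. Then $\|p\|_{\infty,[-1,1]}=\|e\|_{\infty,I}$ and $e'(F)=\tfrac{2}{b-a}p'(s)$. Third, invoke the classical Markov brothers' inequality $\|p'\|_{\infty,[-1,1]}\le n^2\|p\|_{\infty,[-1,1]}$ with $n=3$. Combining the last two steps gives the constant $2\cdot 9=18$ directly.

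If a self-contained argument is preferred over citing Markov's inequality, I would give a cruder explicit route. Write $p$ in the Lagrange basis at four fixed nodes (for instance the interpolation points \eqref{eq:cubic-interpolation-points} mapped to $[-1,1]$, or Chebyshev points). Then bound $|p'(s)|\le \sum_i |p(s_i)|\,\|\ell_i'\|_\infty\le \|p\|_\infty\sum_i\|\ell_i'\|_\infty$. This yields some explicit constant. The main obstacle is that such basis-based constants are generally larger than $9$ on $[-1,1]$, so matching exactly $18$ essentially requires Markov's sharp bound, which is attained by the Chebyshev polynomial $T_3$. I would therefore rely on the classical theorem and comment only that the constant is sharp up to the factor from the affine rescaling.
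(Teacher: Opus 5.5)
Your proposal is correct and matches the paper's proof: $U-\widetilde U$ is a polynomial of degree at most three on the common-refinement cell, and the affine rescaling to $[-1,1]$ contributes the factor $2/(b-a)$. Markov's inequality with $n=3$ then contributes $3^2$, which gives the constant $2\cdot 3^2=18$.
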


\begin{proof}
	The difference $U-\widetilde U$ is a polynomial of degree at most three on
	$I$. Rescaling $I$ to $[-1,1]$ and applying Markov's inequality
	(see \cite{Rivlin1990}) yields the factor $2\times3^2=18$.
\end{proof}

Thus, away from preserved singularities, value-function pruning also controls
the local Delta error, with the bound scaling inversely with the interval
width. End-to-end Delta accuracy is tested numerically in
Section~\ref{subsec:delta-validation}.

At time level $n$, each marginal factor lattice has $O(n)$ nodes and the joint
grid has $O(n^2)$ node pairs. If $N^{\rm knot}_{n,j,k}$ knots are retained at
node $(j,k)$, permanent memory at that level is proportional to
$\sum_{j,k}N^{\rm knot}_{n,j,k}$. The main temporary cost arises from merging
the branch preimages. The implementation uses a streaming $k$-way merge and
OpenMP parallelism across joint factor nodes.
	
	\section{Convergence, Delta consistency, and Richardson extrapolation}
	\label{sec:convergence}
	
	The results are organized in six steps: (i) local consistency and identification
	of the limiting generator; (ii) weak convergence of the strict binomial and
	adaptive-trinomial chains, with convergence of European and finite-date Bermudan
	values; (iii) convergence of the projected binomial fallback and stability of
	the contractual recursion; (iv) convergence of the adaptive singular-point
	representation, including pruning and fair-premium determination; (v) under
	stronger smoothness assumptions, first-order weak accuracy for the strict,
	unpruned binomial scheme and regular-region Delta consistency; and (vi) the
	stronger asymptotic expansion required for Richardson extrapolation. No global
	first-order rate is claimed for the production adaptive-trinomial scheme. The
	surrender feature studied here leads to a finite-date Bermudan stopping problem,
	whereas continuous-exercise American stopping is outside the present theoretical
	scope.
	
	Between contractual dates set $Y_t=\log F_t$ and
	$\mathbf Z_t=(Y_t,V_t,X_t)$. The unpruned strict-chain value at time $t_n$ is
	$U_n^h$, and its adaptive representation is $\widetilde U_n^h$. For a scalar
	valuation problem, $\mathcal V$ denotes the exact value and $\mathcal V_h$ the
	unpruned time-discrete value. On a regular fund interval, Delta denotes the
	fund derivative $\partial_F U$; at a preserved kink, left and right derivatives
	are considered separately.
	
	Unless stated otherwise, all local $O(\cdot)$ and $o(\cdot)$ estimates are
	uniform on compact subsets of $\mathbb R\times(0,\infty)\times(0,\infty)$. For the rate
	statements, the market curve is assumed sufficiently smooth and the calibrated
	shift increments satisfy
	\begin{equation}
		s_{n,h}=\int_{t_n}^{t_{n+1}}\varphi(u)\,\mathrm du+O(h^2)
		\label{eq:shift-local-consistency}
	\end{equation}
	uniformly on the time grid.  
	
	When the martingale correction is applied, its
	parent-node rescaling factor $\Lambda_{n,j,k}$ satisfies
	$\Lambda_{n,j,k}=1+o(h)$ under the local-consistency assumptions, so it does
	not change the limiting drift or covariance. For the binomial weak-order and
	Richardson results, the stronger expansion $\Lambda_{n,j,k}=1+O(h^2)$ is
	assumed and follows from the corresponding integer-power one-step expansion.
	The stronger rate statements impose the additional smoothness and operator
	expansions given below.
	
	All results concern a sequence of grids with $h=T/N\downarrow0$ whose
	contractual dates belong to every grid. Unless stated otherwise, the financial
	chain is strict, unprojected, and unpruned; projection and adaptive
	representation are introduced only after convergence of the strict chain has
	been established. Selected technical proofs are deferred to
	Appendix~\ref{app:proofs}.
	
	Throughout this section, $O(h^\alpha)$ denotes a quantity whose absolute
	value is bounded by $C h^\alpha$ for all sufficiently small $h$, with a
	constant $C$ independent of $h$. When the estimate is stated on a compact
	set, the bound is understood to hold uniformly on that set.
	
	\subsection{Local consistency and the limiting generator}
	\label{subsec:local-consistency}
	
	Let $(Y_n^h,V_n^h,X_n^h)$ be the chain generated by the two factor lattices and the
	conditional fund quadrature. Uniformly on compact subsets of
	$\mathbb R\times(0,\infty)\times(0,\infty)$, the marginal moment properties
	of the two lattice constructions, together with strict covariance matching,
	give
	\begin{align}
		\mathbb E[\Delta V_n^h\mid v,x]
		&=\kappa_V(\theta_V-v)h+o(h),
		&\mathbb E[(\Delta V_n^h)^2\mid v,x]
		&=\sigma_V^2vh+o(h),\label{eq:local-v-moments}\\
		\mathbb E[\Delta X_n^h\mid v,x]
		&=\kappa_r(\theta_r-x)h+o(h),
		&\mathbb E[(\Delta X_n^h)^2\mid v,x]
		&=\sigma_r^2xh+o(h),\label{eq:local-x-moments}\\
		\mathbb E[\Delta V_n^h\Delta X_n^h\mid v,x]
		&=\rho_{Vr}\sigma_V\sigma_r\sqrt{vx}\,h+o(h).
		\label{eq:local-vx-moment}
	\end{align}
	The equity decomposition in \eqref{eq:equity-decomposition}, the symmetric
	residual quadrature, and the local martingale correction yield
	\begin{align}
		\mathbb E[\Delta Y_n^h\mid y,v,x]
		&=\left(x+\varphi(t_n)-q-\tfrac12v\right)h+o(h),
		\label{eq:local-y-mean}\\
		\mathbb E[(\Delta Y_n^h)^2\mid y,v,x]
		&=vh+o(h),\label{eq:local-y-variance}\\
		\mathbb E[\Delta Y_n^h\Delta V_n^h\mid y,v,x]
		&=\rho_{SV}\sigma_Vv\,h+o(h),\label{eq:local-yv-moment}\\
		\mathbb E[\Delta Y_n^h\Delta X_n^h\mid y,v,x]
		&=\rho_{Sr}\sigma_r\sqrt{vx}\,h+o(h).
		\label{eq:local-yx-moment}
	\end{align}
	Consequently, the discrete generator converges to
	\begin{align}
		\mathcal L_t\psi={}&
		\left(x+\varphi(t)-q-\tfrac12v\right)\partial_y\psi
		+\kappa_V(\theta_V-v)\partial_v\psi
		+\kappa_r(\theta_r-x)\partial_x\psi \notag\\
		&+\tfrac12v\partial_{yy}\psi
		+\tfrac12\sigma_V^2v\partial_{vv}\psi
		+\tfrac12\sigma_r^2x\partial_{xx}\psi \notag\\
		&+\rho_{SV}\sigma_Vv\partial_{yv}\psi
		+\rho_{Sr}\sigma_r\sqrt{vx}\partial_{yx}\psi
		+\rho_{Vr}\sigma_V\sigma_r\sqrt{vx}\partial_{vx}\psi.
		\label{eq:limiting-generator}
	\end{align}
	The discounted pricing generator is
	$\mathcal A_t\psi=\mathcal L_t\psi-[x+\varphi(t)]\psi$.
	The vanishing-jump property follows because every active successor lies
	$O(\!\sqrt h)$ from its parent on an interior compact set.
	
	\subsection{Weak convergence of the strict chains and contract values}
	\label{subsec:weak-convergence}
	
	The local consistency relations above, together with tightness and vanishing
	jumps, identify the diffusion limit. The same argument also gives convergence
	of discounted contract values when the payoff family is uniformly integrable.
	
	\begin{theorem}
		\label{thm:strict-chain-weak-convergence}
		Assume local consistency in
		\eqref{eq:local-v-moments}--\eqref{eq:local-yx-moment}, vanishing jumps, compact
		containment, uniform polynomial moment bounds, and uniqueness of the martingale
		problem for the Heston--CIR++ diffusion. Then both the strict binomial and the
		strict adaptive-trinomial chains converge weakly to the diffusion on every
		finite horizon. If discounted payoffs have linear growth and are uniformly
		integrable, the corresponding European and finite-date Bermudan values
		converge.
	\end{theorem}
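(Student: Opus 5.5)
The argument follows the classical Stroock--Varadhan/Kushner route for Markov chain approximations of diffusions, applied separately to each contractual year and then glued along the finite set of contractual dates. First, embed the chain $\mathbf Z^h_n=(Y_n^h,V_n^h,X_n^h)$ as a piecewise-constant process $\mathbf Z^h(t)=\mathbf Z^h_{\lfloor t/h\rfloor}$ on each policy year $[i,i+1)$, started from a convergent sequence of initial states.

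On each such year, the proof proceeds in three steps.

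\emph{Tightness.} By compact containment together with the vanishing-jump property (every successor lies $O(\sqrt h)$ from its parent on interior compacts), the laws of $\mathbf Z^h$ are tight in the Skorokhod space $D([i,i+1];\mathbb R^3)$, and every limit point has continuous paths. This uses Aldous' criterion, with the conditional increments controlled by the local second moments.

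\emph{Identification.} For $\psi\in C_c^\infty$ supported in a compact subset of $\mathbb R\times(0,\infty)^2$, a second-order Taylor expansion gives
\[
\mathbb E[\psi(\mathbf Z^h_{n+1})-\psi(\mathbf Z^h_n)\mid\mathcal F_n]=h\,\mathcal L_{t_n}\psi(\mathbf Z^h_n)+o(h)
\]
uniformly. Here the first- and second-moment relations \eqref{eq:local-v-moments}--\eqref{eq:local-yx-moment} supply the drift and covariance, and the third-order remainder is bounded by $\mathbb E|\Delta\mathbf Z|^3\le C h^{1/2}\mathbb E|\Delta\mathbf Z|^2=o(h)$ thanks to vanishing jumps. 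The discrete process
\[
\psi(\mathbf Z^h_n)-\sum_{m<n}h\,\mathcal L_{t_m}\psi(\mathbf Z^h_m)
\]
is therefore a martingale up to an $o(1)$ error. Passing to the limit along a subsequence, using continuity of $\mathcal L_t\psi$ (the factor $\varphi$ is continuous) and boundedness of the integrand, shows that every limit point solves the martingale problem for $\mathcal L_t$.

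\emph{Boundary localisation.} Because test functions are only available in the interior, the identification is done up to the exit time of a compact $K$. Compact containment then lets the compact exhaust the state space, with uniformly small probability of leaving it. Uniqueness of the martingale problem identifies the limit as the Heston--CIR++ diffusion, and the whole sequence converges.

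The yearly pieces are concatenated as follows. At each anniversary the deterministic map $F\mapsto F+D$ is continuous, so the Markov property and an induction over the finitely many years give joint weak convergence of the chain and the diffusion at the finite set of dates $\{0,1,\dots,T\}$ (and on path space on $[0,T]$). The discount factor $\exp(-\sum_n R_{n,\omega})$ is a continuous functional of the rate path plus deterministic shift terms, with the shift handled via \eqref{eq:shift-local-consistency}. Hence it converges jointly with the state.

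For European values, the discounted payoff is a continuous functional with linear growth. Weak convergence plus the assumed uniform integrability (supported by the uniform polynomial moment bounds) gives convergence of expectations by Vitali/Skorokhod representation. The mortality weights are deterministic and enter linearly, so they pose no difficulty.

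For finite-date Bermudan values, the plan is a backward induction over the surrender dates $i=T-1,\dots,1$. The induction hypothesis is that the discrete value $U_{i+1}^h$ converges to $U_{i+1}$ locally uniformly on compacts in $(F,v,x)$ and has uniform linear growth. Two ingredients carry the induction. First, weak convergence of the one-year transition from convergent initial points, combined with linear growth and uniform integrability, gives $C_i^h(z_h)\to C_i(z)$ whenever $z_h\to z$. This continuous-convergence form is equivalent to local uniform convergence, and it needs continuity of $U_{i+1}$, which follows by Feller continuity of the limit diffusion. Second, the map $(c,r)\mapsto\max\{c,r\}$ is $1$-Lipschitz, and $R_J$ is continuous with linear growth, so local uniform convergence and linear growth propagate to $U_i^h$.

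The main obstacle is this continuous convergence from moving initial points near $v=0$ or $x=0$, where the local estimates are not available. I would handle it by truncating to a compact set with compact containment plus moment bounds, making the contribution outside the compact uniformly small. Linear growth from the maxima with $F$-linear payoffs then closes the induction.
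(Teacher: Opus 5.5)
Your proposal is correct and follows essentially the same route as the paper. In both, local consistency, vanishing jumps and compact containment feed the Kushner--Dupuis Markov-chain approximation argument; every subsequential limit is identified with the Heston--CIR++ martingale problem, and uniqueness gives convergence of the full sequence; uniform integrability passes discounted linear-growth payoffs to the limit; and the finitely many contractual and surrender dates are handled by induction through the continuous event maps. The paper cites Kushner and Dupuis for tightness and identification and states the backward induction in one line, whereas you write out the Aldous tightness check, the Taylor-expansion identification, and the continuous-convergence (locally uniform) form of the Bermudan induction.
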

	
	\begin{proof}
		The first two conditional moments converge locally to the drift and covariance
		of \eqref{eq:limiting-generator}, and all jumps vanish uniformly on interior
		compact sets. Compact containment prevents loss of mass at the localization
		boundary. The Markov-chain approximation theorem of
		Kushner and Dupuis \cite{KushnerDupuis2001} therefore identifies every weak
		subsequential limit with a solution of the Heston--CIR++ martingale problem;
		uniqueness yields convergence of the full sequence. Uniform integrability
		allows discounted expectations of linear-growth payoffs to pass to the limit.
		A finite collection of contractual and exercise dates is handled by backward
		induction through the continuous event maps.
	\end{proof}
	
	The European case corresponds to valuation without early exercise. With annual
	surrender dates, the contract instead gives a finite-date Bermudan stopping
	problem, which is covered by the final statement of
	Theorem~\ref{thm:strict-chain-weak-convergence}.
	
	\subsection{Projected binomial chain and stability of the contractual recursion}
	\label{subsec:projection-stability}
	
	For the binomial marginal defined in
	Section~\ref{subsubsec:binomial-factor-lattices}, the exact-mean interpolation
	probability satisfies, uniformly on interior compact sets,
	\begin{equation}
		p_{Z,h}(z)
		=
		\frac12+\beta_Z(z)\sqrt h+O(h^{3/2}),
		\qquad
		\beta_Z(z)
		=
		\frac{\kappa_Z(\theta_Z-z)-\sigma_Z^2/4}
		{2\sigma_Z\sqrt z}.
		\label{eq:binomial-prob-expansion}
	\end{equation}
	
	For marginal up probabilities $p_V$ and $p_X$, the attainable correlation
	interval of the $2\times2$ joint table has endpoints
	\begin{align}
		\rho_+(p_V,p_X)
		&=
		\exp\{-\tfrac12|\ell(p_V)-\ell(p_X)|\},
		\notag\\
		\rho_-(p_V,p_X)
		&=
		-\exp\{-\tfrac12|\ell(p_V)+\ell(p_X)|\},
		\qquad
		\ell(u)=\log\frac{u}{1-u}.
		\label{eq:binomial-correlation-endpoints}
	\end{align}
	Consequently, on every compact subset of $(0,\infty)\times(0,\infty)$,
	the corresponding local endpoints satisfy
	$\rho_-(v,x;h)\to-1$ and $\rho_+(v,x;h)\to1$ uniformly as $h\downarrow0$.
	
	\begin{proposition}
		\label{prop:uniform-interior-admissibility}
		Let $K$ be a compact subset of $(0,\infty)\times(0,\infty)$.
		For every fixed $\rho_{Vr}\in(-1,1)$, the unprojected
		$2\times2$ joint table is strictly admissible on $K$ for all
		sufficiently small $h$.
	\end{proposition}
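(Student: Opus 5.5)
The argument combines the expansion \eqref{eq:binomial-prob-expansion} with the explicit correlation endpoints \eqref{eq:binomial-correlation-endpoints}.

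\textbf{Step 1 (marginals near one half).} Fix a compact $K\subset(0,\infty)^2$. For $h$ small, every $(v,x)\in K$ is an interior state, so the selected successors are the adjacent pair \eqref{eq:binomial-interior-successors}. The coefficients $\beta_V(v)$ and $\beta_X(x)$ are continuous, hence bounded on the projections of $K$. Therefore
\[
\sup_{(v,x)\in K}\Bigl(\bigl|p_V-\tfrac12\bigr|+\bigl|p_X-\tfrac12\bigr|\Bigr)\le C\sqrt h .
\]
The logit $\ell(u)=\log\frac{u}{1-u}$ is Lipschitz near $1/2$ with constant close to $4$, and $\ell(1/2)=0$. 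Hence $|\ell(p_V)|+|\ell(p_X)|\le C'\sqrt h$ uniformly on $K$.

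\textbf{Step 2 (endpoints tend to $\pm1$).} Substituting into \eqref{eq:binomial-correlation-endpoints} gives
\[
\rho_+\ge e^{-C'\sqrt h/2},\qquad \rho_-\le -e^{-C'\sqrt h/2},
\]
so $\rho_\pm\to\pm1$ uniformly on $K$. Since $\rho_{Vr}\in(-1,1)$ is fixed, set $\epsilon=1-|\rho_{Vr}|>0$. For $h$ small enough that $e^{-C'\sqrt h/2}>1-\epsilon/2$, we obtain $\rho_-<\rho_{Vr}<\rho_+$ with a uniform margin on $K$.

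\textbf{Step 3 (interior correlation gives positive probabilities).} We need strict positivity, not merely admissibility. The quantity $\delta$ in \eqref{eq:binomial-coupling} is linear in $\rho$, and each joint probability $p_{ab}$ in \eqref{eq:binomial-joint-table} is affine in $\delta$. The endpoints $\rho_\pm$ are exactly the values at which one of the Fréchet constraints \eqref{eq:frechet-bounds-main} becomes active. Consequently, $\rho$ lying strictly inside $(\rho_-,\rho_+)$ makes all four $p_{ab}>0$.

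The endpoint formula should be checked rather than assumed. At the upper bound $\delta=p_V(1-p_X)$ with $p_V\le p_X$, the corresponding correlation is
\[
\sqrt{\frac{p_V(1-p_X)}{(1-p_V)p_X}}=\exp\{-\tfrac12(\ell(p_X)-\ell(p_V))\},
\]
and the other cases follow by symmetry.

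\textbf{Optional direct check.} Strictness can also be verified without the endpoint formula. At $p_V=p_X=\tfrac12$, each $p_{ab}$ equals $\tfrac14(1\pm\rho_{Vr})\ge\tfrac{\epsilon}{4}$. The perturbation is $O(\sqrt h)$ uniformly on $K$, because $\delta$ depends continuously on $(p_V,p_X)$ near $(\tfrac12,\tfrac12)$. Hence $\min_{a,b}p_{ab}\ge\epsilon/8$ for small $h$.

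\textbf{Main obstacle.} The delicate point is Step 1's uniformity. It relies on the bracketing rule selecting the adjacent successors \eqref{eq:binomial-interior-successors} for all states in $K$ once $h$ is small, and on the $O(h^{3/2})$ remainder in \eqref{eq:binomial-prob-expansion} being uniform on compact subsets of $(0,\infty)$. Both hold because $K$ is bounded away from the origin, where $\sqrt z$ in $\beta_Z$ and the lower boundary could otherwise degrade the estimates. I would state this explicitly and then conclude.
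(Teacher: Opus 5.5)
Your proposal is correct and follows essentially the paper's route: boundedness of $\beta_V,\beta_X$ on $K$ in \eqref{eq:binomial-prob-expansion} puts $p_V,p_X$ within $O(\sqrt h)$ of $\tfrac12$ uniformly, so the endpoints \eqref{eq:binomial-correlation-endpoints} converge uniformly to $\pm1$, and any fixed $\rho_{Vr}\in(-1,1)$ eventually lies strictly inside the admissible interval. The differences are cosmetic: the paper writes the explicit expansions $\rho_-=-1+2|\beta_V+\beta_X|\sqrt h+O(h)$ and $\rho_+=1-2|\beta_V-\beta_X|\sqrt h+O(h)$ where you use a Lipschitz bound on the logit, and you additionally verify the endpoint formula and the passage from an interior correlation to strictly positive cell probabilities, both of which the paper leaves implicit.
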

	
	\noindent\emph{Proof.} See Appendix~\ref{app:proofs}.
	
	Let $A_h$ be the set of factor states at which the requested $\rho_{Vr}$
	lies outside the local binomial admissibility interval, and define the first
	projected time by
	\begin{equation}
		\tau_h
		=
		\inf\{t_n\le T:(V_n^h,X_n^h)\in A_h\}.
		\label{eq:first-projected-time}
	\end{equation}
	
	\begin{proposition}
		\label{prop:vanishing-projection}
		Assume the strict Feller conditions
		$2\kappa_V\theta_V>\sigma_V^2$ and
		$2\kappa_r\theta_r>\sigma_r^2$. Suppose the two marginal CIR lattices are
		tight on $[0,T]$ and converge weakly to their CIR limits. Since projection
		preserves both marginals, the same compact-containment property holds for the
		projected joint chain. Then
		\begin{equation}
			\mathbb P(\tau_h\le T)\longrightarrow0.
			\label{eq:vanishing-projection}
		\end{equation}
	\end{proposition}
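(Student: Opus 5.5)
The plan is a localization argument: show that the set $A_h$ where projection occurs escapes every fixed compact subset of $(0,\infty)^2$, and then show that under the strict Feller conditions the chain is unlikely to approach the boundary before $T$.

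First I fix $\varepsilon>0$ and choose a compact set
\[
K_\delta=[\delta,1/\delta]\times[\delta,1/\delta]\subset(0,\infty)^2 ,
\]
with $\delta$ to be selected below. By Proposition~\ref{prop:uniform-interior-admissibility}, the unprojected $2\times2$ table is strictly admissible on $K_\delta$ for all sufficiently small $h$, say $h\le h_0(\delta)$. The mechanism is that $\rho_\pm(v,x;h)\to\pm1$ uniformly on $K_\delta$, by \eqref{eq:binomial-correlation-endpoints} and \eqref{eq:binomial-prob-expansion}, while $|\rho_{Vr}|<1$ is fixed. Hence $A_h\cap K_\delta=\emptyset$ for $h\le h_0(\delta)$, and
\[
\{\tau_h\le T\}\subseteq\Bigl\{\min_{t_n\le T}\min(V_n^h,X_n^h)<\delta\Bigr\}\cup\Bigl\{\max_{t_n\le T}\max(V_n^h,X_n^h)>1/\delta\Bigr\}.
\]
Both events involve only one marginal at a time. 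Since projection leaves the marginal transition laws unchanged, their probabilities are the same under the projected joint chain as under the individual marginal CIR lattices. This removes the circularity that would otherwise arise because the joint law depends on the projection.

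Second, I bound each marginal event by weak convergence. The upper event is controlled by tightness (compact containment): for $\delta$ small, $\limsup_h \mathbb P(\max_{n}Z_n^h>1/\delta)<\varepsilon/4$. For the lower event I use the Portmanteau theorem applied to the functional $\omega\mapsto\inf_{t\le T}\omega(t)$ on Skorokhod space. This functional is continuous at continuous paths, and the limit is a continuous CIR process. The set $\{\inf_{t\le T}\omega(t)\le\delta\}$ is closed, so
\[
\limsup_h\mathbb P\Bigl(\min_{t_n\le T}Z_n^h\le\delta\Bigr)\le\mathbb P\Bigl(\inf_{t\le T}Z_t\le\delta\Bigr).
\]
Under $2\kappa_Z\theta_Z>\sigma_Z^2$ and a positive initial value $z_0$, the CIR process a.s. never hits $0$. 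Its path infimum on $[0,T]$ is therefore a.s. strictly positive, so $\mathbb P(\inf_{t\le T}Z_t\le\delta)\to0$ as $\delta\downarrow0$. Choosing $\delta$ small enough to make each of the four terms less than $\varepsilon/4$, and then letting $h\le h_0(\delta)$, gives $\limsup_h\mathbb P(\tau_h\le T)\le\varepsilon$. Since $\varepsilon$ is arbitrary, \eqref{eq:vanishing-projection} follows.

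The main obstacle is the uniformity hidden in the first step. Proposition~\ref{prop:uniform-interior-admissibility} must hold at \emph{every} lattice node in $K_\delta$, including nodes where the bracketing rule might select non-adjacent successors. On an interior compact set and for small $h$, the successors are the adjacent pair \eqref{eq:binomial-interior-successors}, so the expansion \eqref{eq:binomial-prob-expansion} applies uniformly. I would verify this by noting that $m_Z(z)-z=O(h)$ while the grid spacing is of order $\sqrt{z h}$, which is bounded below on $K_\delta$. A secondary check is that the lattice starting values converge to $v_0,x_0>0$, so the limit CIR process starts in the interior and the Feller non-attainment argument applies.
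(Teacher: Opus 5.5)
Your proposal is correct and follows essentially the same localization argument as the paper. You choose a compact rectangle in $(0,\infty)^2$ on which Proposition~\ref{prop:uniform-interior-admissibility} excludes projection for small $h$, then control the exit probability through the marginal chains using tightness and the strict Feller conditions. Beyond the paper's terse proof, you spell out two steps it leaves implicit: the lower-boundary bound via the continuous-mapping/Portmanteau argument for $\inf_{t\le T}$ combined with Feller non-attainment, and the fact that each marginal of the projected chain is itself Markov with the unprojected marginal law.
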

	
	\noindent\emph{Proof.} See Appendix~\ref{app:proofs}.
	
	Define the time-weighted occupation error
	\begin{equation}
		\mathcal M_h=h\sum_{n:t_n<T}
		\mathbb E\!\left[
		|\rho_{Vr,h}^{\rm eff}-\rho_{Vr}|\,
		\mathbf 1_{A_h}(V_n^h,X_n^h)
		\right].
		\label{eq:projection-occupation}
	\end{equation}
	Since correlations are bounded by one,
	$\mathcal M_h\le2T\mathbb P(\tau_h\le T)\to0$.
	
	\begin{theorem}
		\label{thm:projected-chain}
		Assume Proposition~\ref{prop:vanishing-projection}, local consistency of the
		unprojected chain on interior compact sets, vanishing jumps, uniform moment
		bounds, and uniqueness of the martingale problem for the Heston--CIR++
		diffusion. Assume also that every projected local value
		$\rho_{Vr,h}^{\rm eff}$ leaves the residual equity variance in
		\eqref{eq:equity-decomposition} non-negative, so the three-factor transition
		is well defined. Then the projected binomial chain converges weakly to the same
		limit as the strict localized chain. If discounted contract payoffs are
		uniformly integrable, European and finite-date Bermudan values converge as
		well.
	\end{theorem}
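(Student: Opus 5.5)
The plan is a coupling argument: on the event $\{\tau_h>T\}$ the projected chain and the strict chain have identical transitions, so they agree path by path. Let $\mathbf Z^h$ be the projected binomial chain and $\widehat{\mathbf Z}^h$ the strict, unprojected chain, localized by replacing the coupling at states in $A_h$ with any admissible table having the same marginals. I will build both on one probability space with identical driving uniforms. Before $\tau_h$ the local tables, the effective correlation $\rho_{Vr,h}^{\rm eff}=\rho_{Vr}$, and hence the coefficients in \eqref{eq:equity-decomposition} all coincide. So $\mathbf Z^h_n=\widehat{\mathbf Z}^h_n$ for all $t_n<\tau_h$. By Proposition~\ref{prop:uniform-interior-admissibility}, on every compact $K$ we have $A_h\cap K=\emptyset$ for small $h$. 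The localized strict chain is therefore locally consistent on interior compacts, has vanishing jumps and uniform moment bounds, and Theorem~\ref{thm:strict-chain-weak-convergence} gives weak convergence of $\widehat{\mathbf Z}^h$ to the Heston--CIR++ diffusion.

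Next, for any bounded Lipschitz functional $\Phi$ of the path on $[0,T]$, with paths as piecewise-constant interpolations in the Skorokhod space,
\[
\bigl|\mathbb E\Phi(\mathbf Z^h)-\mathbb E\Phi(\widehat{\mathbf Z}^h)\bigr|\le 2\|\Phi\|_\infty\,\mathbb P(\tau_h\le T).
\]
The right side tends to zero by Proposition~\ref{prop:vanishing-projection}. Hence $\mathbf Z^h$ has the same weak limit. The extra assumption that $\alpha_\perp$ stays real under projection is exactly what makes $\mathbf Z^h$ well defined after $\tau_h$, so the coupling makes sense on the whole horizon. Alternatively, I could verify the Kushner--Dupuis conditions directly: the drift and covariance defect is controlled by $\mathcal M_h\to0$, and the marginals are unchanged. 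The coupling route is shorter, so I will use it.

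For values, I take a discounted European payoff $\Psi$ with linear growth. I split
\[
\mathbb E[\Psi(\mathbf Z^h)-\Psi(\widehat{\mathbf Z}^h)]=\mathbb E[(\Psi(\mathbf Z^h)-\Psi(\widehat{\mathbf Z}^h))\mathbf 1_{\{\tau_h\le T\}}].
\]
Uniform integrability of both payoff families, combined with $\mathbb P(\tau_h\le T)\to0$, makes this term vanish. Theorem~\ref{thm:strict-chain-weak-convergence} then handles the strict term.

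For finite-date Bermudan values, I will work backward over the finitely many contractual dates. Each value equals a maximum of a continuation and a surrender payoff of linear growth, mixed with the deterministic mortality map \eqref{eq:mortality-map}. The obstacle map $\max\{\cdot,R_J\}$ is $1$-Lipschitz. Under the coupling, the difference between projected and strict values at each date is therefore bounded by a conditional expectation of $|\text{payoff}|\mathbf 1_{\{\tau_h\le T\}}$-type terms, which again vanishes by uniform integrability. Alternatively, the Bermudan value is a supremum over stopping times on a finite set of dates, and the same bound holds uniformly in the stopping rule.

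The main obstacle is this uniformity: the bound must hold over all stopping rules, and uniform integrability must hold for the projected chain itself. For the second point I would first try to derive moment bounds for $F$ from the martingale correction \eqref{eq:martingale-correction}, which gives $\mathbb E[e^{-R}F_{n+1}]=(F_n+D_n)e^{-qh}$ regardless of projection. I would combine this with the fact that the variance and rate marginals are unchanged by projection. Together these should give an $L^{1+\epsilon}$ bound on the discounted fund and hence the needed uniform integrability.
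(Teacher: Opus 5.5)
Your proof is correct, but it takes a different route from the paper.

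\textbf{The paper's route.} The paper verifies the Kushner--Dupuis hypotheses for the projected chain directly. On interior compacts, the projected local generator differs from the strict one only through the $V$--$X$ covariance on $A_h$. The integrated discrepancy against a smooth compactly supported test function is bounded by a constant times the occupation error $\mathcal M_h$. Compact containment handles the complement, and uniqueness of the martingale problem identifies every subsequential limit. Values then follow from uniform integrability and backward induction through the event maps.

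\textbf{Your route.} You couple the projected chain synchronously with a localized strict chain, so the two path laws differ by at most $\mathbb P(\tau_h\le T)$ in total variation. This transfers weak convergence without re-identifying any limit. It also transfers finite-date Bermudan values in one step. Let $\Gamma_i^h$ and $\widehat\Gamma_i^h$ be the mortality-weighted discounted cash flows when stopping at date $i$, and take stopping times of the common driving filtration. Then $\sup_\tau|\mathbb E\Gamma_\tau^h-\mathbb E\widehat\Gamma_\tau^h|\le\mathbb E[(\max_i|\Gamma_i^h|+\max_i|\widehat\Gamma_i^h|)\mathbf 1_{\{\tau_h\le T\}}]$, which tends to zero by uniform integrability over finitely many dates. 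So the uniformity over stopping rules that you call the main obstacle is already resolved by your own alternative argument.

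\textbf{What each buys.} Your argument is more elementary and quantitative. The paper's is more robust in principle: $\mathcal M_h$ weights projected states by $|\rho_{Vr,h}^{\rm eff}-\rho_{Vr}|$, so it would tolerate frequent but vanishing projections. The coupling instead needs the projection event itself to be rare. Here both are available, since $\mathcal M_h\le2T\,\mathbb P(\tau_h\le T)$.

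Two minor remarks. First, your localized strict chain may legitimately be taken to be the projected chain itself. In that case your first paragraph already shows the projected chain is locally consistent on compacts for small $h$, and Theorem~\ref{thm:strict-chain-weak-convergence} applies to it directly. Second, your final paragraph is unnecessary, because uniform integrability is a hypothesis of the theorem. In any case, \eqref{eq:martingale-correction} alone gives only an $L^1$ bound on the discounted fund, not the $L^{1+\epsilon}$ bound you were after.
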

	
	\noindent\emph{Proof.} See Appendix~\ref{app:proofs}.
	
	Because the CIR++ shift can be negative, the discounted continuation operator
	need not be a contraction in the ordinary supremum norm. Define
	$\varphi_-:=\max_{0\le t\le T}(-\varphi(t))_+$. The branchwise rate
	integral is the sum of a non-negative CIR contribution and a calibrated shift
	increment; by \eqref{eq:shift-local-consistency},
	$e^{-R_{n,b}}\le e^{\varphi_-h+O(h^2)}$ uniformly on the grid. Hence one
	numerical step is stable with factor $1+O(h)$ and the product of all stability
	factors is bounded uniformly over the finite horizon.
	Contribution translations are isometries, premium updates are affine, the
	mortality map is a convex combination, and the obstacle is non-expansive:
	\begin{equation}
		|\max\{f,R\}-\max\{g,R\}|\le|f-g|.
		\label{eq:obstacle-nonexpansive}
	\end{equation}
	The complete finite-horizon recursion is therefore stable.
	
	\subsection{Adaptive representation and fair-premium convergence}
	\label{subsec:adaptive-convergence}
	
	For $J\in\{\mathrm{NS},\mathrm{FB},\mathrm{GB},\mathrm{MX}\}$, let $\Phi_J(P)$ and $\Phi_J^h(P)$ denote the exact and
	numerical initial net values at annual premium $P$, and let $P_J$ and $P_J^h$
	be their respective roots.
	
	Let $U_n^h$ be the unpruned value of the discrete chain and
	$\widetilde U_n^h$ its adaptive approximation. Suppose ordinary interpolation
	and pruning errors at step $n$ are bounded by $\iota_{n,h}$ and
	$\varepsilon_{n,h}$ on the represented fund interval. Stability gives
	\begin{equation}
		\|\widetilde U_0^h-U_0^h\|_\infty
		\le C_T\sum_{n=0}^{N-1}(\iota_{n,h}+\varepsilon_{n,h})
		+\mathcal E_h^{\rm bdry},
		\label{eq:global-representation-bound}
	\end{equation}
	where $C_T$ is independent of $h$ and $\mathcal E_h^{\rm bdry}$ is the
	remaining factor- and fund-boundary error. For the implemented payoffs, the
	right tail is exactly affine and stored through its slope; once the cap lies
	beyond all non-affine singularities it creates no value truncation. On a
	quasi-uniform time grid, uniform per-step representation error $o(h)$ is a
	sufficient, though not necessary, condition for the sum in
	\eqref{eq:global-representation-bound} to vanish.
	
	\begin{theorem}
		\label{thm:adaptive-convergence}
		Assume the strict financial chain is locally consistent, tight, and uniformly
		integrable; the factor and fund boundary errors vanish; and the right-hand side
		of \eqref{eq:global-representation-bound} tends to zero. Then the adaptive
		singular-point value converges to the Heston--CIR++ contract value with the
		prescribed finite collection of premium, mortality, and surrender dates.
	\end{theorem}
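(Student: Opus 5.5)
The plan is a triangle-inequality split at the initial node:
\[
|\widetilde U_0^h-\mathcal V|\le|\widetilde U_0^h-U_0^h|+|U_0^h-\mathcal V|,
\]
where $\mathcal V$ is the exact Heston--CIR++ contract value with the prescribed premium, mortality, and surrender dates. The two terms are handled separately. The financial term uses Theorem~\ref{thm:strict-chain-weak-convergence}. The representation term uses the stability bound \eqref{eq:global-representation-bound}.

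For the financial term, I will write the contract value as a finite-date Bermudan problem. The only events are the annual contribution translations $F\mapsto F+D$, the premium outflows $-P$, the mortality convex combinations \eqref{eq:mortality-map}, and the obstacle \eqref{eq:surrender-obstacle}. Each of these maps is continuous, and the payoffs $\max\{F,G\}$ and $R_J$ have linear growth in $F=e^{Y}$. Under the assumed uniform integrability, the final statement of Theorem~\ref{thm:strict-chain-weak-convergence} then gives $U_0^h\to\mathcal V$.

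I will argue this by backward induction over the anniversaries. Suppose the value immediately after anniversary $i+1$ converges locally uniformly on compacts, along the weakly convergent chain, and is continuous with linear growth. Then the conditional expectation over the policy year converges, by weak convergence of the Markov chain started from converging initial points, combined with uniform integrability. The event maps preserve local uniform convergence, and \eqref{eq:obstacle-nonexpansive} ensures the max with $R_J$ does too. The continuity of the limit value needed for this induction is a genuine ingredient. I would take it from the Feller property of the diffusion, which is implied by well-posedness of the martingale problem.

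For the representation term, I will invoke \eqref{eq:global-representation-bound} directly. Its constant $C_T$ comes from the stability discussion preceding this subsection. The discount factors satisfy $e^{-R_{n,b}}\le e^{\varphi_-h+O(h^2)}$, so the product of the per-step factors is bounded. Translations are isometries and mortality maps are convex combinations, and \eqref{eq:obstacle-nonexpansive} keeps the obstacle from amplifying errors.

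I will check that the error injected at step $n$, namely interpolation $\iota_{n,h}$ plus the pruning certificate \eqref{eq:pruning-certificate}, is propagated by the linear positive continuation operator \eqref{eq:continuation-operator}. That operator has norm at most $1+O(h)$. The propagated errors therefore telescope, and the total is at most $C_T\sum_n(\iota_{n,h}+\varepsilon_{n,h})$.

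The relative part $\varepsilon_{\rm rel}F$ of the certificate is not bounded in sup-norm. I will handle it with a weighted norm $\|\cdot\|/(1+F)$. Stability in that norm follows from the martingale correction \eqref{eq:martingale-correction}: one step maps $1+F$ to at most $(1+O(h))(1+F+D)$. By hypothesis the full right-hand side, including $\mathcal E_h^{\rm bdry}$, vanishes. Combining the two terms gives the convergence.

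The main obstacle is justifying that the weak-convergence argument passes through the obstacle and event maps with only local uniform convergence, when the chain is started at discrete lattice nodes rather than at a fixed point. I would first try a localization argument. On compact sets, the compact containment and moment bounds make the contribution of the tails uniformly small. Inside the compact, I would use Skorokhod representation together with continuity of the limiting value functions.
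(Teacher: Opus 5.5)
Your proposal is correct and follows essentially the same route as the paper: for the financial term, weak convergence of the unpruned chain, uniform integrability, and backward induction through the stable translation, premium, mortality and non-expansive obstacle maps; for the representation term, the stability bound \eqref{eq:global-representation-bound}, whose right-hand side vanishes by assumption. Your additional steps fill in details the paper leaves implicit rather than changing the argument: Feller continuity of the limit values drives the induction from lattice starting points, and the weighted norm $1+F$ absorbs the relative pruning tolerance, where the extra factors $1+D$ arise only at the finitely many payment dates, so the constant stays bounded.
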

	
	\begin{proof}
		Local consistency, tightness, and vanishing jumps imply weak convergence of the
		unpruned chain. Uniform integrability allows discounted expectations of the
		linear-growth payoffs to pass to the limit. Backward application of the finite
		collection of stable translations, premium updates, mortality maps, and
		non-expansive obstacle operators preserves convergence. The remaining
		difference between the unpruned discrete value and its adaptive representation
		is bounded by \eqref{eq:global-representation-bound}, which vanishes by
		assumption.
	\end{proof}
	
	\begin{corollary}
		\label{cor:projected-adaptive}
		Under the hypotheses of Theorem~\ref{thm:projected-chain}, if the boundary and
		representation errors in \eqref{eq:global-representation-bound} vanish, the
		projected adaptive binomial scheme converges to the same contract value.
	\end{corollary}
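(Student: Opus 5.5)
The plan is to combine Theorem~\ref{thm:projected-chain} with the stability estimate \eqref{eq:global-representation-bound}, using the triangle inequality
\[
|\widetilde U_0^{h,\mathrm{proj}}-\mathcal V|
\le
|\widetilde U_0^{h,\mathrm{proj}}-U_0^{h,\mathrm{proj}}|
+|U_0^{h,\mathrm{proj}}-\mathcal V|,
\]
where $U_0^{h,\mathrm{proj}}$ denotes the unpruned value of the projected binomial chain and $\widetilde U_0^{h,\mathrm{proj}}$ its adaptive representation. The second term tends to zero directly by Theorem~\ref{thm:projected-chain}. That theorem yields weak convergence of the projected chain to the Heston--CIR++ diffusion and, under uniform integrability, convergence of the European and finite-date Bermudan values. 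Premiums, mortality weights and surrender dates form a finite collection of deterministic event maps, so they are handled by backward induction exactly as in the proof of Theorem~\ref{thm:adaptive-convergence}.

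For the first term, I will check that the stability argument behind \eqref{eq:global-representation-bound} applies unchanged to the projected chain. Projection alters only the joint factor probabilities and the coefficients $\alpha_V,\alpha_r,\alpha_\perp$. By hypothesis these remain well defined (non-negative residual variance), so every branch weight $p_{n,\omega}$ is non-negative and sums to one. The discount factors still satisfy $e^{-R_{n,b}}\le e^{\varphi_-h+O(h^2)}$, because $R_{n,b}$ is unchanged by projection: the rate lattice nodes and calibrated shift do not depend on the coupling. The martingale correction rescales only the multipliers, and the map $F\mapsto(F+D_n)M_{n,\omega}$ is a change of argument, so it does not affect sup norms. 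Hence each backward step is Lipschitz with constant $1+O(h)$ in the supremum norm. Translations, premium updates, the mortality convex combination, and the non-expansive obstacle \eqref{eq:obstacle-nonexpansive} then give the same bound with a constant $C_T$ independent of $h$. By assumption the right-hand side vanishes, so the first term vanishes as well.

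The main point to verify is that $C_T$ stays uniform despite the state-dependent projection. The plan is to note that the stability constant involves only positivity and normalization of weights and the bound on discounting, not the value of the correlation. It therefore holds node by node regardless of whether projection is active. Boundary errors are absorbed into $\mathcal E_h^{\rm bdry}$, which vanishes by hypothesis.
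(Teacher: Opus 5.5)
Your proposal is correct and matches the argument the paper intends: the corollary has no separate proof, and it follows from Theorem~\ref{thm:projected-chain} (for the unpruned projected values) combined with the stability bound \eqref{eq:global-representation-bound} via the triangle inequality, exactly as in the proof of Theorem~\ref{thm:adaptive-convergence}. Your explicit check that the stability constant is unaffected by projection is a useful addition, with one correction of attribution: non-negativity of the joint factor weights comes from the projection onto $\mathcal I_h$ itself, while the residual-variance hypothesis is what keeps $\alpha_\perp$ real and the fund quadrature well defined.
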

	
	Define
	\begin{equation}
		{}_ip_{a_0}=\prod_{u=0}^{i-1}(1-q_{a_0+u}),
		\qquad {}_0p_{a_0}=1,
		\label{eq:survival-to-premium-date}
	\end{equation}
	as the probability of surviving to anniversary $i$. Because premiums are paid
	only while the insured is alive, the no-surrender discrete net value is exactly
	affine:
	\begin{equation}
		\Phi_{\mathrm{NS}}^h(P)=\mathcal B_h-PA^M,
		\qquad
		A^M=\sum_{i=0}^{T-1}{}_ip_{a_0}\,P^M(0,i),
		\qquad
		P_{\mathrm{NS}}^h=\mathcal B_h/A^M.
		\label{eq:affine-fair-premium}
	\end{equation}
	The deterministic survival probabilities and the discrete CIR++ calibration make $A^M$ grid independent. Surrender
	premiums are obtained by safeguarded Brent--Dekker root finding
	\cite{Brent1973}.
	
	\begin{proposition}
		\label{prop:fair-premium-convergence}
		Suppose $\Phi_J^h\to\Phi_J$ uniformly on a compact premium interval,
		$\Phi_J$ has a unique interior root $P_J$, and
		$|\Phi_J'(P)|\ge m_J>0$ near $P_J$. Then $P_J^h\to P_J$ and
		\begin{equation}
			|P_J^h-P_J|\le
			\frac{\sup_P|\Phi_J^h(P)-\Phi_J(P)|}{m_J}
			+o\!\left(\sup_P|\Phi_J^h(P)-\Phi_J(P)|\right).
			\label{eq:fair-premium-root-bound}
		\end{equation}
	\end{proposition}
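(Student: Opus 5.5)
The argument is a standard perturbation estimate for a root of a function with a non-degenerate derivative. Write $\epsilon_h=\sup_P|\Phi_J^h(P)-\Phi_J(P)|$ on the compact premium interval $\mathcal P$; by assumption $\epsilon_h\to0$. Because $|\Phi_J'|\ge m_J>0$ near $P_J$, there is a neighbourhood $[P_J-\delta,P_J+\delta]\subset\operatorname{int}\mathcal P$ on which $\Phi_J$ is strictly monotone. Without loss of generality take it increasing; otherwise replace $\Phi_J,\Phi_J^h$ by their negatives. The plan has three steps: localization, existence of a numerical root near $P_J$, and the quantitative bound.

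\emph{Localization.} Since $P_J$ is the unique root and $\Phi_J$ is continuous (it is differentiable near $P_J$; away from $P_J$ I will use continuity of the value in the premium), the quantity $\mu_\delta=\min\{|\Phi_J(P)|:P\in\mathcal P,\ |P-P_J|\ge\delta\}$ is strictly positive. Once $\epsilon_h<\mu_\delta$, the function $\Phi_J^h$ has no root outside $[P_J-\delta,P_J+\delta]$. Hence every numerical root $P_J^h$, including the one returned by the safeguarded Brent--Dekker iteration on a bracketing interval, lies in this neighbourhood.

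\emph{Existence and the main bound.} By monotonicity and the mean value theorem, $\Phi_J(P_J\pm\eta)$ has sign $\pm$ and modulus at least $m_J\eta$ for $\eta\le\delta$. Take $\eta_h=\epsilon_h/m_J+\epsilon_h^2$, which is eventually at most $\delta$. Then $\Phi_J^h(P_J+\eta_h)\ge m_J\eta_h-\epsilon_h>0$ and $\Phi_J^h(P_J-\eta_h)<0$. Continuity of $\Phi_J^h$ in $P$ (via the intermediate value theorem) gives a root in $[P_J-\eta_h,P_J+\eta_h]$.

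For the stated estimate, applied to any root $P_J^h$ in the neighbourhood, I will use
\[
0=\Phi_J^h(P_J^h)=\Phi_J(P_J^h)+\bigl(\Phi_J^h-\Phi_J\bigr)(P_J^h),
\qquad
|\Phi_J(P_J^h)|=|\Phi_J(P_J^h)-\Phi_J(P_J)|\ge m_J|P_J^h-P_J|.
\]
This yields $|P_J^h-P_J|\le\epsilon_h/m_J$, which in particular implies \eqref{eq:fair-premium-root-bound} with a vanishing remainder. Letting $h\downarrow0$ then gives $P_J^h\to P_J$.

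\emph{Main obstacle.} The delicate point is regularity of $\Phi_J^h$ in $P$, which the intermediate value theorem and the root finder need. For NS this is immediate from the affine form \eqref{eq:affine-fair-premium}. For surrenderable contracts I will argue that $P$ enters the recursion only through the affine premium outflow in \eqref{eq:continuation-operator}. The obstacle, mortality and translation maps are then $1$-Lipschitz in sup norm, by \eqref{eq:obstacle-nonexpansive} and the stability discussion. So $\Phi_J^h$ is Lipschitz in $P$ with constant of order $\sum_i P^M(0,i)$ and is in particular continuous. The same argument gives continuity of $\Phi_J$, which the localization step uses.

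If one wants the sharper first-order form with $o(\epsilon_h)$ exactly, one can note that $\Phi_J'$ is continuous at $P_J$. A Taylor expansion $\Phi_J(P_J^h)=\Phi_J'(P_J)(P_J^h-P_J)+o(|P_J^h-P_J|)$ then gives $|P_J^h-P_J|\le\epsilon_h/|\Phi_J'(P_J)|+o(\epsilon_h)$. Since $|\Phi_J'(P_J)|\ge m_J$, this also yields \eqref{eq:fair-premium-root-bound}.
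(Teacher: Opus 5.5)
Your proposal is correct and follows the paper's proof: uniform convergence places the numerical root in the neighbourhood where $|\Phi_J'|\ge m_J$, and the mean value theorem applied to $\Phi_J(P_J^h)-\Phi_J(P_J)=\Phi_J(P_J^h)-\Phi_J^h(P_J^h)$ gives the bound. As you observe, the bound in fact holds with zero remainder, and your $\mu_\delta>0$ step makes explicit the continuity of $\Phi_J$ that the paper's localization uses tacitly. Two minor points: your Lipschitz-in-$P$ argument, and hence the IVT existence step, covers the unpruned recursion but not the pruning operator, whose knot decisions need not depend continuously on $P$ (harmless, since the statement presupposes that $P_J^h$ is a root); and the optional Taylor remark needs only differentiability of $\Phi_J$ at $P_J$, not continuity of $\Phi_J'$.
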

	
	\begin{proof}
		Uniform convergence implies that, for sufficiently small $h$, a numerical root
		lies in the neighborhood where the derivative is bounded away from zero.
		Since $\Phi_J(P_J)=0=\Phi_J^h(P_J^h)$,
		\begin{align*}
			|\Phi_J(P_J^h)-\Phi_J(P_J)|
			&=|\Phi_J(P_J^h)-\Phi_J^h(P_J^h)|\\
			&\le\sup_P|\Phi_J^h(P)-\Phi_J(P)|.
		\end{align*}
		The mean-value theorem and $|\Phi_J'|\ge m_J$ give the leading bound. The
		stated remainder follows from the local expansion of the inverse map at the
		simple root.
	\end{proof}
	
	\subsection{Weak-error rates}
	\label{subsec:weak-order}
	
	The preceding results establish convergence of the valuation method. A
	first-order weak-error bound requires additional moment structure and smoothness
	and is therefore stated separately.
	
	The binomial scheme has the stronger moment structure needed for a first-order
	rate. Let $\Delta Z_h^{\rm bin}$ denote one strict binomial increment.
	
	\begin{lemma}
		\label{lem:binomial-scaled-moments}
		Uniformly on interior compact sets,
		\begin{align}
			\mathbb E[\Delta Z_{h,i}^{\rm bin}]
			&=b_i(t,Z)h+O(h^2),\label{eq:bin-scaled-first}\\
			\mathbb E[\Delta Z_{h,i}^{\rm bin}\Delta Z_{h,j}^{\rm bin}]
			&=a_{ij}(t,Z)h+O(h^2),\label{eq:bin-scaled-second}\\
			\mathbb E[\Delta Z_{h,i}^{\rm bin}\Delta Z_{h,j}^{\rm bin}
			\Delta Z_{h,k}^{\rm bin}]&=O(h^2),\label{eq:bin-scaled-third}\\
			\mathbb E[|\Delta Z_{h,i}^{\rm bin}\Delta Z_{h,j}^{\rm bin}
			\Delta Z_{h,k}^{\rm bin}\Delta Z_{h,\ell}^{\rm bin}|]&=O(h^2),
			\label{eq:bin-scaled-fourth}
		\end{align}
		and $\mathbb E|\Delta Z_h^{\rm bin}|^5=O(h^{5/2})$.
	\end{lemma}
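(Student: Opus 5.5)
We work with the increment $\Delta Z_h^{\rm bin}=(\Delta Y,\Delta V,\Delta X)$ from an interior state $(y,v,x)$ in a compact set $K$. Every branch of the strict binomial chain is obtained from a finite, explicit set of ingredients: the interior successors \eqref{eq:binomial-interior-successors}, the probabilities \eqref{eq:binomial-prob-expansion}, the joint table \eqref{eq:binomial-joint-table}--\eqref{eq:binomial-coupling}, the residual $\eta=\pm1$, and the multiplier \eqref{eq:fund-transition}. The plan is to expand each branch value in powers of $\sqrt h$ and then sum against the branch weights. The bracketing successors coincide with the adjacent pair $z_\pm$ on $K$ for small $h$, so we may use
\begin{equation*}
	\Delta Z=\pm\sigma_Z\sqrt{zh}+\tfrac14\sigma_Z^2h
\end{equation*}
exactly. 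The branch probabilities are $\tfrac14+O(\sqrt h)$, with smooth expansions in $\sqrt h$ whose coefficients are bounded on $K$, because $\beta_Z$ and $\delta$ are smooth functions of $p_V,p_X$ there.

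\emph{Factor moments.} The mean is exact: $m_Z(z)-z=\kappa_Z(\theta_Z-z)h+O(h^2)$ after expanding $e^{-\kappa_Zh}$. The second moment equals $\widehat v_{Z,h}+(m_Z-z)^2$, which is $\sigma_Z^2zh+O(h^2)$ by \eqref{eq:binomial-marginal-variance}. The cross moment is $\delta(z_u^V-z_d^V)(z_u^X-z_d^X)$ plus the product of the means. Since $p(1-p)=\tfrac14-\beta^2h+O(h^2)$, we get $\delta=\tfrac14\rho_{Vr}+O(h)$. The branch spreads are $2\sigma_Z\sqrt{zh}$, so this gives $\rho_{Vr}\sigma_V\sigma_r\sqrt{vx}\,h+O(h^2)$. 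One point needs a check here: the strict table uses the requested $\rho_{Vr}$, which Proposition~\ref{prop:uniform-interior-admissibility} guarantees on $K$.

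\emph{Log-fund moments.} Write $\Delta Y=\log\Lambda+R_{n,b}-qh-\tfrac12vh+\sqrt{vh}\,\xi$ with $\xi=\alpha_V\xi_V+\alpha_r\xi_r+\eta$. Here $\xi_{Z}$ are the standardized binomial innovations, equal to $\pm1+O(\sqrt h)$ and exactly centred with unit variance. We use $R_{n,b}=xh+\tfrac h2\Delta X+s_{n,h}$, $s_{n,h}=\varphi(t_n)h+O(h^2)$, and the assumed $\log\Lambda=O(h^2)$. The term $\tfrac h2\Delta X$ has mean $O(h^2)$, and the $\xi$ term is exactly centred, so the first moment is $(x+\varphi-q-\tfrac12v)h+O(h^2)$. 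For the second moments, the products of $O(h)$ drift pieces with the $O(\sqrt h)$ innovation terms need care. The dangerous contributions are of the form (deterministic $O(h)$)$\times$(centred $O(\sqrt h)$), and these vanish exactly in expectation. The remaining pieces are $h\cdot\Delta X\cdot\sqrt h\,\xi=O(h^2)$ and $\Delta Z$-drift$\times\sqrt h\xi$. The latter is $\tfrac14\sigma_Z^2h\cdot\sqrt h\,\mathbb E\xi=0$ plus a correlation contribution at order $h^{3/2}\cdot\sqrt h=h^2$. Thus $\mathbb E[(\Delta Y)^2]=vh\,\mathbb E\xi^2+O(h^2)=vh+O(h^2)$, because the coefficients in \eqref{eq:equity-decomposition} give unit variance. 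By the same reasoning, $\mathbb E[\Delta Y\Delta V]=\sqrt{vh}\,\sigma_V\sqrt{vh}\,\mathbb E[\xi\xi_V]+O(h^2)=\rho_{SV}\sigma_Vvh+O(h^2)$, and likewise for $X$.

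\emph{Third and higher moments.} Each increment is $\sqrt h\,G_i+h\,H_i+O(h^{3/2})$ branchwise, where $G_i$ is a linear combination of $\xi_V,\xi_r,\eta$ with bounded coefficients. The third moment is then $h^{3/2}\mathbb E[G_iG_jG_k]+O(h^2)$. This is the main obstacle, since we need $\mathbb E[G_iG_jG_k]=O(\sqrt h)$. The symmetric residual kills odd powers of $\eta$. For the factor innovations, the leading-order law of $(\xi_V,\xi_r)$ is the table with probabilities $\tfrac14(1\pm\rho)$ on $\{\pm1\}^2$, which is symmetric under $(\xi_V,\xi_r)\mapsto(-\xi_V,-\xi_r)$, so all its odd moments vanish. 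The deviation of the actual table and of the standardized values from this limit is $O(\sqrt h)$, coming from $\beta_Z\sqrt h$. This gives $O(h^{3/2}\cdot\sqrt h)=O(h^2)$. To conclude, the fourth absolute moment is bounded by $\sum(\sqrt h|G|+h|H|)^4=O(h^2)$, and $\mathbb E|\Delta Z|^5=O(h^{5/2})$ by the same uniform branchwise bound $|\Delta Z|\le C\sqrt h$ on $K$. All constants are uniform because $\beta_Z$, $\sqrt{z}$ and $1/\sqrt z$ are bounded on compact subsets of $(0,\infty)$.
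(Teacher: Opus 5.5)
Your proposal is correct and follows essentially the same route as the paper's much terser proof. Both use the same ingredients: adjacent exact-mean successors with $p_{Z,h}=1/2+O(\sqrt h)$, exact centring together with the equity decomposition and the martingale correction for the first two moments, an $O(\sqrt h)$ third standardized moment, and the branchwise bound $|\Delta Z|\le C\sqrt h$ for the higher moments. You justify the third-moment step more explicitly, via the sign-flip symmetry of the limiting $2\times2$ table, where the paper only asserts that the coupling preserves the order.

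The only slip is notational. The residual term in $\xi$ should read $\alpha_\perp\eta$, not $\eta$; your unit-variance claim already uses this coefficient implicitly.
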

	
	\noindent\emph{Proof.} See Appendix~\ref{app:proofs}.
	
	\begin{theorem}
		\label{thm:binomial-weak-order}
		Suppose that the strict binomial chain satisfies the local moment estimates
		in Lemma~\ref{lem:binomial-scaled-moments}. Assume also uniform polynomial
		moment bounds, stability of the discounted operators, and a backward value
		with five state derivatives of polynomial growth between contractual dates.
		If the finitely many deterministic event maps preserve this regularity, then
		\begin{equation}
			|\mathcal V_h^{\rm bin}-\mathcal V|
			\le C_T h=\frac{C_TT}{N}.
			\label{eq:binomial-weak-order}
		\end{equation}
	\end{theorem}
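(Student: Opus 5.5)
The proof will follow the standard Talay--Tubaro/Lindeberg telescoping argument, applied separately on each interval between contractual dates and then chained through the event maps. Let $u(t,\mathbf z)$ denote the exact backward value on a subinterval $[t_m,t_{m'}]$ between consecutive contractual dates. It solves $\partial_tu+\mathcal A_tu=0$ with terminal datum given by the event map applied to the next-date value. Let $\mathcal P_n^h$ be the one-step discounted binomial operator,
\[
\mathcal P_n^h\psi(\mathbf z)=\mathbb E\bigl[e^{-R_{n,\omega}}\psi(\mathbf z+\Delta Z_h^{\rm bin})\bigr].
\]
The global error is then split as
\[
U_m^h-u(t_m,\cdot)=\sum_{n=m}^{m'-1}\mathcal P_m^h\cdots\mathcal P_{n-1}^h\bigl(\mathcal P_n^hu(t_{n+1},\cdot)-u(t_n,\cdot)\bigr)
+\mathcal P_m^h\cdots\mathcal P_{m'-1}^h\bigl(U_{m'}^h-u(t_{m'},\cdot)\bigr).
\]
Stability of the discounted operators bounds the product $\mathcal P_m^h\cdots\mathcal P_{n-1}^h$ by $e^{C(t_n-t_m)}$ in a weighted (polynomial-growth) norm, using the factor $1+O(h)$ derived in Section~\ref{subsec:projection-stability}. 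It therefore suffices to show a local error of order $O(h^2)$ with polynomial growth.

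\textbf{Local error.} Expand $u(t_{n+1},\mathbf z+\Delta Z)$ by Taylor's formula to fourth order in space, with fifth-order integral remainder. The terms are controlled as follows.
\begin{itemize}
\item The first and second moments come from Lemma~\ref{lem:binomial-scaled-moments}. They reproduce $h\mathcal L_{t_n}u$ up to $O(h^2)$.
\item The third and fourth moments contribute $O(h^2)$ by \eqref{eq:bin-scaled-third}--\eqref{eq:bin-scaled-fourth}.
\item The remainder is bounded by $\mathbb E|\Delta Z|^5\,\sup|D^5u|=O(h^{5/2})$, using polynomial growth together with the uniform moment bounds.
\item The discount factor $e^{-R_{n,\omega}}$ is expanded jointly. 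The trapezoidal rate integral and \eqref{eq:shift-local-consistency} give $1-(x+\varphi(t_n))h+O(h^2)$ plus a correlated increment term $-\tfrac h2\Delta X$. The cross term $h\,\Delta X\,\partial u$ has mean $O(h^2)$ because $\mathbb E\Delta X=O(h)$.
\item The martingale correction $\Lambda=1+O(h^2)$ shifts $Y$ by $O(h^2)$.
\item Time regularity, $u(t_{n+1})=u(t_n)+h\partial_tu+O(h^2)$, uses the PDE together with the derivative bounds.
\end{itemize}
Combining these gives $\mathcal P_n^hu(t_{n+1})-u(t_n)=h(\partial_t u+\mathcal A u)+O(h^2)=O(h^2)$, with a constant of polynomial growth in $\mathbf z$.

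\textbf{Chaining across dates.} At each of the finitely many contractual dates (contribution translation, premium, mortality, obstacle), the event map is Lipschitz, by the isometry, affine, convex-combination, and \eqref{eq:obstacle-nonexpansive} properties. By hypothesis it also preserves the derivative regularity, so the terminal datum of the next interval again has five derivatives of polynomial growth. Backward induction over the $T$ intervals then accumulates at most $C_T h$, with the inherited error at each date propagated by stability. Summing $N$ local errors of size $O(h^2)$ gives $O(h)=C_TT/N$.

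\textbf{Main obstacle.} The hardest step is that Lemma~\ref{lem:binomial-scaled-moments} holds only on interior compact sets, whereas the telescoping sum requires the local error globally. The plan is to localize: take compacts $K_R$ and bound the contribution off $K_R$ by Cauchy--Schwarz, using polynomial moment bounds and the probability of exit from $K_R$. Near $v=0$ or $x=0$ the bracketing rule still gives exact means and bounded increments, so the local error there is at least $O(h)$. I would try to make the exit probability $O(h)$ uniformly in time under the Feller conditions, which makes this $O(h)$-per-step contribution negligible. Alternatively, I would absorb it as an explicit assumption within the stated regularity hypotheses.
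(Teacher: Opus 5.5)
Your proposal is correct and is essentially the paper's argument: a Talay--Tubaro telescoping over one-step operator products (your defect $\mathcal P_n^hu(t_{n+1})-u(t_n)$ is the paper's $(Q_n-P_n)u(t_{n+1})$, since $u(t_n)=P_nu(t_{n+1})$), an $O(h^2)$ one-step defect from a fourth-order Taylor expansion and Lemma~\ref{lem:binomial-scaled-moments}, stability to transport and sum the $T/h$ defects, and finitely many regular event interfaces that only change the constant. The paper does not localize at all, so your fallback of building the interior-compact restriction into the hypotheses matches its implicit treatment; your sketched localization, by contrast, would not close as written, because the exit probability from an $h$-independent compact does not vanish as $h\downarrow0$, letting the compact grow with $h$destroys the uniformity of the interior constants, and after Cauchy--Schwarz an $O(h)$-per-step off-compact defect would need that probability to be $O(h^2)$ rather than $O(h)$.
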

	
	\noindent\emph{Proof.} See Appendix~\ref{app:proofs}.
	
	The proof is based on the one-step expansion, valid for smooth test functions,
	\begin{equation}
		Q_{t,h}^{\rm bin}f=f+h\mathcal A_tf+O(h^2).
		\label{eq:binomial-local-expansion}
	\end{equation}
	Its derivation from Lemma~\ref{lem:binomial-scaled-moments} is given in
	Appendix~\ref{app:proofs}.
	
	For an adaptive trinomial with $\lambda>1$, the limiting symmetric marginal
	probabilities in Lemma~\ref{lem:trinomial-interior-probabilities-main} imply the
	same $O(\!\sqrt h)$ third standardized moments, so the preceding first-order
	argument applies on localized regions where all selected stencils are
	eventually symmetric. The production variance grid uses $\lambda_V=1$ and
	requires a separate qualification.
	
	\begin{proposition}
		\label{prop:trinomial-one-sided-moments}
		At an interior variance state where $A(v)<0$ in
		\eqref{eq:tri-A-function}, the limiting standardized support of the selected
		minimal-span stencil is, up to reflection, $(-2,0,1)$ with probabilities
		$(1/6,1/2,1/3)$. Its third standardized moment is $-1$ (or $+1$ after
		reflection). Consequently, the corresponding raw third moment of the
		variance increment is generically $O(h^{3/2})$, not $O(h^2)$.
	\end{proposition}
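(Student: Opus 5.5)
The plan is to work directly from the limiting stencil already identified in Lemma~\ref{lem:trinomial-interior-probabilities-main}(ii) and check the three claims in order: the support, the third moment, and the order of the raw moment.

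\textbf{Step 1: the limiting stencil.} With $\lambda_V=1$, the Lamperti grid spacing is $\sqrt h$ in $\chi=2\sqrt V/\sigma_V$. Near a parent node $\chi_q$ and to leading order, the variance grid is locally uniform with spacing
\[
\delta V\approx \sigma_V\sqrt{v}\,\sqrt h .
\]
The leading conditional variance is $\sigma_V^2vh$, so the standardized spacing is $1+O(\sqrt h)$. When $A(v)<0$, the symmetric stencil is infeasible, because its central weight would be negative. The minimal-span search then takes a width-three triplet of consecutive nodes shifted by one, namely $(q-2,q-1,q)$ or its reflection $(q,q+1,q+2)$.

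Standardizing about the conditional mean $m=v+O(h)$, the support becomes $(-2,-1,0)+c$. Here $c$ is fixed by requiring the mean to vanish. The lemma gives the limiting weights $(1/6,1/2,1/3)$, so the mean is
\[
-2\cdot\tfrac16-\tfrac12 .
\]
After recentering, the limiting support is $(-2,0,1)$ up to reflection. I would verify this directly against the three moment conditions. The mean is $-\tfrac13+\tfrac13=0$. The second moment is $\tfrac46+\tfrac13=1$. These three equations determine the weights uniquely on that support, which recovers $(1/6,1/2,1/3)$ consistently with the lemma.

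\textbf{Step 2: the third standardized moment.} This is a one-line computation:
\[
\tfrac16(-8)+\tfrac12\cdot 0+\tfrac13\cdot 1=-1 .
\]
The reflected stencil gives $+1$.

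\textbf{Step 3: the raw third moment.} Write the increment as
\[
\Delta V=(m-v)+\sqrt{v_V(v)}\,\zeta ,
\]
where $\zeta$ is the standardized innovation, $m-v=O(h)$, and $v_V(v)=\sigma_V^2vh+O(h^2)$. Expanding the cube gives
\[
\mathbb E[\Delta V^3]=v_V^{3/2}\,\mathbb E[\zeta^3]+3(m-v)\,v_V+O(h^3).
\]
The second term is $O(h^2)$. The discrete support and weights converge at rate $O(\sqrt h)$ because of grid-offset effects, so $\mathbb E[\zeta^3]=\mp1+O(\sqrt h)$. The first term is therefore
\[
\mp(\sigma_V^2vh)^{3/2}+O(h^2)
\]
and is exactly of order $h^{3/2}$ with a nonzero coefficient. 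This is the sense in which the moment is generically $O(h^{3/2})$ rather than $O(h^2)$, in contrast with \eqref{eq:bin-scaled-third}.

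\textbf{Main obstacle.} The hardest part is Step 1: justifying that the selected stencil, after standardization, actually converges to $(-2,0,1)$. The offset between $m$ and the grid, and the relative position of the parent node, could in principle leave the support oscillating with $h$. I would handle this in two parts. First, use the moment equations to show that any width-three consecutive stencil whose weights satisfy \eqref{eq:trinomial-moment-matching} with unit standardized spacing must have its support within $O(\sqrt h)$ of $(-2,0,1)$ or its reflection. Second, use $A(v)<0$ to rule out the symmetric stencil, and use the existence of nonnegative weights on the shifted one to rule out wider stencils. "Generically" then covers only the non-degenerate coefficient; no cancellation with the $O(h^2)$ terms can remove an $h^{3/2}$ term whose coefficient $\mp(\sigma_V^2v)^{3/2}$ is nonzero.
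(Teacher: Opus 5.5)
Your Steps 2 and 3 are correct, and Step 3 is sharper than the paper's one-line scaling remark. Because the trinomial matches mean and variance exactly, $\zeta$ is exactly centered with unit variance, and you get $\mathbb E[\Delta V^3]=\mp(\sigma_V^2v)^{3/2}h^{3/2}+O(h^2)$. This shows the moment is genuinely not $O(h^2)$.

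The gap is in Step 1, which is also where you depart from the paper. The paper takes the limiting support and weights directly from Lemma~\ref{lem:trinomial-interior-probabilities-main}. The stencil you name, the consecutive triple $(q-2,q-1,q)$, spans only two grid intervals and is infeasible. The parent is the grid node $v=Z_q$ and $m=Z_q+O(h)$, so in standardized units the mean lies within $O(\sqrt h)$ of the right endpoint. A distribution on that stencil with that mean then has variance at most $O(\sqrt h)$, not $1$.

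The recentering cannot rescue this, for three reasons.
\begin{itemize}
\item The offset $c$ is not free: it is fixed by $m$ and is $O(\sqrt h)$.
\item A translation preserves the unit gaps of $(-2,-1,0)$, so it can never produce $(-2,0,1)$.
\item With your weights the mean is $-5/6$, giving $(-7/6,-1/6,5/6)$, whose second moment is $17/36\neq1$.
\end{itemize}
The selected triplet is the \emph{non-consecutive} $(q-2,q,q+1)$, or its reflection $(q-1,q,q+2)$. It spans three intervals and contains the parent node. With standardized spacing $1+O(\sqrt h)$ its support is $(-2,0,1)+O(\sqrt h)$ with no recentering, and the three moment equations then force $(1/6,1/2,1/3)$. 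Your final-paragraph claim about ``width-three consecutive'' stencils is false for the same reason.

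Your sketch of a self-contained justification also misses a competitor. Excluding the symmetric stencil and all wider ones does not determine the selection. The triplet $(q-2,q-1,q+1)$ has the same extreme nodes, hence the same span, as $(q-2,q,q+1)$, and it is also feasible when $A(v)<0$. In standardized positions $x_k$ one has $1+x_{-1}x_1=-A(v)h+O(h^2)$; this is the same quantity that gives $p_0\approx A(v)h$ for the symmetric stencil. Hence the outer weight $(1+x_{-1}x_1)/\bigl((x_{-2}-x_{-1})(x_{-2}-x_1)\bigr)$ is approximately $|A(v)|h/3>0$. The limit of this stencil is support $(-2,-1,1)$ with weights $(0,1/2,1/2)$, and its third standardized moment tends to $0$, which would defeat the $h^{3/2}$ conclusion.

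Which equal-span triplet is chosen is a tie-breaking feature of the search, for instance keeping the node nearest the conditional mean in the middle. That is exactly what the selection statement of Lemma~\ref{lem:trinomial-interior-probabilities-main} supplies. So take the stencil and its limiting weights from the lemma, as the paper does, and check mean $0$ and variance $1$ on $(-2,0,1)$. Your Steps 2 and 3 then complete the proof.
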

	
	\begin{proof}
		The limiting support and probabilities follow from
		Lemma~\ref{lem:trinomial-interior-probabilities-main}. Direct calculation
		gives
		\begin{equation*}
			\frac16(-2)^3+\frac12 0^3+\frac13 1^3=-1.
		\end{equation*}
		Multiplication by the cube of the $O(\!\sqrt h)$ local scale gives an
		$O(h^{3/2})$ raw third moment.
	\end{proof}
	
	\begin{corollary}
		\label{cor:trinomial-rate-scope}
		The first-order weak bound in
		Theorem~\ref{thm:binomial-weak-order} extends to an adaptive-trinomial
		refinement only under an eventual symmetric-stencil condition, for example on
		a localized region where $A_V(v)\ge c_A>0$ for some constant $c_A$, where
		$A_V$ denotes the function in \eqref{eq:tri-A-function} evaluated with
		$(\kappa_V,\theta_V,\sigma_V)$, or when both grid multipliers exceed one.
		For the production choice $\lambda_V=1$, strict weak convergence follows
		from Theorem~\ref{thm:strict-chain-weak-convergence}, but the moment argument
		above does not establish a global first-order rate.
	\end{corollary}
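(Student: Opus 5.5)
The corollary has two parts: a positive extension of Theorem~\ref{thm:binomial-weak-order} and a negative statement about what the moment argument cannot deliver. I would handle them separately.

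\emph{Positive part.} The plan is to check that the proof of Theorem~\ref{thm:binomial-weak-order} uses the binomial structure only through the moment estimates of Lemma~\ref{lem:binomial-scaled-moments}. It then suffices to establish those estimates for the trinomial increments on the localized region. First moments are exact, because the stencil matches $m_Z$. Second moments are exact as well, because it matches $v_Z$. Expanding $m_Z$ and $v_Z$ from \eqref{eq:cir-exact-moments} in $h$ then gives \eqref{eq:bin-scaled-first}--\eqref{eq:bin-scaled-second} for the marginals. The covariance-preserving table \eqref{eq:trinomial-coupling} supplies the cross term $\rho_{Vr}\sqrt{v_V v_X}$, which is again $a_{ij}h+O(h^2)$. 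For the equity component, the symmetric residual quadrature and the martingale correction handle the drift exactly as in the binomial case.

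The substantive step is the third moment. On a region where $A_V(v)\ge c_A>0$, Lemma~\ref{lem:trinomial-interior-probabilities-main}(ii) says the symmetric stencil is selected for all small $h$, uniformly in $v$. The probabilities there are $p_\pm=\tfrac12\pm\beta\sqrt h+O(h)$, with an $O(h)$ central weight. If both $\lambda>1$, I would use part (i) instead, which gives $p_\pm=\tfrac1{2\lambda^2}\pm c\sqrt h+O(h)$. In either case the standardized support is $\pm\lambda$ up to $O(\sqrt h)$ corrections, and the skew of the probabilities is $O(\sqrt h)$. So the third standardized moment is $O(\sqrt h)$, and multiplying by the cubed local scale $O(h^{3/2})$ yields \eqref{eq:bin-scaled-third}.

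I would treat mixed third moments by the same mechanism. Under the product-type table \eqref{eq:trinomial-coupling}, each one factors into sums of marginal moments of orders at most three, each weighted by the $O(\sqrt h)$ innovation scale. Fourth and fifth moments need no special work: the increments are $O(\sqrt h)$ on compacts, so they are $O(h^2)$ and $O(h^{5/2})$ automatically. With these in hand, the expansion \eqref{eq:binomial-local-expansion} holds for the trinomial operator. The telescoping and stability argument of Theorem~\ref{thm:binomial-weak-order} then gives $C_Th$, with localization absorbed by the polynomial moment bounds.

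\emph{Main obstacle.} The delicate point is uniformity of the eventual stencil selection over the localized region. This requires the minimal-span search to choose the symmetric triplet once $h$ is below a threshold depending only on $c_A$ and the compact set. I would derive that threshold from the explicit $O(h^2)$ remainder in \eqref{eq:tri-lambda-one-central}.

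\emph{Negative part.} Here it is enough to invoke Proposition~\ref{prop:trinomial-one-sided-moments}. Where $A_V<0$, the raw third moment is of exact order $h^{3/2}$. The one-step expansion therefore picks up a term $\tfrac16 h^{3/2}\partial_v^3 f\cdot(\text{nonzero})$, so the local error is $O(h^{3/2})$ rather than $O(h^2)$. Summed over $N$ steps this gives only $O(\sqrt h)$, so the moment argument does not yield a global first-order rate. Convergence without a rate still follows from Theorem~\ref{thm:strict-chain-weak-convergence}, because its hypotheses involve only the first two moments, which are matched exactly.
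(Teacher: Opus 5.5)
Your moment verification follows the paper's own reasoning, spelled out in more detail. Under an eventually symmetric stencil the standardized third moments are $O(\sqrt h)$. Lemma~\ref{lem:binomial-scaled-moments} (including the mixed moments through the covariance-preserving table), and hence \eqref{eq:binomial-local-expansion}, therefore carry over. Proposition~\ref{prop:trinomial-one-sided-moments} supplies the $O(h^{3/2})$ raw third moment that blocks the argument for $\lambda_V=1$.

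\textbf{The gap.} What fails as written is the globalization step, ``localization absorbed by the polynomial moment bounds.''
\begin{itemize}
\item For $\lambda_V=1$, the function $A_V$ in \eqref{eq:tri-A-function} tends to $-\infty$ both as $v\downarrow0$ and as $v\to\infty$. So $\{A_V\ge c_A\}$ is a band bounded away from both ends of $(0,\infty)$. For the parameters of Table~\ref{tab:baseline}, already $A_V(0.01)<0$.
\item The nondegenerate variance process leaves such a band before $T$ with probability bounded away from zero, for all small $h$. Polynomial moment bounds control only the contribution of large states. They do not control the time spent at moderate states where the stencil is one-sided.
\item On those states the one-step defect is $O(h^{3/2})$. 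The telescoped moment bound is therefore only of order $h+\sqrt h\,\mathbb E[\text{time spent outside the band}]$, which is $O(\sqrt h)$.
\item If your globalization were valid, it would also give a global first-order rate for the production $\lambda_V=1$ scheme. That contradicts your own negative part and Remark~\ref{rem:trinomial-rate-interpretation}.
\end{itemize}

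\textbf{The repair.} Keep the positive part localized, as the corollary does. The eventual symmetric-stencil condition is a hypothesis on the region occupied by the localized chain. The $C_Th$ bound then holds for the problem localized to that region, or whenever the occupation time outside it contributes only $O(h)$. Only when $\lambda_V,\lambda_X>1$ is the symmetric stencil selected on every interior compact set. In that case alone the localization reduces to the interior localization already used in Theorem~\ref{thm:binomial-weak-order}.

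For the same reason, the $h$-threshold you single out as the main obstacle is not where the difficulty lies. It follows directly from the uniform remainders in Lemma~\ref{lem:trinomial-interior-probabilities-main}; the real restriction is the band itself.

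Two smaller points:
\begin{itemize}
\item Reusing $\Lambda_{n,j,k}=1+O(h^2)$ needs justification. The expansion of $\sum_{b,\ell}p_{n,b,\ell}e^{-R_{n,b}}M^{(0)}_{n,b,\ell}$ has an $h^{3/2}$ term proportional to the third moment of $\alpha_V\xi_V+\alpha_r\xi_r+\alpha_\perp\eta$. That moment is $O(\sqrt h)$ only under symmetric stencils.
\item Weak convergence for $\lambda_V=1$ holds because Theorem~\ref{thm:strict-chain-weak-convergence} is stated for the strict trinomial chain. Its hypotheses also include vanishing jumps, compact containment and martingale-problem uniqueness, not merely the two matched moments.
\end{itemize}
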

	
	\begin{remark}
		\label{rem:trinomial-rate-interpretation}
		The one-sided-stencil issue is not a negligible tail event for the baseline
		variance process; it is a structural feature of the $\lambda_V=1$ grid. We
		therefore report direct trinomial refinements without fitting or claiming a
		formal convergence order. The numerical sequence is used as an accurate
		strict approximation, while the binomial chain supplies the proved
		first-order/Richardson cross-check.
	\end{remark}
	
	\subsection{Delta consistency on regular fund regions}
	\label{subsec:delta-theory}
	
	On a compact regular fund region define
	$\Delta_n^h=\partial_FU_n^h$ and
	$\widetilde\Delta_n^h=\partial_F\widetilde U_n^h$. A regular region does not
	cross a preserved payoff kink or exercise boundary; at such singularities the
	statements apply to one-sided derivatives.
	
	Proposition~\ref{prop:delta-bound} controls the derivative error created by one
	cubic pruning operation. The following statements make explicit what this
	implies for a refinement sequence.
	
	\begin{corollary}
		\label{cor:fixed-delta-consistency}
		At a fixed pruning operation, suppose the source partition is fixed on a
		compact regular fund interval $K$ and has minimum cell length $m_K>0$. If
		$\|U-\widetilde U\|_{\infty,K}\le\varepsilon$, then
		\begin{equation}
			\|\partial_FU-\partial_F\widetilde U\|_{\infty,K}
			\le\frac{18\varepsilon}{m_K}.
			\label{eq:fixed-delta-consistency}
		\end{equation}
		Hence the pruned Delta converges uniformly to the source-curve Delta as
		$\varepsilon\downarrow0$.
	\end{corollary}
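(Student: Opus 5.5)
The argument is a cellwise application of Proposition~\ref{prop:delta-bound}, followed by taking the worst cell. Because the source partition is fixed on $K$, the common refinement of the source knots and the knots retained by pruning splits $K$ into finitely many cells $I=[a,b]$. On each such cell both $U$ and $\widetilde U$ are single cubics, and since $K$ is regular, no cell contains a preserved singularity. The hypothesis of Proposition~\ref{prop:delta-bound} therefore holds on every cell.

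On a fixed cell $I$, Proposition~\ref{prop:delta-bound} gives
\[
\|\partial_FU-\partial_F\widetilde U\|_{\infty,I}\le\frac{18}{b-a}\|U-\widetilde U\|_{\infty,I}\le\frac{18\varepsilon}{b-a},
\]
since $I\subseteq K$.

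The main point to check is that every cell length satisfies $b-a\ge m_K$. This fails if pruning introduces split points strictly inside a source cell, because then a common-refinement cell can be shorter than $m_K$. I would handle this in one of two ways.

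The first is to read the statement as the authors intend: $m_K$ is the minimum length of the common-refinement cells on $K$. In the certified Douglas--Peucker scheme the retained knots form a subset of the candidate source knots, so the common refinement coincides with the source partition. Under that reading $b-a\ge m_K$ holds directly.

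The second is a fallback that does not depend on where the split points lie. On a source cell $I$, $U$ is one cubic, but $\widetilde U$ may be piecewise cubic with breakpoints inside $I$. On each subcell of length $\ell$, however, the crude Markov argument would lose a factor of $1/\ell$. I therefore prefer the first reading and would state it explicitly as the assumption that retained knots belong to the source partition.

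Taking the maximum over the finitely many cells then gives \eqref{eq:fixed-delta-consistency}. At shared knots, where one-sided derivatives are involved, the cellwise bound applies from each side, so the supremum over $K$ is still bounded by the same constant.

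Uniform convergence is then immediate. $m_K$ is independent of $\varepsilon$ because the partition is fixed, so the right-hand side $18\varepsilon/m_K$ tends to zero as $\varepsilon\downarrow0$.
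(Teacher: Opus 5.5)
Your argument is the paper's proof: it applies Proposition~\ref{prop:delta-bound} on every cell of the common refinement and uses $h_I\ge m_K$, so your first reading of $m_K$ as a lower bound on the common-refinement cell lengths is exactly the assumption the paper makes without stating it. One caution: state $h_I\ge m_K$ as an assumption rather than deducing it from retained knots being source knots, because the paper's pruning splits at a point where the deviation maximum is attained, which can be an interior stationary point rather than a source knot.
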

	
	\begin{proof}
		Apply Proposition~\ref{prop:delta-bound} on every cell of the common refinement
		and use $h_I\ge m_K$.
	\end{proof}
	
	\begin{theorem}
		\label{thm:unpruned-binomial-delta}
		Let $K$ be a compact interval of initial fund values. Assume the hypotheses of
		Theorem~\ref{thm:binomial-weak-order} and, in addition, that the exact backward
		solution and the strict binomial backward values are six-times continuously
		differentiable on an open tube containing every branch image generated from
		$K$. At each contractual date, assume that this tube stays a positive distance
		from payoff and exercise singularities, so the active branch of every event map
		is fixed there. If the one-step weak defect in
		\eqref{eq:binomial-local-expansion} and the operator-stability estimate hold in
		the $C^1$ norm on that tube, then
		\begin{equation}
			\|\partial_FU_0^{h,\mathrm{bin}}-\partial_FU_0\|_{\infty,K}
			\le C_K h.
			\label{eq:unpruned-binomial-delta-rate}
		\end{equation}
	\end{theorem}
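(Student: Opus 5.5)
The argument is the telescoping proof of Theorem~\ref{thm:binomial-weak-order}, repeated one derivative higher in the $C^1$ norm on the tube. Let $Q_n^h$ be the strict binomial one-step discounted operator and $\mathcal P_n$ the exact one-step evolution of the limiting diffusion. Let $E_i$ be the deterministic event maps at contractual dates: contribution translation, premium update, mortality combination and obstacle. I write both backward values as compositions of these operators,
\begin{equation*}
U_0^{h,\mathrm{bin}}=Q_0^h\cdots Q_{N-1}^h\,\Psi,\qquad U_0=\mathcal P_0\cdots\mathcal P_{N-1}\,\Psi,
\end{equation*}
with the $E_i$ inserted at their grid dates. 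Telescoping gives
\begin{equation*}
U_0^{h,\mathrm{bin}}-U_0=\sum_{n=0}^{N-1}Q_0^h\cdots Q_{n-1}^h\,\bigl(Q_n^h-\mathcal P_n\bigr)\,U_{n+1},
\end{equation*}
where $U_{n+1}$ is the exact backward value, with event maps interleaved. Differentiating in $F$ and estimating in $\|\cdot\|_{C^1}$ over the tube reduces the theorem to two estimates: a local $C^1$ defect $\|(Q_n^h-\mathcal P_n)U_{n+1}\|_{C^1}\le Ch^2$ and uniform $C^1$ stability of the propagator products.

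For the local defect, I use \eqref{eq:binomial-local-expansion} together with the exact semigroup expansion $\mathcal P_nf=f+h\mathcal A_{t_n}f+O(h^2)$. The hypothesis states that the binomial expansion holds in $C^1$; for the semigroup side, six continuous derivatives of $U_{n+1}$ make the Taylor remainders differentiable once more with the same $h^2$ order. Concretely, $\partial_F$ commutes with the Taylor expansion of $f((F+D_n)M)$ and produces an extra factor $M$. With $F=e^y$ we have $\partial_F=e^{-y}\partial_y$, so a single $y$-derivative of the fourth- and fifth-order Taylor terms controlled by Lemma~\ref{lem:binomial-scaled-moments} remains $O(h^2)$. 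This is why the hypothesis asks for six derivatives rather than five. Compact containment and uniform polynomial moment bounds let the remainders be taken uniformly over the branch images from $K$, since those images lie in the tube.

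For $C^1$ stability, I differentiate each map separately. The translation $F\mapsto F+D$ has unit derivative. The premium update is an additive constant and disappears under $\partial_F$. The mortality map is a convex combination with a deterministic benefit $B_{i+1}$, which is smooth on the tube because the tube stays away from the kink of $B_{i+1}$. The obstacle has a fixed active branch on the tube by assumption, so it acts locally as either the identity or replacement by the smooth payoff $R_J$, and neither enlarges the $C^1$ norm beyond a constant. For $Q_n^h$ itself, $\partial_F\bigl[U((F+D_n)M_{n,\omega},\dots)\bigr]=M_{n,\omega}\,\partial_FU(\cdots)$. Combining the discount bound $e^{-R_{n,b}}\le e^{\varphi_-h+O(h^2)}$ with $\mathbb E[e^{-R}M]=e^{-qh}$ from \eqref{eq:martingale-correction} gives a per-step factor $1+O(h)$ in the $C^1$ seminorm, and the $C^0$ part is already stable. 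Over $N=T/h$ steps the products are bounded by $e^{CT}$, and summing $N$ defects of size $Ch^2$ yields \eqref{eq:unpruned-binomial-delta-rate}.

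The main obstacle is justifying that the tube hypothesis really fixes the active branch of every event map, so that no kink of the obstacle or of $B_{i+1}$ enters the $C^1$ estimate. The branch images from $K$ form a discrete cloud with exponentially many points, and the tube must contain them with a margin independent of $h$. I would first handle this by localization: truncate to a large compact set where the strict-Feller and compact-containment properties make the escape probability $O(h)$, using polynomial moments. Inside the truncated region I would use the positive-distance assumption directly. The weighted contribution from outside would be controlled by the polynomial growth of $\partial_FU$ and Cauchy--Schwarz.
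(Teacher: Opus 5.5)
Your proposal is correct and follows the paper's proof: you telescope the numerical and exact operator products, and you get a one-step $C^1$ defect of order $h^2$ by differentiating the Taylor expansion, using that each branch map is affine with derivative $M_{n,\omega}$. You then invoke uniform $C^1$ stability, sum $T/h$ defects, and note that the event maps change only constants because their active branch is fixed on the tube. Your final localization paragraph is unnecessary, since the statement already assumes the tube contains every branch image from $K$ and stays away from all singularities. As written it would also fail: an $O(h)$ escape probability combined with Cauchy--Schwarz gives only $O(h^{1/2})$, and truncating to a compact set does not remove interior kinks such as $F=G_T^{(g_m)}$.
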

	
	\noindent\emph{Proof.} See Appendix~\ref{app:proofs}.
	
	\begin{theorem}
		\label{thm:end-to-end-delta}
		Let $K$ be a compact fund interval at the initial factor state that stays a
		positive distance from every preserved payoff, guarantee, or exercise
		singularity. Let $U_0^h$ and $\widetilde U_0^h$ be the unpruned and adaptive
		initial-node curves. Assume
		\begin{enumerate}[label=(D\arabic*)]
			\item $U_0^h\to U_0$ in $C^1(K)$; for the strict binomial chain this follows
			from Theorem~\ref{thm:unpruned-binomial-delta} under its regular-tube
			hypotheses;
			\item on every cell $I$ of the common refinement of $U_0^h$ and
			$\widetilde U_0^h$, with length $h_I$, the final representation error
			$E_h(I)=\|U_0^h-\widetilde U_0^h\|_{\infty,I}$ satisfies
			\begin{equation}
				\max_{I\subset K}\frac{E_h(I)}{h_I}\longrightarrow0.
				\label{eq:delta-refinement-condition}
			\end{equation}
		\end{enumerate}
		Then
		\begin{equation}
			\|\partial_F\widetilde U_0^h-\partial_FU_0\|_{\infty,K}
			\longrightarrow0.
			\label{eq:end-to-end-delta}
		\end{equation}
		At a preserved singularity, the same conclusion applies separately to the left
		and right derivatives on adjacent regular intervals.
	\end{theorem}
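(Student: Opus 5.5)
The plan is a triangle inequality on $K$,
\[
\|\partial_F\widetilde U_0^h-\partial_FU_0\|_{\infty,K}
\le
\|\partial_F\widetilde U_0^h-\partial_FU_0^h\|_{\infty,K}
+\|\partial_FU_0^h-\partial_FU_0\|_{\infty,K},
\]
with one hypothesis controlling each term. The second term tends to zero directly by (D1): $C^1(K)$ convergence of $U_0^h$ to $U_0$ includes uniform convergence of the fund derivative on $K$. In the strict binomial case we may instead quote \eqref{eq:unpruned-binomial-delta-rate}, which gives the rate $C_Kh$. For the production chain, (D1) is simply an assumption, so this term needs no further work.

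For the first term, cover $K$ by the cells $I$ of the common refinement of $U_0^h$ and $\widetilde U_0^h$, intersected with $K$. On each such cell, both curves are single cubics. $K$ stays a positive distance from every preserved singularity, so no cell meeting $K$ carries a preserved kink in its interior. Hence $U_0^h-\widetilde U_0^h$ is a polynomial of degree at most three on $I$, and Proposition~\ref{prop:delta-bound} (Markov's inequality after rescaling $I$ to $[-1,1]$) gives
\[
\|\partial_FU_0^h-\partial_F\widetilde U_0^h\|_{\infty,I}\le 18\,\frac{E_h(I)}{h_I}.
\]
Taking the maximum over the cells with $I\subset K$ bounds the first term by $18\max_I E_h(I)/h_I$, which vanishes by \eqref{eq:delta-refinement-condition}.

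The one delicate step is the boundary cells. A cell may straddle an endpoint of $K$, so that only $I\cap K$ lies in $K$, whereas \eqref{eq:delta-refinement-condition} is stated only for $I\subset K$. I would handle this in either of two ways.
\begin{itemize}
\item Read (D2) as applying to every cell meeting $K$. Since the singularities are at positive distance, Markov's bound on the full cell $I$ remains valid, and restricting to $I\cap K$ only lowers the supremum.
\item Alternatively, apply the argument on a slightly larger regular compact interval $K'\supset K$ that is still free of singularities, and assume (D1)--(D2) on $K'$.
\end{itemize}
Either way, at points shared by adjacent cells both one-sided derivatives are bounded, and both curves are differentiable there because they are $C^1$ on the regular region. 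This gives \eqref{eq:end-to-end-delta}.

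For the final claim at a preserved singularity $F^\ast$, take an adjacent regular interval $K_\pm$ having $F^\ast$ as an endpoint. There $U_0$, $U_0^h$ and $\widetilde U_0^h$ all have $F^\ast$ as a knot, because the singularity is preserved by construction. Repeat the argument on $K_\pm$ with one-sided derivatives at $F^\ast$. Markov's inequality on the closed cell adjacent to $F^\ast$ controls the one-sided derivative at the endpoint, and (D1) is interpreted as one-sided $C^1$ convergence on $K_\pm$. The main obstacle is the bookkeeping of which cells (D2) refers to near $\partial K$ and near $F^\ast$. The analytic content is entirely in Proposition~\ref{prop:delta-bound}.
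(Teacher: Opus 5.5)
Your proposal is correct and is essentially the paper's proof: the triangle inequality splits the error, (D1) handles $\|\partial_FU_0^h-\partial_FU_0\|_{\infty,K}$, and Proposition~\ref{prop:delta-bound} applied cell by cell, with the maximum controlled by (D2), handles the representation term. Your extra bookkeeping for cells straddling $\partial K$ is a harmless refinement that the paper passes over; it disappears if the common refinement is intersected with $K$, as in your first paragraph, so that every cell satisfies $I\subset K$. Note only that the piecewise cubic $\widetilde U_0^h$ is generally continuous but not $C^1$ at its knots, so at shared cell endpoints it is your one-sided-derivative reading, not the claimed $C^1$ property, that carries the argument.
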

	
	\begin{proof}
		By the triangle inequality,
		\begin{align*}
			\|\partial_F\widetilde U_0^h-\partial_FU_0\|_{\infty,K}
			\le{}&
			\|\partial_F\widetilde U_0^h-\partial_FU_0^h\|_{\infty,K}
			+\|\partial_FU_0^h-\partial_FU_0\|_{\infty,K}.
		\end{align*}
		The second term tends to zero by (D1). On each common-refinement cell,
		Proposition~\ref{prop:delta-bound} bounds the first term by
		$18E_h(I)/h_I$; taking the maximum and using (D2) proves
		\eqref{eq:end-to-end-delta}.
	\end{proof}
	
	\begin{remark}
		Condition (D1) is not implied by value convergence alone. Theorem~\ref{thm:unpruned-binomial-delta} supplies it for the strict binomial chain on regular tubes; for the production adaptive-trinomial sequence it remains an explicit regularity condition rather than a claimed rate theorem.
		Condition (D2) is a joint time--space requirement: a fixed value tolerance is
		not sufficient if the smallest relevant fund cell also shrinks. The stability
		bound \eqref{eq:global-representation-bound} supplies a computable sufficient
		upper bound for the numerator in (D2). The numerical tests in
		Section~\ref{subsec:delta-validation} assess the complete recursion rather than
		only one pruning operation.
	\end{remark}
	
	\subsection{Richardson extrapolation for the binomial scheme}
	\label{subsec:richardson-theory}
	
	Weak order one is not enough to justify Richardson extrapolation; a stable
	asymptotic error expansion is required. We state the assumptions explicitly
	because they delimit the theoretical claim.
	
	Let $Q_{t,h}$ be one discounted step of the strict, unpruned binomial chain and
	$P_{t,t+h}$ the exact discounted semigroup. Assume:
	\begin{enumerate}[label=(R\arabic*)]
		\item the correlation matrix is uniformly positive definite and the localized
		binomial joint probabilities are strictly admissible for all sufficiently
		small $h$;
		\item $\varphi$ is three-times continuously differentiable and bounded below, and
		the backward value and its images under the deterministic event maps have state
		derivatives up to order eight and time derivatives up to order three, all of
		polynomial growth;
		\item the numerical chain and diffusion have uniform polynomial moment bounds,
		and localization errors are $o(h^2)$ in the initial discounted value;
		\item for every test function in this regularity class,
		\begin{align}
			Q_{t,h}f&=f+h\mathcal A_tf+h^2\mathcal B_tf
			+h^3\mathcal C_tf+O(h^4),\label{eq:rich-num-expansion}\\
			P_{t,t+h}f&=f+h\mathcal A_tf+h^2\mathcal E_tf
			+h^3\mathcal F_tf+O(h^4),\label{eq:rich-exact-expansion}
		\end{align}
		with polynomially weighted, uniform remainders;
		\item the finite set of contractual dates is common to every refined grid and
		the associated interface maps preserve the stated regularity.
	\end{enumerate}
	For the binomial construction, the integer-power moment structure required by
	(R4) is obtained on interior compact sets by expanding $p_{Z,h}$ around $1/2$,
	inserting the explicit $2\times2$ coupling, and Taylor expanding the one-step
	operator. Odd standardized moments are
	$O(\!\sqrt h)$ and even moments have expansions in integer powers of $h$;
	therefore raw third and fifth moments first enter at orders $h^2$ and $h^3$.
	This is the discrete structure required by the Talay--Tubaro expansion
	\cite{TalayTubaro1990}.
	
	\begin{lemma}
		\label{lem:richardson-defect}
		Under (R2)--(R4), there are differential operators $\Psi_t$ and $\Xi_t$ such
		that
		\begin{equation}
			(Q_{t,h}-P_{t,t+h})f
			=h^2\Psi_tf+h^3\Xi_tf+O(h^4)
			\label{eq:richardson-defect}
		\end{equation}
		uniformly on the localized state domain.
	\end{lemma}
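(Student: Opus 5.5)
The argument is a direct subtraction of the two one-step expansions in (R4), followed by a check that the resulting coefficients are genuine differential operators and that the remainders stay uniform on the localized domain.

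\emph{Step 1: subtraction.} For a test function $f$ in the regularity class of (R2), subtract \eqref{eq:rich-exact-expansion} from \eqref{eq:rich-num-expansion}. The first-order terms $h\mathcal A_tf$ coincide and cancel. This cancellation is the content of local consistency: Lemma~\ref{lem:binomial-scaled-moments} matches the first two moments to the drift and covariance of $\mathcal A_t$ up to $O(h^2)$. What remains is
\[
(Q_{t,h}-P_{t,t+h})f=h^2(\mathcal B_t-\mathcal E_t)f+h^3(\mathcal C_t-\mathcal F_t)f+O(h^4).
\]
So I set $\Psi_t=\mathcal B_t-\mathcal E_t$ and $\Xi_t=\mathcal C_t-\mathcal F_t$.

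\emph{Step 2: the coefficients are differential operators.} For $P_{t,t+h}$ this follows from the Dynkin/Taylor expansion in time of the smooth backward semigroup. It gives $\mathcal E_t=\tfrac12(\mathcal A_t^2+\partial_t\mathcal A_t)$, and $\mathcal F_t$ is a similar polynomial in $\mathcal A_t$ and its time derivatives up to second order. These exist because $\varphi\in C^3$ and $f$ has eight state derivatives, which is what (R2) supplies.

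For $Q_{t,h}$, I Taylor expand $f(t+h,z+\Delta Z)e^{-R}$ around $(t,z)$ to order seven in $\Delta Z$ and group the terms by powers of $h$. This needs the binomial moment structure described after (R5): $p_{Z,h}=\tfrac12+\beta_Z\sqrt h+\dots$, the explicit $2\times2$ coupling, the symmetric residual rule $\eta=\pm1$, the expansion $\Lambda=1+O(h^2)$, and the shift consistency \eqref{eq:shift-local-consistency}. Under these, every mixed raw moment of order $k$ has the form $h^{\lceil k/2\rceil}$ times a series in integer powers of $h$ with smooth coefficients. Half-integer powers would arise only from products of odd standardized moments of order $O(\sqrt h)$ with an odd power of $\sqrt h$, and these combine to integer powers. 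The $h^2$ and $h^3$ coefficients are therefore finite sums of coefficient functions times partial derivatives of $f$ of order at most six, plus time derivatives. Hence $\mathcal B_t$ and $\mathcal C_t$, and with them $\Psi_t$ and $\Xi_t$, are differential operators.

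\emph{Step 3: uniform remainder.} The $O(h^4)$ remainder needs Taylor remainders bounded by $\mathbb E|\Delta Z|^8\sup|D^8f|$ over the convex hull of the successor set. On the localized domain the jumps are $O(\sqrt h)$ and the derivatives have polynomial growth, so the moment bounds in (R3) make this $O(h^4)$ uniformly.

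The main obstacle is the bookkeeping in Step 2. One must confirm that the $\sqrt h$-corrections in the transition probabilities, the coupling $\delta$, and the branchwise discount $e^{-R_{n,b}}$ (trapezoidal in $x$, with the calibrated shift) cause no $h^{5/2}$ or $h^{7/2}$ terms. I would handle this by writing each branch weight as an even part plus an odd part in the standardized innovation. Odd parts carry a factor $\sqrt h$ and only pair with odd powers of the $\sqrt h$-scaled increments, so all half-integer powers cancel. If the calibration error in \eqref{eq:shift-local-consistency} is only $O(h^2)$ rather than an integer-power expansion, I would strengthen it to $s_{n,h}=\int\varphi+c(t_n)h^2+O(h^3)$ and absorb $c(t_n)$ into $\Psi_t$ as a zeroth-order term.
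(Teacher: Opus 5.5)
Your Step~1 is exactly the paper's proof: subtract \eqref{eq:rich-exact-expansion} from \eqref{eq:rich-num-expansion} and set $\Psi_t=\mathcal B_t-\mathcal E_t$, $\Xi_t=\mathcal C_t-\mathcal F_t$. Steps~2--3 are not needed for the lemma, since (R4) already posits both expansions with uniform, polynomially weighted remainders; one small correction to your optional shift fallback is that it should also fix the $h^3$ term, $s_{n,h}=\int_{t_n}^{t_{n+1}}\varphi(u)\,\mathrm du+c_2(t_n)h^2+c_3(t_n)h^3+O(h^4)$, because an unstructured $O(h^3)$ calibration error cannot be absorbed into $\Xi_t$.
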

	
	\begin{proof}
		Subtract \eqref{eq:rich-exact-expansion} from
		\eqref{eq:rich-num-expansion} and set
		$\Psi_t=\mathcal B_t-\mathcal E_t$ and
		$\Xi_t=\mathcal C_t-\mathcal F_t$.
	\end{proof}
	
	\begin{theorem}
		\label{thm:richardson-expansion}
		Under (R1)--(R5), the strict, unpruned binomial price of a sufficiently smooth
		contract satisfies
		\begin{equation}
			\mathcal V_h=\mathcal V+c_1h+c_2h^2+o(h^2),
			\label{eq:global-richardson-expansion}
		\end{equation}
		where $c_1$ and $c_2$ are independent of the refinement level.
	\end{theorem}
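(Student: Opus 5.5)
We follow the Talay--Tubaro strategy: a telescoping identity between the discrete and exact backward semigroups, with the one-step defect of Lemma~\ref{lem:richardson-defect} inserted, followed by a second-order expansion of the resulting Riemann-type sums. Between two consecutive contractual dates $s<s'$, both common to every grid by (R5), write $N'=(s'-s)/h$ and
\begin{equation*}
	Q^{N'}g-P_{s,s'}g=\sum_{m=0}^{N'-1}Q^{m}\bigl(Q_{t_m,h}-P_{t_m,t_m+h}\bigr)P_{t_{m+1},s'}g ,
\end{equation*}
where $Q^m$ is the product of the first $m$ discrete steps and $g$ is the value at $s'$ after the interface map. By (R2), $u(t_{m+1},\cdot)=P_{t_{m+1},s'}g$ has eight state derivatives of polynomial growth. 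Lemma~\ref{lem:richardson-defect} then gives
\begin{equation*}
	\bigl(Q-P\bigr)u(t_{m+1})=h^2\Psi_{t_m}u(t_{m+1})+h^3\Xi_{t_m}u(t_{m+1})+O(h^4).
\end{equation*}
(R3) and the stability of the discounted operators (the $1+O(h)$ factor from Section~\ref{subsec:projection-stability}) show that the $O(h^4)$ remainders sum to $O(h^3)$.

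\emph{First expansion.} Write $Q^m=P_{s,t_m}+(Q^m-P_{s,t_m})$. The weak-order bound of Theorem~\ref{thm:binomial-weak-order}, applied to the smooth functions $\Psi_{t_m}u(t_{m+1})$, which have six derivatives because $\Psi$ is a differential operator of order at most four, gives $Q^m-P_{s,t_m}=O(h)$. This yields
\begin{equation*}
	h\sum_m P_{s,t_m}\Psi_{t_m}u(t_{m+1})\cdot h+O(h^2).
\end{equation*}
Using the time regularity in (R2), the sum $h\sum_m P_{s,t_m}\Psi_{t_m}u(t_{m+1})$ is a left-endpoint Riemann sum of the $C^1$-in-time function $\tau\mapsto P_{s,\tau}\Psi_\tau u(\tau)$. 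It equals $\int_s^{s'}P_{s,\tau}\Psi_\tau u(\tau)\,d\tau+h\,\kappa+O(h^2)$ by the Euler--Maclaurin formula. This identifies $c_1$ on the interval, which matches the usual Talay--Tubaro constant.

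\emph{Second-order term (main obstacle).} To obtain $c_2$ we need $Q^m-P_{s,t_m}$ applied to $\Psi u$ to admit its own first-order expansion $h\,w_m+O(h^2)$, not merely an $O(h)$ bound. We get this by applying the first step recursively to the test function $\Psi_{t_m}u(t_{m+1})$. That function has only four remaining derivatives, so this is why (R2) demands eight derivatives in total: four are consumed by $\Psi$ and four more are needed for the $O(h^2)$ remainder via (R4). The resulting double sum $h^2\sum_m\sum_{k<m}(\cdots)$ converges to a double integral with an $O(h)$ error, since its summand is Lipschitz in time by the three time derivatives in (R2). Together with the Euler--Maclaurin correction above and the $h^3\Xi$ sum, which is $h^2\int P\,\Xi u+O(h^3)$, this gives $c_2$ with an $o(h^2)$ remainder. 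Uniformity in $h$ of all constants on the localized domain comes from (R1), since strict admissibility makes the coefficients of (R4) smooth, and from the localization error $o(h^2)$ in (R3).

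\emph{Gluing across contractual dates.} Finally, we glue across the finitely many contractual dates by backward induction. Suppose the value just after date $s'$ has the expansion $U+h\,a+h^2b+o(h^2)$, with $a$ and $b$ smooth. By (R5), the interface maps (premium, mortality, obstacle on a fixed active branch, translation) are smooth and preserve the expansion. Propagating through $Q^{N'}$ uses the interval result for the leading term $U$, and the weak-order bound together with first-order expansion for the corrections $h\,a$ and $h^2b$. This expresses the initial value as $\mathcal V+c_1h+c_2h^2+o(h^2)$, with $c_1$ and $c_2$ built from grid-independent integrals, which establishes \eqref{eq:global-richardson-expansion}.
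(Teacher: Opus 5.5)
Your proposal is correct and follows the paper's own Talay--Tubaro argument: you use the telescoping identity with the defect of Lemma~\ref{lem:richardson-defect} inserted, identify $c_1$ by a Riemann sum over the $h^2\Psi_t$ terms, obtain $c_2$ from the $h^3\Xi_t$ terms plus the first-order replacement of numerical by exact transport (which you spell out via the recursive expansion that the paper only asserts), and glue over the finitely many contractual dates through (R5). One bookkeeping slip: with eight state derivatives of $u$ and $\Psi_t$ of order four, $\Psi_tu$ has four derivatives, not six (so Theorem~\ref{thm:binomial-weak-order}, which needs five, does not apply to it as cited), and $\Xi_t$ (of order six) leaves $\Xi_tu$ with only two, so the remainders you quote as $O(h^2)$ in $h\,w_m+O(h^2)$ and $O(h^3)$ in the $\Xi$-sum are really $o(h)$ and $o(h^2)$; these follow from the truncated defect $(Q_{t,h}-P_{t,t+h})f=h^2\Psi_tf+o(h^2)$ for $f$ with four continuous derivatives and from plain weak convergence, which still suffices and is exactly why the theorem claims only $o(h^2)$.
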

	
	\noindent\emph{Proof.} See Appendix~\ref{app:proofs}.
	
	\begin{corollary}
		\label{cor:richardson}
		Under Theorem~\ref{thm:richardson-expansion},
		\begin{equation}
			\mathcal V_h^{\rm Rich}=2\mathcal V_{h/2}-\mathcal V_h=\mathcal V+O(h^2).
			\label{eq:richardson-standard}
		\end{equation}
		More generally, for a refinement ratio $r>1$,
		\begin{equation}
			\mathcal V_{h,r}^{\rm Rich}=\frac{r\mathcal V_{h/r}-\mathcal V_h}{r-1}=\mathcal V+O(h^2).
			\label{eq:richardson-general}
		\end{equation}
	\end{corollary}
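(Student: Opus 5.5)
The plan is to substitute the expansion of Theorem~\ref{thm:richardson-expansion} at the two step sizes and eliminate the linear term. By \eqref{eq:global-richardson-expansion} applied at step $h$ and at step $h/r$, both on admissible grids,
\begin{equation*}
	\mathcal V_h=\mathcal V+c_1h+c_2h^2+o(h^2),
	\qquad
	\mathcal V_{h/r}=\mathcal V+c_1\frac hr+c_2\frac{h^2}{r^2}+o(h^2).
\end{equation*}
Here $c_1$ and $c_2$ are the same constants in both lines, because the theorem states that they do not depend on the refinement level.

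The first point to check is that $h/r$ is actually a legitimate grid spacing. Theorem~\ref{thm:richardson-expansion} concerns grids $h=T/N$ whose contractual dates lie on every grid, as required by (R5). So I will assume $rN$, or $2N$ in the standard case, is an integer and is a multiple of $T$ in the sense that the anniversaries remain grid points. Concretely, this means $r$ is an integer, or $rN_{\rm yr}$ is an integer. Under this assumption the hypotheses (R1)--(R5) hold along the refined sequence, and the expansion applies at $h/r$ with the same $c_1,c_2$.

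Next I form the combination
\begin{equation*}
	\frac{r\mathcal V_{h/r}-\mathcal V_h}{r-1}.
\end{equation*}
The $\mathcal V$ terms combine as $(r-1)\mathcal V/(r-1)=\mathcal V$. The linear terms cancel, since $r\cdot c_1h/r-c_1h=0$. The quadratic terms give
\begin{equation*}
	\frac{c_2h^2(1/r-1)}{r-1}=-\frac{c_2h^2}{r},
\end{equation*}
which is $O(h^2)$. The remainders are $o(h^2)$ multiplied by the fixed constants $r/(r-1)$ and $1/(r-1)$, so they are also $O(h^2)$. Setting $r=2$ gives \eqref{eq:richardson-standard}.

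The only substantive step is the uniformity of $c_1,c_2$ across the two refinement levels, together with the compatibility of the refined grid with the contractual dates. Both are supplied by Theorem~\ref{thm:richardson-expansion} and (R5). Once these are in place, the rest is algebra.
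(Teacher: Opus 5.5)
Your proposal is correct and is the same argument as the paper's: insert the expansion of Theorem~\ref{thm:richardson-expansion} at step sizes $h$ and $h/r$, observe that the $c_1h$ terms cancel identically, and you are left with $\mathcal V-c_2h^2/r+o(h^2)$. Your extra remark is a sensible clarification that the paper leaves implicit: $h/r$ must itself be an admissible grid spacing that contains all contractual dates, so that the same $c_1,c_2$ apply.
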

	
	\begin{proof}
		Insert \eqref{eq:global-richardson-expansion} into either linear combination;
		the coefficient of $c_1h$ cancels identically.
	\end{proof}
	
	If the net-value function has a uniform expansion
	$\Phi_h(P)=\Phi(P)+a_1(P)h+a_2(P)h^2+o(h^2)$ near a simple fair-premium root,
	the implicit-function theorem gives
	$P_J^h=P_J+d_1h+d_2h^2+o(h^2)$, so the same extrapolation applies to
	the premium. For $P_{\mathrm{NS}}$, this follows directly from
	\eqref{eq:affine-fair-premium}, because $A^M$ is grid independent.
	
	\begin{remark}
		\label{rem:richardson-limits}
		Theorem~\ref{thm:richardson-expansion} is a theorem for strict, unpruned
		transitions and sufficiently smooth values. Projection and representation
		errors must be $o(h)$ to preserve the leading coefficient and $O(h^2)$ to
		retain the second-order extrapolation. The payoff $\max\{F,G\}$ and a Bermudan
		obstacle are non-smooth and can generate lattice-alignment oscillations
		\cite{Diener2004}. A full second-order theorem for the exact
		insurance payoff therefore requires a smoothing or interface argument in
		addition to (R1)--(R5). The binomial Richardson values reported below should
		be read as theoretically motivated and numerically validated accelerations,
		not as an unconditional second-order theorem for the kinked contract. The
		adaptive-trinomial sequence is not extrapolated because its non-nested triplet
		selection does not display a stable leading coefficient on the available
		grids.
	\end{remark}
	
	\section{Numerical study}\label{sec:numerics}
	
	\subsection{Setup and independent benchmarks}
	\label{subsec:setup}
	We use SP, MC, and LSMC for singular points, Monte Carlo, and least-squares
	Monte Carlo. In the diagnostic tables, $N^{\rm knot}_{\max}$ is the largest number of
	retained fund knots at a joint factor node, $\varepsilon_{\max}$ is the largest
	posterior continuous pruning error, and $t_{12}$ is twelve-thread wall time.
	Table~\ref{tab:baseline} summarizes the central specification. Unless otherwise
	stated, contract-level results use the production configuration: strict adaptive
	trinomial factor lattices, 24 steps per year ($N=480$), cubic fund functions,
	and $(\varepsilon_{\rm abs},\varepsilon_{\rm rel})=(10^{-4},10^{-6})$.
	Mortality uses the 2025 ISTAT resident-population table at issue age 50.
	
	\begin{table}[htbp]
		\centering
		\begin{singlespace}
			\small
			\renewcommand{\arraystretch}{1.15}
			\begin{tabular}{@{}lll@{}}
				\toprule
				Block & Parameters & Values \\
				\midrule
				Contract & $T,D,g_{m},g_{s},q$ & $20,100,0.02,0.02,0$ \\
				Heston variance & $V_0,\kappa_V,\theta_V,\sigma_V$ & $0.04,2.00,0.04,0.30$ \\
				CIR rate factor & $X_0,\kappa_r,\theta_r,\sigma_r$ & $0.018,0.45,0.025,0.10$ \\
				Correlations & $\rho_{SV},\rho_{Sr},\rho_{Vr}$ & $-0.70,-0.20,0.02$ \\
				Factor grid & type; $(\lambda_V,\lambda_X)$ & adaptive trinomial; $(1.0,1.4)$ \\
				Fund representation & anchors; tolerances & cubic; $10^{-4},10^{-6}$ \\
				Production grid & $N_{\rm yr},N$ & $24,480$ \\
				Surrender & dates; $\alpha_i^{\mathrm s}$ & annual; $1$ \\
				\bottomrule
			\end{tabular}
		\end{singlespace}
		\caption{Central contract, model, and numerical parameters. The initial curve is
			the ECB euro-area AAA government-bond Svensson curve dated 15 July 2026.}
		\label{tab:baseline}
	\end{table}
	
	The financial engine reproduces all grid-maturity discount factors and the
	corrected discounted-fund martingale condition to roundoff. The production
	trinomial transitions match the two CIR moments and all three correlations to
	roundoff and remain strictly admissible over the complete refinement range.
	Detailed factor-grid, admissibility, curve-fit, martingale, and refinement
	diagnostics are reported in the \supp.
	
	The independent no-surrender benchmark is antithetic full-truncation Euler Monte
	Carlo with exact CIR++ shift increments and analytic control variates. One
	hundred replications of one million paths at 256 time steps per year give
	$P_{\rm MC}=111.6586729$ with standard error $0.0012648$. All deterministic
	timings use 12 OpenMP threads on an Intel Core Ultra 5 125H notebook with 32 GB
	RAM; absolute times are machine-specific.
	
	For validation, we use independent Monte Carlo for the no-surrender benchmark
	and cross-fitted least-squares Monte Carlo for the surrender products. These
	approaches provide external reference values based on numerical procedures
	that are structurally different from the proposed deterministic method, and
	therefore offer a useful check on its accuracy without relying on the same
	discretization or approximation mechanisms.
	
	\subsection{Convergence and cost--accuracy}
	\label{subsec:convergence-numerics}
	Table~\ref{tab:temporal-convergence-main} reports representative levels of the
	common refinement sequence. It shows the slow monotone approach of the direct
	binomial values, the substantial cancellation obtained by Richardson, and the
	higher coarse-grid accuracy of the adaptive trinomial scheme.
	
	\begin{table}[htbp]
		\centering
		\begin{singlespace}
			\small
			\renewcommand{\arraystretch}{1.12}
			\begin{tabular}{rrrrr}
				\toprule
				$N_{\rm yr}$ & $N$ & Binomial direct & Binomial Richardson & Trinomial direct\\
				\midrule
				4  & 80  & 110.748278 & 111.157080 & 111.594769\\
				8  & 160 & 111.140506 & 111.532734 & 111.584134\\
				16 & 320 & 111.374914 & 111.609323 & 111.640138\\
				24 & 480 & 111.465744 & 111.636463 & 111.644323\\
				32 & 640 & 111.512365 & 111.649816 & 111.653842\\
				\bottomrule
			\end{tabular}
		\end{singlespace}
		\caption{Representative temporal refinements for the mortality-extended
			no-surrender premium. The independent Monte Carlo benchmark is
			$111.6586729$. Richardson uses the corresponding half-resolution binomial
			run.}
		\label{tab:temporal-convergence-main}
	\end{table}
	
	Figure~\ref{fig:cost-accuracy} compares the direct binomial, binomial Richardson,
	and direct adaptive-trinomial sequences. Direct binomial values approach the
	Monte Carlo benchmark from below. At $N=640$, Richardson reduces the absolute
	discrepancy to $0.008857$ in 116.2 seconds, while the direct trinomial value has
	error $0.004830$ in 191.3 seconds. The trinomial representation stays between
	176 and 187 knots at every displayed refinement. Thus Richardson is competitive
	at intermediate cost, whereas the adaptive trinomial method gives the most
	accurate direct value. Because the $\lambda_V=1$ minimal-span rule can select
	one-sided stencils, these direct trinomial refinements are used as a convergent
	accuracy sequence without fitting or claiming a formal weak order.
	
	\begin{figure}[htbp]
		\centering
		\includegraphics[width=\textwidth]{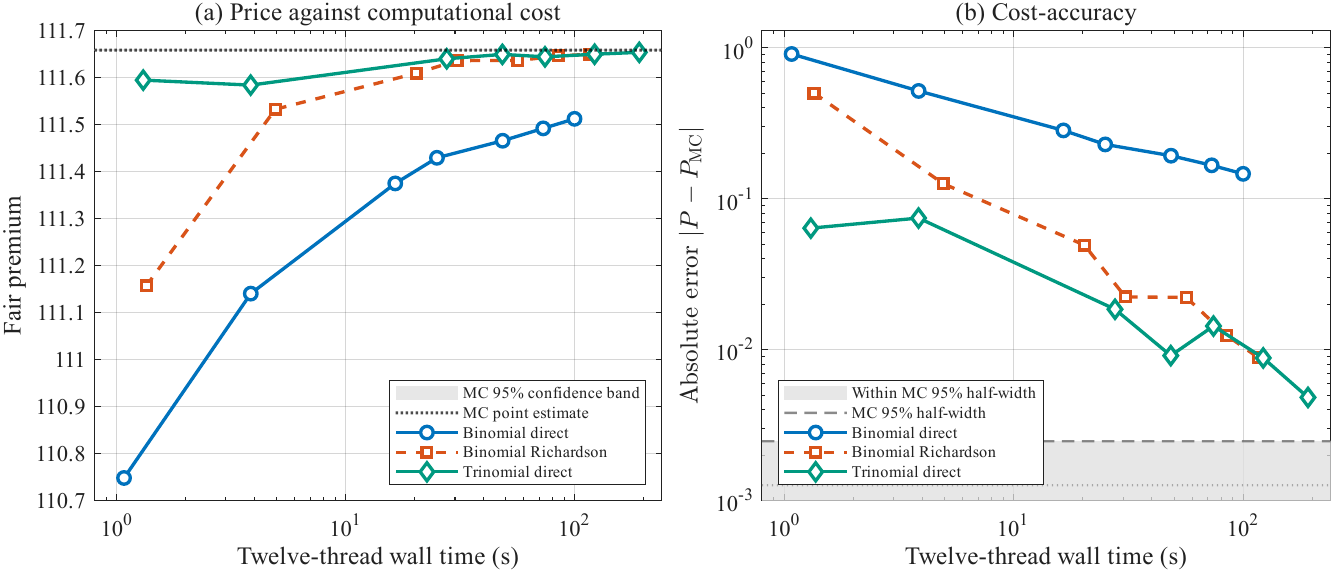}
		\caption{Cost--accuracy comparison for the mortality-extended no-surrender
			premium. Richardson cost includes both required grids; dotted step segments are
			best-so-far error envelopes.}
		\label{fig:cost-accuracy}
	\end{figure}
	\FloatBarrier
	
	\subsection{Pruning sensitivity and selected tolerance}
	\label{subsec:pruning-sensitivity}
	
	Table~\ref{tab:pruning-sensitivity-main} varies both coefficients of the mixed
	certificate at $N_{\rm yr}=24$ while keeping every other setting fixed. The
	comparison uses the exact interval-wise posterior certificate and disables all
	probability-dependent compression policies. Within each block, the tightest
	row is used only as an internal representation reference.
	
	\begin{table}[htbp]
		\centering
		\begin{singlespace}
			\footnotesize
			\renewcommand{\arraystretch}{1.10}
			\setlength{\tabcolsep}{3.5pt}
			\begin{minipage}[t]{0.48\textwidth}
				\centering
				\textit{Panel A: binomial scheme}\\[0.25em]
				\begin{tabular}{rccc}
					\toprule
					$\varepsilon$ & $P$ & $|P-P_{\rm tight}|$ & $N^{\rm knot}_{\max}$\\
					\midrule
					\multicolumn{4}{l}{$\varepsilon_{\rm abs}$; $\varepsilon_{\rm rel}=10^{-6}$}\\
					$3\cdot10^{-4}$ &111.465926&$2.155\cdot10^{-4}$&206\\
					$10^{-4}$       &111.465754&$4.395\cdot10^{-5}$&215\\
					$3\cdot10^{-5}$ &111.465728&$1.786\cdot10^{-5}$&218\\
					$10^{-5}$       &111.465710&0&220\\[0.2em]
					\multicolumn{4}{l}{$\varepsilon_{\rm rel}$; $\varepsilon_{\rm abs}=10^{-4}$}\\
					$3\cdot10^{-6}$ &111.467447&$2.215\cdot10^{-3}$&132\\
					$10^{-6}$       &111.465754&$5.222\cdot10^{-4}$&215\\
					$3\cdot10^{-7}$ &111.465345&$1.126\cdot10^{-4}$&373\\
					$10^{-7}$       &111.465232&0&616\\
					\bottomrule
				\end{tabular}
			\end{minipage}\hfill
			\begin{minipage}[t]{0.48\textwidth}
				\centering
				\textit{Panel B: trinomial scheme}\\[0.25em]
				\begin{tabular}{rccc}
					\toprule
					$\varepsilon$ & $P$ & $|P-P_{\rm tight}|$ & $N^{\rm knot}_{\max}$\\
					\midrule
					\multicolumn{4}{l}{$\varepsilon_{\rm abs}$; $\varepsilon_{\rm rel}=10^{-6}$}\\
					$3\cdot10^{-4}$ &111.644490&$2.296\cdot10^{-4}$&181\\
					$10^{-4}$       &111.644323&$6.213\cdot10^{-5}$&187\\
					$3\cdot10^{-5}$ &111.644267&$6.592\cdot10^{-6}$&188\\
					$10^{-5}$       &111.644261&0&190\\[0.2em]
					\multicolumn{4}{l}{$\varepsilon_{\rm rel}$; $\varepsilon_{\rm abs}=10^{-4}$}\\
					$3\cdot10^{-6}$ &111.645861&$2.063\cdot10^{-3}$&121\\
					$10^{-6}$       &111.644323&$5.249\cdot10^{-4}$&187\\
					$3\cdot10^{-7}$ &111.643913&$1.152\cdot10^{-4}$&320\\
					$10^{-7}$       &111.643798&0&489\\
					\bottomrule
				\end{tabular}
			\end{minipage}
		\end{singlespace}
		\caption{Pruning sensitivity at $N_{\rm yr}=24$ ($N=480$). The upper block
			in each panel varies $\varepsilon_{\rm abs}$ and the lower block varies
			$\varepsilon_{\rm rel}$. $P_{\rm tight}$ is the tightest result within the
			corresponding block.}
		\label{tab:pruning-sensitivity-main}
	\end{table}
	
	Tightening $\varepsilon_{\rm abs}$ from $10^{-4}$ to $10^{-5}$ changes the
	premium by $4.40\times10^{-5}$ in the binomial scheme and
	$6.21\times10^{-5}$ in the trinomial scheme. The relative coefficient has a
	larger effect on complexity because its contribution grows with $|F|$.
	Relative to $10^{-7}$, the production value $\varepsilon_{\rm rel}=10^{-6}$
	changes the premium by about $5.2\times10^{-4}$ in either scheme while keeping
	$N^{\rm knot}_{\max}$ at 215 and 187; the $10^{-7}$ calculations require 616 and 489
	knots. A separate fund-cap check over $10^8D$--$10^{12}D$ has spread
	$4.21\times10^{-6}$. These comparisons motivate the production setting
	$(\varepsilon_{\rm abs},\varepsilon_{\rm rel})=(10^{-4},10^{-6})$.
	
	\subsection{Delta consistency and independent no-surrender benchmark}
	\label{subsec:delta-validation}
	
	Table~\ref{tab:delta-full-main} combines three tests. Panel A isolates one
	local pruning operation on the conditional policy-year-10 value. Panel B adds
	temporal refinement and compares the analytic cubic Delta with an independent
	pathwise Monte Carlo benchmark on
	$\mathcal F=\{500,1000,1500,2000\}$. These four fund values span a lower-fund
	region, the representative policy-year-10 state used in the hedging experiment,
	and progressively more fund-dominated regions. Panel C reports the inception
	Delta immediately after the first contribution. For Panel A,
	$E_U=\max_{F\in\mathcal F}|U^\varepsilon(F)-U^0(F)|$ and
	$E_\Delta=\max_{F\in\mathcal F}|\Delta^\varepsilon(F)-\Delta^0(F)|$.
	The Monte Carlo benchmarks use 1.6 million antithetic paths and 256
	full-truncation Euler steps per year; for Panel C,
	$\Delta_{0,\rm MC}=0.810044207$ with standard error $1.70\times10^{-4}$.
	
	\begin{table}[htbp]
		\centering
		\begin{singlespace}
			\footnotesize
			\renewcommand{\arraystretch}{1.10}
			\setlength{\tabcolsep}{4.2pt}
			\begin{minipage}[t]{0.43\textwidth}
				\centering
				\textit{Panel A: local pruning at policy year 10}\\[0.25em]
				\begin{tabular}{rccc}
					\toprule
					$\varepsilon_{\rm abs}$ & $E_U$ & $E_\Delta$ & knots\\
					\midrule
					$10^{-3}$ & $1.295\cdot10^{-4}$ & $1.508\cdot10^{-5}$ & 18\\
					$10^{-4}$ & $3.536\cdot10^{-5}$ & $1.162\cdot10^{-6}$ & 26\\
					$10^{-5}$ & $2.780\cdot10^{-6}$ & $1.675\cdot10^{-6}$ & 52\\
					\bottomrule
				\end{tabular}
			\end{minipage}\hfill
			\begin{minipage}[t]{0.53\textwidth}
				\centering
				\textit{Panel B: policy-year-10 temporal convergence}\\[0.25em]
				\begin{tabular}{rccc}
					\toprule
					$N_{\rm yr}$ & Binomial & Richardson & Trinomial\\
					\midrule
					4  & $1.682\cdot10^{-2}$ & $1.005\cdot10^{-2}$ & $2.076\cdot10^{-3}$\\
					8  & $9.367\cdot10^{-3}$ & $1.918\cdot10^{-3}$ & $1.165\cdot10^{-3}$\\
					16 & $4.879\cdot10^{-3}$ & $3.900\cdot10^{-4}$ & $5.674\cdot10^{-4}$\\
					24 & $3.135\cdot10^{-3}$ & $1.377\cdot10^{-4}$ & $4.788\cdot10^{-4}$\\
					32 & $2.230\cdot10^{-3}$ & $4.187\cdot10^{-4}$ & $6.682\cdot10^{-4}$\\
					\bottomrule
				\end{tabular}
			\end{minipage}
			
			\vspace{0.8em}
			\textit{Panel C: inception Delta after the first contribution}\\[0.25em]
			\begin{tabular}{rcccccc}
				\toprule
				&\multicolumn{2}{c}{Binomial}&\multicolumn{2}{c}{Richardson}&\multicolumn{2}{c}{Trinomial}\\
				\cmidrule(lr){2-3}\cmidrule(lr){4-5}\cmidrule(lr){6-7}
				$N_{\rm yr}$&$\Delta_0$&$e_0$&$\Delta_0$&$e_0$&$\Delta_0$&$e_0$\\
				\midrule
				4  &0.808856&$1.188\cdot10^{-3}$&0.807713&$2.332\cdot10^{-3}$&0.811659&$1.615\cdot10^{-3}$\\
				8  &0.809175&$8.688\cdot10^{-4}$&0.809495&$5.493\cdot10^{-4}$&0.810985&$9.412\cdot10^{-4}$\\
				16 &0.809589&$4.552\cdot10^{-4}$&0.810003&$4.166\cdot10^{-5}$&0.810557&$5.132\cdot10^{-4}$\\
				24 &0.809775&$2.693\cdot10^{-4}$&0.810120&$7.615\cdot10^{-5}$&0.810507&$4.630\cdot10^{-4}$\\
				32 &0.809888&$1.566\cdot10^{-4}$&0.810186&$1.420\cdot10^{-4}$&0.810460&$4.162\cdot10^{-4}$\\
				\bottomrule
			\end{tabular}
		\end{singlespace}
		\caption{Delta consistency of the certified cubic no-surrender engine.
			Panel A reports the local value and Delta errors relative to the unpruned
			source curve. Panel B reports
			$E_\Delta=\max_{F\in\mathcal F}|\Delta(F)-\Delta_{\rm MC}(F)|$.
			Panel C reports $e_0=|\Delta_0-\Delta_{0,\rm MC}|$; the 95\% Monte Carlo
			half-width at inception is $0.000333$.}
		\label{tab:delta-full-main}
	\end{table}
	
	Panel A confirms that a value-only certificate preserves the local derivative:
	all displayed Delta errors are below $1.6\times10^{-5}$. In Panel B, the
	direct binomial error decreases steadily and Richardson reduces the maximum
	discrepancy to $1.38\times10^{-4}$ at $N_{\rm yr}=24$. The mild
	non-monotonicity on the finest Richardson and trinomial levels is comparable to
	the Monte Carlo sampling uncertainty and does not include Monte Carlo time
	bias. Panel C gives the same conclusion at the initial hedging point:
	Richardson lies within one Monte Carlo standard error from
	$N_{\rm yr}=16$ onward, while the trinomial discrepancy decreases monotonically
	to $4.16\times10^{-4}$.
	
	\subsection{Correlation sensitivity and local admissibility}
	\label{subsec:correlation-sensitivity-main}
	
	The fully correlated specification is numerically useful only if the local
	coupling remains strict over economically relevant parameter changes.
	Figure~\ref{fig:correlation-sensitivity-main} varies one correlation at a time
	around the baseline
	$(\rho_{SV},\rho_{Sr},\rho_{Vr})=(-0.70,-0.20,0.02)$ under the production
	configuration. The ordinate is the premium change relative to
	$P_{\mathrm{NS}}^{\rm base}=111.644322965$.
	
	\begin{figure}[htbp]
		\centering
		\includegraphics[width=0.94\textwidth]{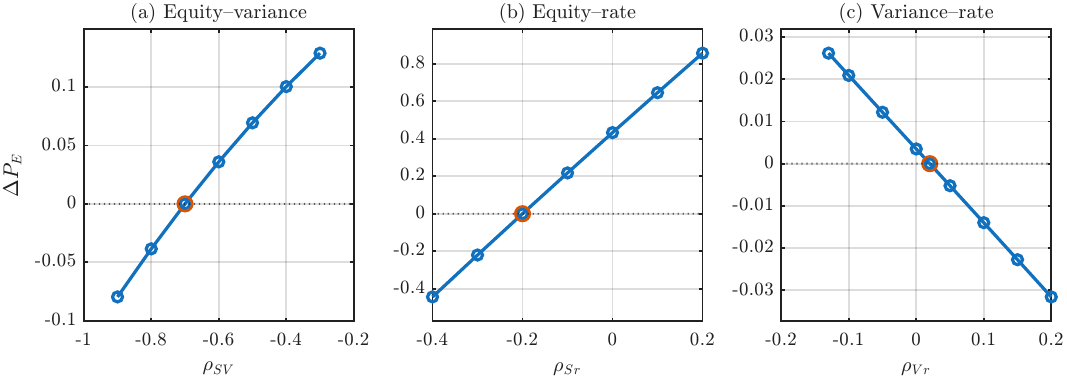}
		\caption{Sensitivity of the no-surrender fair premium to the three pairwise
			correlations. In each panel, the displayed correlation varies while the other
			two remain at their baseline values.}
		\label{fig:correlation-sensitivity-main}
	\end{figure}
	
	The dependence is close to linear over the displayed ranges. Least-squares
	slopes are approximately $0.348$, $2.166$, and $-0.175$ with respect to
	$\rho_{SV}$, $\rho_{Sr}$, and $\rho_{Vr}$, respectively. Thus a change of
	$0.1$ in the equity--rate correlation changes the annual premium by about
	$0.217$, compared with $0.035$ for the equity--variance correlation and
	$-0.017$ for the variance--rate correlation. The variance--rate sweep is
	restricted to $[-0.13,0.20]$, well inside the common strict range in
	Proposition~\ref{prop:trinomial-rho-range-main}; no point in the sweep uses
	local projection. Direct binomial and binomial Richardson sweeps give the same
	qualitative ordering.
	\FloatBarrier
	
	\subsection{Surrender contracts and independent LSMC validation}
	\label{subsec:surrender-results}
	For $J\in\{\mathrm{FB},\mathrm{GB},\mathrm{MX}\}$, define
	$H_J=P_J-P_{\mathrm{NS}}$ as the surrender-option component of the fair
	annual premium.
	Table~\ref{tab:surrender-compact} summarizes the contract-level comparison. The
	surrender-option components are considerably more stable than the premium
	levels: across the five finest trinomial grids they lie in
	$H_{\mathrm{FB}}\in[0.2201,0.2209]$, $H_{\mathrm{GB}}\in[5.7745,5.7792]$, and
	$H_{\mathrm{MX}}\in[6.4243,6.4287]$.
	
	\begin{table}[htbp]
		\centering
		\begin{singlespace}
			\footnotesize
			\renewcommand{\arraystretch}{1.13}
			\textit{Panel A: selected deterministic valuations}\\[0.3em]
			\begin{tabular}{lrrrr}
				\toprule
				Method & $P_{\mathrm{NS}}$ & $P_{\mathrm{FB}}$ & $P_{\mathrm{GB}}$ & $P_{\mathrm{MX}}$\\
				\midrule
				Binomial Richardson, $N_{\rm yr}=32$ & 111.6499 & 111.8719 & 117.4202 & 118.0735\\
				Trinomial direct, $N_{\rm yr}=24$ & 111.6443 & 111.8644 & 117.4207 & 118.0693\\
				Trinomial direct, $N_{\rm yr}=32$ & 111.6538 & 111.8744 & 117.4331 & 118.0826\\
				\midrule
				\multicolumn{5}{l}{\textit{Surrender-option components}}\\
				Binomial Richardson, $N_{\rm yr}=32$ & -- & 0.2220 & 5.7703 & 6.4236\\
				Trinomial direct, $N_{\rm yr}=24$ & -- & 0.2201 & 5.7763 & 6.4250\\
				Trinomial direct, $N_{\rm yr}=32$ & -- & 0.2206 & 5.7792 & 6.4287\\
				\bottomrule
			\end{tabular}
			
			\vspace{0.75em}
			\textit{Panel B: independent LSMC validation of surrender premiums}\\[0.3em]
			\begin{tabular*}{0.88\textwidth}{@{\extracolsep{\fill}}lrcrr@{}}
				\toprule
				Product & LSMC--256 & Held-out 95\% CI & SP production & SP--LSMC\\
				\midrule
				$P_{\mathrm{FB}}$ & 111.7090 & $[111.6279,111.7902]$ & 111.8644 & 0.1554\\
				$P_{\mathrm{GB}}$ & 117.3181 & $[117.1971,117.4390]$ & 117.4207 & 0.1026\\
				$P_{\mathrm{MX}}$ & 117.7707 & $[117.6766,117.8647]$ & 118.0693 & 0.2986\\
				\bottomrule
			\end{tabular*}
		\end{singlespace}
		\caption{Fair annual premiums and surrender-option components. Panel B is an
			independent two-fold cross-fitted validation using the guarantee-kink quadratic
			basis, one million paths, and 256 time steps per policy year. Its intervals
			quantify held-out sampling uncertainty only; regression and time-discretization
			bias are not included.}
		\label{tab:surrender-compact}
	\end{table}
	
	Every deterministic refinement gives $P_{\mathrm{MX}}>P_{\mathrm{GB}}>P_{\mathrm{FB}}>P_{\mathrm{NS}}$. The $N_{\rm yr}=32$
	binomial Richardson components closely match the direct trinomial values,
	providing an internal deterministic cross-check. The cross-fitted LSMC premiums
	are lower by $0.1554$, $0.1026$, and $0.2986$ for $P_{\mathrm{FB}},P_{\mathrm{GB}},P_{\mathrm{MX}}$, respectively;
	the largest relative discrepancy is $0.253\%$. This is consistent with a fitted,
	generally suboptimal stopping policy. In the associated simulation control, the
	no-surrender estimate is $111.6789$ with held-out interval
	$[111.5842,111.7736]$, containing both the production value and the
	high-precision Monte Carlo benchmark.
	
	Figure~\ref{fig:surrender-regions} shows policy-year-10 sections of the obstacle
	problem. Fund-based surrender occurs only in the high-fund tail,
	guarantee-based surrender in the low-fund region, and the mixed payoff on both
	sides of a central continuation band. Higher variance generally expands
	continuation, while higher rates favor exercise.
	
	\begin{figure}[htbp]
		\centering
		\includegraphics[width=\textwidth]{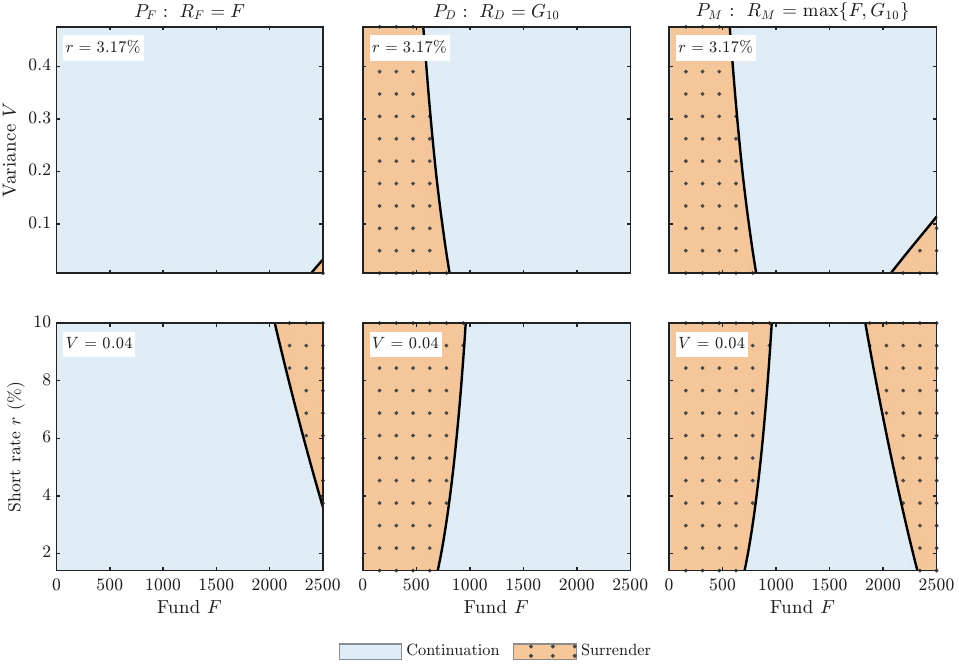}
		\caption{Continuation and surrender regions at policy year 10 under the
			production trinomial scheme. Pale blue denotes continuation, dotted orange
			immediate surrender, and black curves the interpolated boundaries.}
		\label{fig:surrender-regions}
	\end{figure}
	\FloatBarrier
	
	\subsection{Issue-age and mortality robustness}
	\label{subsec:mortality-robustness-main}
	
	The production configuration is also stable across materially different
	mortality levels. Table~\ref{tab:mortality-main} compares issue ages 40, 50,
	and 60 with a no-mortality benchmark while holding the financial model and
	numerical discretization fixed.
	
	\begin{table}[htbp]
		\centering
		\begin{singlespace}
			\small
			\renewcommand{\arraystretch}{1.12}
			\textit{Panel A: fair annual premiums}\\[0.25em]
			\begin{tabular}{lrrrrr}
				\toprule
				Scenario & 20-year survival & $P_{\mathrm{NS}}$ & $P_{\mathrm{FB}}$ & $P_{\mathrm{GB}}$ & $P_{\mathrm{MX}}$\\
				\midrule
				Issue age 40 & 96.0843\% &111.6528&111.8774&117.5034&118.1701\\
				Issue age 50 & 90.4215\% &111.6443&111.8644&117.4207&118.0693\\
				Issue age 60 & 75.6815\% &111.6206&111.8287&117.1937&117.7952\\
				No mortality &100.0000\% &111.6588&111.8864&117.5613&118.2410\\
				\bottomrule
			\end{tabular}
			
			\vspace{0.65em}
			\textit{Panel B: surrender-option components}\\[0.25em]
			\begin{tabular}{lrrr}
				\toprule
				Scenario & $H_{\mathrm{FB}}$ & $H_{\mathrm{GB}}$ & $H_{\mathrm{MX}}$\\
				\midrule
				Issue age 40 &0.2246&5.8506&6.5173\\
				Issue age 50 &0.2201&5.7763&6.4250\\
				Issue age 60 &0.2081&5.5730&6.1745\\
				No mortality &0.2276&5.9025&6.5822\\
				\bottomrule
			\end{tabular}
		\end{singlespace}
		\caption{Issue-age and mortality sensitivity under the production
			adaptive-trinomial discretization. The no-mortality row disables both mortality
			termination and the death benefit.}
		\label{tab:mortality-main}
	\end{table}
	
	The base no-surrender premium is only mildly affected, whereas mortality has a
	more visible effect on the guarantee-based surrender components. From the
	no-mortality case to issue age 60, $H_{\mathrm{GB}}$ and $H_{\mathrm{MX}}$ decrease by about 5.6\% and
	6.2\%, respectively, because death shortens the horizon over which the
	surrender guarantees can be exercised. The monotone age pattern confirms that
	the central results are not specific to the issue-age-50 calibration.
	\FloatBarrier
	
	\subsection{Risk-neutral timing of optimal surrender}
	\label{subsec:surrender-timing-main}
	
	The adaptive representation also gives direct access to the deterministic
	exercise policy. To summarize its dynamic implications, we simulate the
	production policies under the risk-neutral measure with two million paths and
	annual exercise only. Figure~\ref{fig:surrender-timing-main} reports the
	unconditional probability of first surrender in each policy year; the paths
	that do not surrender terminate through death or maturity.
	
	\begin{table}[htbp]
		\centering
		\begin{singlespace}
			\small
			\renewcommand{\arraystretch}{1.12}
			\begin{tabular}{lrrrr}
				\toprule
				Contract & Surrender & Death & Maturity & Mean surrender year\\
				\midrule
				$P_{\mathrm{FB}}$ & 29.9308\% & 8.5539\% & 61.5153\% & 15.825\\
				$P_{\mathrm{GB}}$ & 66.7444\% & 4.7233\% & 28.5322\% & 7.871\\
				$P_{\mathrm{MX}}$ & 89.7464\% & 3.5006\% & 6.7530\% & 9.001\\
				\bottomrule
			\end{tabular}
		\end{singlespace}
		\caption{Risk-neutral terminal outcomes under the fixed production surrender
			policies. The largest annual 95\% Monte Carlo half-width is 0.0392 percentage
			points.}
		\label{tab:surrender-timing-summary-main}
	\end{table}
	
	\begin{figure}[htbp]
		\centering
		\includegraphics[width=0.92\textwidth]{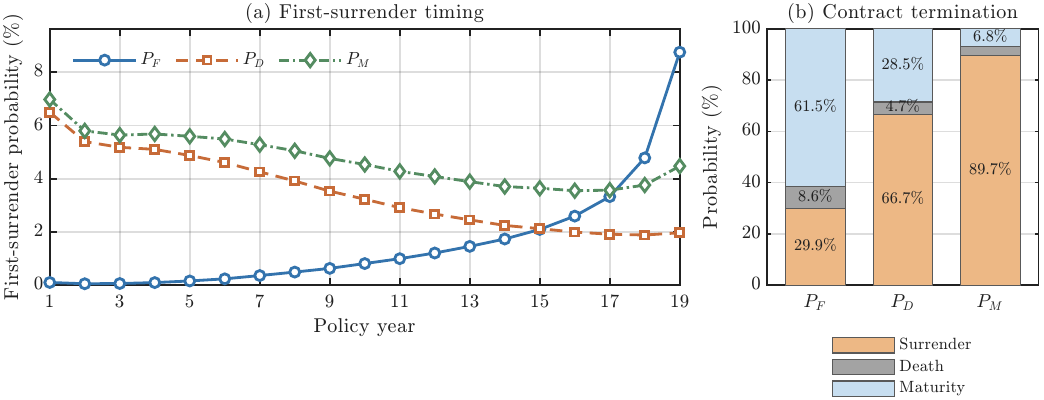}
		\caption{Unconditional risk-neutral distribution of the first surrender year
			under the production policies. The remaining probability mass terminates
			through death or maturity as reported in
			Table~\ref{tab:surrender-timing-summary-main}.}
		\label{fig:surrender-timing-main}
	\end{figure}
	
	Fund-based surrender is comparatively infrequent and late: more than 61\% of
	$P_{\mathrm{FB}}$ paths reach maturity, and conditional surrender occurs on average near
	policy year 16. Guarantee-based surrender is much earlier and more frequent.
	The mixed payoff is exercised on almost 90\% of paths, but its mean surrender
	year is later than for $P_{\mathrm{GB}}$ because the fund branch creates an additional
	continuation region at intermediate fund levels. No simulated observation
	falls outside the stored policy domain. These results are descriptive
	risk-neutral policy diagnostics, not behavioral lapse predictions.
	\FloatBarrier
	
	\subsection{Model relevance and factor hedging}
	\label{subsec:model-relevance}
	In this subsection, $L$ is the liability value, $C$ the one-year call value,
	and $B$ the zero-coupon bond value. Subscripts denote partial derivatives with
	respect to the indicated state variable, and $n_C,n_B,n_S$ are the call, bond,
	and stock holdings. For strategy $h$, $\Pi_h(t)$ is the hedge-portfolio value
	and $\mathrm{HE}_h(t)=\Pi_h(t)-L(t)$ is its mark-to-model hedging error.
	Table~\ref{tab:model-decomposition} compares four nested models while holding the
	contract, mortality, initial discount curve, and numerical engine fixed.
	Deterministic-rate models follow the same market forward curve, so the comparison
	is not contaminated by a different initial term structure.
	
	\begin{table}[htbp]
		\centering
		\begin{singlespace}
			\small
			\renewcommand{\arraystretch}{1.13}
			\begin{tabular}{lrrrr}
				\toprule
				Model & $P_{\mathrm{NS}}$ & $P_{\mathrm{FB}}$ & $P_{\mathrm{GB}}$ & $P_{\mathrm{MX}}$\\
				\midrule
				Black--Scholes, deterministic rates & 112.1970 & 112.6570 & 116.8521 & 117.8997\\
				Heston, deterministic rates & 111.8767 & 112.1013 & 117.3195 & 117.9572\\
				Black--Scholes--CIR++ & 111.9231 & 112.3773 & 117.0078 & 118.0790\\
				Heston--CIR++ & 111.6443 & 111.8644 & 117.4207 & 118.0693\\
				\bottomrule
			\end{tabular}
		\end{singlespace}
		\caption{Model-component comparison under the production numerical
			configuration and a common initial discount curve.}
		\label{tab:model-decomposition}
	\end{table}
	
	Relative to Black--Scholes with deterministic rates, the full model changes
	$(P_{\mathrm{NS}},P_{\mathrm{FB}},P_{\mathrm{GB}},P_{\mathrm{MX}})$ by approximately $(-0.553,-0.793,+0.569,+0.170)$. Hence
	model enrichment is payoff dependent rather than a uniform price correction.
	For the mixed contract, the volatility and rate contributions partly offset.
	
	To assess the risk-management relevance, we form a one-year static hedge for the
	no-surrender liability at policy year 10. Physical-measure paths are simulated
	under a stated drift scenario, while all liability values and hedge ratios are
	computed under $\mathbb Q$. We compare a Black--Scholes Delta hedge, a
	Heston--CIR++ Delta hedge, and a stock--call--bond hedge matching first-order
	fund, variance, and rate exposures. The positions are held fixed to policy year
	11; mortality is treated as diversified.
	
	\begin{table}[htbp]
		\centering
		\begin{singlespace}
			\small
			\begin{tabular}{lrr}
				\toprule
				Strategy & RMSE & Left-tail mean (1\%)\\
				\midrule
				Black--Scholes Delta & 36.549 & -92.222\\
				Heston--CIR++ Delta & 35.825 & -95.107\\
				Delta + variance + rate & 7.019 & -23.687\\
				\bottomrule
			\end{tabular}
		\end{singlespace}
		\caption{One-period mark-to-model hedging errors. The error is hedge value minus
			gross liability; the last column is the conditional mean below the empirical
			1\% quantile.}
		\label{tab:hedging}
	\end{table}
	
	Replacing the Black--Scholes Delta alone reduces RMSE by about 2\%. Matching
	variance and rate exposures reduces RMSE by 80.4\% relative to the
	model-consistent Delta-only hedge and reduces the magnitude of the left-tail
	mean by 75.1\%. This is a static, frictionless factor-exposure diagnostic, not a
	complete asset--liability-management study.
	
	\section{Conclusions}\label{sec:conclusions}
	
	We proposed an adaptive dynamic-programming method for long-dated
	annual-premium contracts with stochastic volatility, stochastic interest rates,
	mortality, and Bermudan surrender. The central state-space reduction is to
	discretize only the recombining financial factors and retain the accumulated
	fund as an adaptive value function. Contributions become translations, and the
	exercise decision is applied directly to the same function. This avoids a
	global fund grid while preserving a fully deterministic backward recursion.
	
	The numerical construction combines locally covariance-consistent square-root
	lattices, residual equity quadrature, discrete curve fitting, and a martingale
	correction. Piecewise-cubic singular-point propagation retains payoff and
	exercise kinks, exact affine exercise compression removes redundant regions,
	and continuous pruning certificates control the represented value between
	knots. Both strict financial chains converge weakly. The strict binomial chain is
	weakly first-order accurate and admits the asymptotic expansion required for
	Richardson under the stated smoothness assumptions; the production adaptive
	trinomial is retained as a direct convergent approximation without a claimed
	global rate. The adaptive valuation converges when representation and boundary
	errors vanish, and the cubic certificate yields a regular-region Delta
	consistency result under an explicit value-error-to-cell-length condition.
	
	The experiments support the method from complementary directions. The
	adaptive-trinomial scheme gives the most accurate direct no-surrender value on
	the finest grid, with fewer than 200 retained fund knots at every displayed
	node. Binomial Richardson provides a useful independent deterministic
	cross-check. Pathwise Monte Carlo supports the reported Deltas, while
	cross-fitted LSMC validates the ordering and level of the surrender premiums.
	The model-decomposition and hedge experiments show why small premium changes
	should not be interpreted as negligible model risk: the sign of the price effect
	depends on the contractual payoff, and unhedged variance and rate exposures can
	dominate a Delta-only hedge.
	
	The method is designed for a finite set of contractual and exercise dates. Two
	extensions are particularly natural: reusing adaptive representations across
	neighboring premium evaluations, and extending the error expansion to a broader
	class of non-smooth stopping problems. The same functional-state principle may
	also be useful in other stochastic dynamic programs in which deterministic cash
	additions prevent one state variable from recombining.
	
	\section*{Declarations}
	\begingroup
	\small
	\noindent\textbf{Funding.}
	This research received no specific grant from funding agencies in the public,
	commercial, or not-for-profit sectors.
	
	\smallskip
	\noindent\textbf{Competing interests.}
	The authors declare that they have no competing interests.
	
	\smallskip
	\noindent\textbf{Data availability.}
	The numerical inputs and aggregate results supporting the findings are reported
	in the article and the Supplementary Material.
	
	\smallskip
	\noindent\textbf{AI-assisted editing.}
	During the preparation of this work, the authors used ChatGPT 5.6 Sol (OpenAI) solely
	to assist with language, clarity, and textual organization. The authors
	reviewed and edited the resulting text and take full responsibility for the
	content of the publication.
	\endgroup
	
	\begin{singlespace}
		
	\end{singlespace}
	
	\appendix
	\section{Proofs of technical results}
	\label{app:proofs}
	
	This appendix collects the technical proofs deferred from the construction and
	convergence sections. The shorter arguments that are useful for following the
	main logical steps remain in the body of the paper.
	
	\subsection{Adaptive-trinomial factor construction}
	
	\begin{proof}[Proof of Lemma~\ref{lem:trinomial-interior-probabilities-main}]
		The exact CIR mean and variance satisfy, uniformly on $K_Z$,
		\begin{align*}
			m_h(z)&=z+\kappa(\theta-z)h+O(h^2),\\
			s_h^2(z)&=\sigma^2zh+O(h^2),
		\end{align*}
		while adjacent grid values obey
		$Z_{q\pm1}-Z_q=\pm\lambda\sigma\sqrt{zh}
		+\lambda^2\sigma^2h/4$.
		Write the three probabilities as the solution of the linear system matching
		unit mass, $m_h(z)$, and $s_h^2(z)+m_h(z)^2$. Substitution and expansion of the
		explicit solution give \eqref{eq:tri-prob-lambda-gt-one}--
		\eqref{eq:tri-lambda-one-side}. For $\lambda>1$, every leading probability is
		strictly positive. For $\lambda=1$, the leading central mass vanishes and its
		first non-zero coefficient is $A(z)$; if it is negative, the symmetric stencil
		is inadmissible. On the standardized limiting supports $(-2,0,1)$ and
		$(-1,0,2)$, matching zero mean and unit variance gives respectively
		$(1/6,1/2,1/3)$ and its reversal. Strict positivity of these limits makes the
		corresponding exact CIR stencils admissible for sufficiently small $h$.
		
	\end{proof}
	
	\begin{proof}[Proof of Proposition~\ref{prop:trinomial-rho-range-main}]
		For the rate factor, Lemma~\ref{lem:trinomial-interior-probabilities-main}
		implies that the standardized innovations converge uniformly to
		$(-1.4,0,1.4)$. For the variance factor, they converge either to
		$(-1,0,1)$ or, up to reflection, to $(-2,0,1)$. Hence, for every
		$\epsilon>0$ and all sufficiently small $h$,
		\begin{equation}
			\max_{i,j}|z_i^Vz_j^X|\le(2+\epsilon)(1.4+\epsilon)
		\end{equation}
		uniformly on $K$. If $|\rho_{Vr}|<5/14$, choose $\epsilon$ so that
		$|\rho_{Vr}|(2+\epsilon)(1.4+\epsilon)<1$. Then
		$1+\rho_{Vr}z_i^Vz_j^X>0$ for every active pair, which is precisely
		\eqref{eq:lancaster-nonnegative-main}.
		
	\end{proof}
	
	\subsection{Projected binomial chain}
	
	\begin{proof}[Proof of Proposition~\ref{prop:uniform-interior-admissibility}]
		The functions $\beta_V$ and $\beta_X$ in
		\eqref{eq:binomial-prob-expansion} are bounded on $K$. Expanding
		\eqref{eq:binomial-correlation-endpoints} around
		$p_V=p_X=1/2$ therefore gives, uniformly on $K$,
		\begin{align*}
			\rho_-(v,x;h)
			&=-1+2|\beta_V(v)+\beta_X(x)|\sqrt h+O(h),\\
			\rho_+(v,x;h)
			&=1-2|\beta_V(v)-\beta_X(x)|\sqrt h+O(h).
		\end{align*}
		Every fixed interior correlation lies between these endpoints for sufficiently
		small $h$.
		
	\end{proof}
	
	\begin{proof}[Proof of Proposition~\ref{prop:vanishing-projection}]
		Fix $\eta>0$. Under the strict Feller conditions, both continuous CIR factors
		remain positive. Compact containment of the diffusions and tightness of the
		marginal lattices allow a rectangle
		$K=[\underline v,\overline v]\times[\underline x,\overline x]$ that is a compact subset of
		$(0,\infty)\times(0,\infty)$ such that the probability that either the diffusion or the
		marginal chains leave $K$ before $T$ is below $\eta$ for all sufficiently
		small $h$. By Proposition~\ref{prop:uniform-interior-admissibility},
		$A_h\cap K=\varnothing$ for sufficiently small $h$. A projected node can then
		be reached only after the joint chain leaves $K$, so
		$\limsup_{h\downarrow0}\mathbb P(\tau_h\le T)\le\eta$. Letting
		$\eta\downarrow0$ proves the claim.
		
	\end{proof}
	
	\begin{proof}[Proof of Theorem~\ref{thm:projected-chain}]
		On each interior compact set, projected and strict local generators coincide
		for all sufficiently small $h$ except for the mixed $V$--$X$ covariance on
		$A_h$. For a smooth compactly supported test function, the integrated
		generator discrepancy is bounded by a constant times $\mathcal M_h$, which
		vanishes. Compact containment controls the complement of the localization
		set. The martingale-problem characterization identifies every subsequential
		limit with the Heston--CIR++ diffusion; uniqueness gives convergence of the
		whole sequence. Uniform integrability permits passage to discounted
		expectations. A finite collection of contractual and exercise dates is then
		handled by backward induction through the stable event maps.
		
	\end{proof}
	
	\subsection{Weak-error rates}
	
	\begin{proof}[Proof of Lemma~\ref{lem:binomial-scaled-moments}]
		On an interior compact set, the two binomial CIR successors are
		$O(\!\sqrt h)$ from the parent and the exact-mean interpolation probability
		has the expansion $p_{Z,h}=1/2+O(\!\sqrt h)$. The resulting conditional
		variance equals the CIR variance to leading order, with an $O(h^2)$ remainder.
		The centered variance-one innovation has third moment $O(\!\sqrt h)$ and
		uniformly bounded higher moments. The $2\times2$ factor coupling preserves
		this order for mixed third moments, and the residual stock quadrature is
		symmetric. The equity decomposition and the local martingale correction give
		the stated first two moments for $Y$ and its mixed covariances. Because each
		state increment is a drift term of order $h$ plus $\sqrt h$ times a
		standardized innovation, the raw third moments are $O(h^2)$, the fourth moments
		are $O(h^2)$, and the fifth absolute moment is $O(h^{5/2})$.
		
	\end{proof}
	
	\begin{proof}[Proof of Theorem~\ref{thm:binomial-weak-order}]
		For a smooth test function, a fourth-order multivariate Taylor expansion and
		Lemma~\ref{lem:binomial-scaled-moments} give
		\eqref{eq:binomial-local-expansion}. The exact discounted semigroup has the
		same expansion, so the one-step defect
		is $O(h^2)$. Insert the defect into the telescoping identity for the numerical
		and exact operator products. Stability transports each local error with a
		uniform factor, and summing $T/h$ defects yields $O(h)$. A fixed finite number
		of regular event interfaces changes only the stability constant.
		
	\end{proof}
	
	\subsection{Delta consistency}
	
	\begin{proof}[Proof of Theorem~\ref{thm:unpruned-binomial-delta}]
		On a regular tube bounded away from $F=0$, the coordinate change
		$Y=\log F$ is smooth, while the fund branch map itself is affine and has the
		branch multiplier as derivative. Polynomial moment bounds therefore make the
		exact and numerical one-step operators stable in the $C^1$ norm. Differentiating the
		fourth-order Taylor expansion used in
		Theorem~\ref{thm:binomial-weak-order} gives a one-step $C^1$ defect of order
		$h^2$. Apply the derivative to the telescoping identity for the numerical and
		exact operator products. Each differentiated defect is transported with a
		uniform $C^1$ stability factor, and summing $T/h$ defects yields $C_Kh$.
		Because the active branch of each contractual map is fixed on the tube, the
		finite set of event interfaces preserves differentiability and changes only the
		constant.
		
	\end{proof}
	
	\subsection{Richardson extrapolation}
	
	\begin{proof}[Proof of Theorem~\ref{thm:richardson-expansion}]
		First consider an interval with no contractual event and write
		$Q_n=Q_{t_n,h}$ and $P_n=P_{t_n,t_{n+1}}$. The telescoping identity is
		\begin{equation}
			Q_0\cdots Q_{N-1}-P_0\cdots P_{N-1}
			=\sum_{n=0}^{N-1}
			Q_0\cdots Q_{n-1}(Q_n-P_n)P_{n+1}\cdots P_{N-1}.
			\label{eq:talay-tubaro-telescope}
		\end{equation}
		Apply it to the terminal value and insert
		Lemma~\ref{lem:richardson-defect}. Stability and the polynomial moment bounds
		allow the local remainder to be summed. The $h^2\Psi_{t_n}$ terms produce
		\begin{equation}
			h^2\sum_{n=0}^{N-1}
			\mathbb E\!\left[
			\Psi_{t_n}u(t_{n+1},Y_{t_{n+1}},V_{t_{n+1}},X_{t_{n+1}})
			\right]
			=hc_1+O(h^2),
		\end{equation}
		because the bracketed quantity is sufficiently regular in time and the sum is
		a Riemann sum. The $h^3\Xi_{t_n}$ terms and the first-order replacement of
		numerical transport by exact semigroups contribute $c_2h^2+o(h^2)$; the
		accumulated $O(h^4)$ local remainder is $O(h^3)$. A contractual date is a
		fixed interface between two such semigroup products. Assumption (R5) preserves
		the regularity needed to repeat the argument over the finitely many annual
		intervals and changes only the coefficients $c_1$ and $c_2$.
		
	\end{proof}
	
\end{document}